\documentclass[a4paper,USenglish,cleveref, autoref]{lipics-v2019}

\usepackage{csquotes}

\usepackage{tabularx}
\usepackage{booktabs}

\usepackage{amsmath, amssymb, amsthm, mathtools, thmtools}
\usepackage{nicefrac}

\usepackage{xcolor}

\definecolor{mred}{RGB}{215,25,28}
\definecolor{mdarkblue}{RGB}{44,123,182}
\definecolor{mlightblue}{RGB}{171,217,233}
\definecolor{morange}{RGB}{253,174,97}
\definecolor{color1}{RGB}{215,25,28}
\definecolor{color2}{RGB}{44,123,182}
\definecolor{color3}{RGB}{253,174,97}

\usepackage{tikz}
\usetikzlibrary{decorations.pathreplacing}
\usetikzlibrary{math}
\usetikzlibrary{shapes,snakes}
\usetikzlibrary{patterns}

\usepackage{xspace}



\newcommand{\problem}[1]{\textnormal{#1}\xspace}

\newcommand{\rar}[1]{\textnormal{RAR}$(#1)$}
\newcommand{\lrs}[1]{\textnormal{LRS}$(#1)$}

\newcommand{\rai}{\problem{RAI}}
\newcommand{\threesat}{\problem{$3$-SAT}}
\newcommand{\tailoredsat}{\problem{$3$-SAT$^*$}}
\newcommand{\oneinthreesat}{\problem{$1$-in-$3$-SAT}}
\newcommand{\threedm}{\problem{$3$-DM}}
\newcommand{\tailoredmatching}{\problem{$3$-DM$^*$}}

\newcommand{\jobs}{\mathcal{J}}
\newcommand{\machs}{\mathcal{M}}
\newcommand{\emachs}{\machs}
\newcommand{\ress}{\mathcal{R}}

\newcommand{\makespan}{C_{\max}}

\newcommand{\true}{\top}
\newcommand{\false}{\bot}

\newcommand{\clausemach}{\mathtt{CMach}}
\newcommand{\truthmach}{\mathtt{TMach}}

\newcommand{\clausejob}{\mathtt{CJob}}
\newcommand{\truthjob}{\mathtt{TJob}}
\newcommand{\variablejob}{\mathtt{VJob}}

\newcommand{\gatemach}{\mathtt{GMach}}
\newcommand{\bridgein}{\mathtt{BMachIn}}
\newcommand{\bridgeout}{\mathtt{BMachOut}}
\newcommand{\bridgejob}{\mathtt{BJob}}
\newcommand{\highwayjob}{\mathtt{HJob}}

\newcommand{\tblock}{\mathcal{T}}
\newcommand{\pblock}{\mathcal{P}}
\newcommand{\sblock}{\mathcal{S}}
\newcommand{\cblock}{\mathcal{C}}

\newcommand{\eps}{\varepsilon}


\newcommand{\NN}{\mathbb{N}}


\DeclarePairedDelimiter\ceil{\lceil}{\rceil}

\DeclarePairedDelimiter\set{\lbrace}{\rbrace}
\DeclarePairedDelimiterX\sett[2]{\lbrace}{\rbrace}{ #1 \,\delimsize| \,\mathopen{} #2 }



\bibliographystyle{plainurl}

\title{Inapproximability Results for Scheduling with Interval and Resource Restrictions} 


\author{Marten Maack}{Department of Computer Science, Kiel University,  Kiel, Germany}{mmaa@informatik.uni-kiel.de}{}{}

\author{Klaus Jansen}{Department of Computer Science, Kiel University,  Kiel, Germany}{kj@informatik.uni-kiel.de}{}{}

\authorrunning{Marten Maack and Klaus Jansen}

\ccsdesc[100]{Theory of computation~Scheduling algorithms}
\ccsdesc[100]{Theory of computation~Problems, reductions and completeness}

\keywords{Scheduling, Restricted Assignment, Approximation, Inapproximability, PTAS}

\category{}

\relatedversion{}

\supplement{}

\funding{This work was partially supported by the German Academic Exchange Service (DAAD) and by the German Research Foundation (DFG) project JA 612/15-2.}

\acknowledgements{We thank Malin Rau and Lars Rohwedder for helpfull discussions on the problem.}

\nolinenumbers 

\hideLIPIcs  

\EventEditors{John Q. Open and Joan R. Access}
\EventNoEds{2}
\EventLongTitle{42nd Conference on Very Important Topics (CVIT 2016)}
\EventShortTitle{CVIT 2016}
\EventAcronym{CVIT}
\EventYear{2016}
\EventDate{December 24--27, 2016}
\EventLocation{Little Whinging, United Kingdom}
\EventLogo{}
\SeriesVolume{42}
\ArticleNo{23}

\begin{document}

\maketitle

\begin{abstract}
In the restricted assignment problem, the input consists of a set of machines and a set of jobs each with a processing time and a subset of eligible machines.
The goal is to find an assignment of the jobs to the machines minimizing the makespan, that is, the maximum summed up processing time any machine receives.
Herein, jobs should only be assigned to those machines on which they are eligible.
It is well-known that there is no polynomial time approximation algorithm with an approximation guarantee of less than 1.5 for the restricted assignment problem unless P=NP.
In this work, we show hardness results for variants of the restricted assignment problem with particular types of restrictions.

For the case of interval restrictions---where the machines can be totally ordered such that jobs are eligible on consecutive machines---we show that there is no polynomial time approximation scheme (PTAS) unless P=NP.
The question of whether a PTAS for this variant exists was stated as an open problem before, and PTAS results for special cases of this variant are known.

Furthermore, we consider a variant with resource restriction where the sets of eligible machines are of the following form:
There is a fixed number of (renewable) resources, each machine has a capacity, and each job a demand for each resource. 
A job is eligible on a machine if its demand is at most as big as the capacity of the machine for each resource.
For one resource, this problem has been intensively studied under several different names and is known to admit a PTAS, and for two resources the variant with interval restrictions is contained as a special case. 
Moreover, the version with multiple resources is closely related to makespan minimization on parallel machines with a low rank processing time matrix.
We show that there is no polynomial time approximation algorithm with a rate smaller than $48/47 \approx 1.02$ or $1.5$ for scheduling with resource restrictions with $2$ or $4$ resources, respectively, unless P$=$NP.
All our results can be extended to the so called Santa Claus variants of the problems where the goal is to maximize the minimal processing time any machine receives.
\end{abstract}

\section{Introduction}
\label{sec:introduction}

Consider the restricted assignment problem:
Given a set of machines $\machs$ and a set of jobs $\jobs$ each with a processing time or size $p_{j}$ and a subset of eligible machines $\emachs(j)\subseteq\machs$, the goal is to find a schedule $\sigma:\jobs\rightarrow\machs$ with $\sigma(j)\in\emachs(j)$ for each job $j$ and minimizing the makespan $\makespan(\sigma) = \max_{i\in\machs}\sum_{j\in\sigma^{-1}(i)}p_{j}$.

In a seminal work, Lenstra, Shmoys and Tardos \cite{LenstraST90} presented a $2$-approximation for restricted assignment and also showed that there is no polynomial time approximation algorithm with rate smaller than $1.5$ for the problem, unless P=NP.
Closing this gap is a prominent open problem in approximation and scheduling theory \cite{SchuurmanW99,WS11book}.
If there are no restrictions, i.e., $\emachs(j)=\machs$ for each job $j$, we have the classical problem of makespan minimization on identical parallel machines (machine scheduling) which is already strongly NP-hard.
On the other hand, machine scheduling is well-known to admit a polynomial time approximation scheme (PTAS) due to a classical result by Hochbaum and Shmoys~\cite{HochbaumS87identische}.
In recent years, the approximability of special cases of restricted assignment has been intensively studied (see, e.g., \cite{ChakrabartyS16,EbenlendrKS14,HuangO16,JansenR18GrapbBalancing}) with one line of research focusing on the existence of approximation schemes (see, e.g., \cite{EpsteinL11,JansenMS17,MuratoreSW10,OLL08}).
The present work seeks to contribute in this research direction.

\subparagraph{Interval Restrictions.}

Arguably one of the most natural variants of the restricted assignment problem is the case of scheduling with interval restrictions (\rai).
In this variant, the machines are totally ordered and each job is eligible on consecutive machines.
More precisely, we have $\machs = \set{M_1,\dots,M_m}$, and for each job $j$ we have $\emachs(j) = \set{M_\ell,\dots,M_r}$ for some indices $\ell,r\in[m]$.
Several special cases of \rai are known to admit a PTAS: the hierarchical case~\cite{OLL08}, where for each job the interval of eligible machines starts with the first machine; the nested case \cite{MuratoreSW10,EpsteinL11}, where $\emachs(j)\subseteq\emachs(j')$, $\emachs(j')\subseteq\emachs(j)$ or $\emachs(j)\cap\emachs(j')=\emptyset$ for each pair of jobs $(j,j')$; and the inclusion-free case \cite{Schwarz10Thesis,KhodamoradiKRS16}, where $\emachs(j)\subseteq\emachs(j')$ implies that $j$ and $j'$ share either their first or last eligible machine.
Furthermore, for general \rai, a $2-2/(\max_{j\in\jobs}p_j)$-approximation due to Schwarz \cite{Schwarz10Thesis} is known (assuming integral processing times); and the special case with two distinct processing times is even polynomial time solvable \cite{WangS16}.
Note that the problem has also been studied in the context of online algorithms (see \cite{LeeLP13survey,LeungL16update}).

The question of whether there is a PTAS for \rai has been posed by several authors \cite{Khodamoradi16Thesis,Schwarz10Thesis,WangS16}.
As the main result of the present work, we resolve this question in the negative:
\begin{theorem}\label{thm:result_rai}
There is no PTAS for scheduling with interval restrictions unless P$=$NP.
\footnote{There is a paper \cite{KhodamoradiKRS13} claiming to have found a PTAS for \rai. 
However, according to \cite{PTASflawed}, the result is not correct and the authors published a revised version of the paper \cite{KhodamoradiKRS16} claiming a less general result, namely, a PTAS for the inclusion-free case.
}
\end{theorem}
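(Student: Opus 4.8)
The plan is to reduce from a suitably tailored variant of \threesat{} (the paper's notation suggests \tailoredsat{} $=\SATX$, presumably a bounded-occurrence or otherwise structured version of \threesat{} that remains NP-hard). Starting from a formula $\varphi$ with variables $x_1,\dots,x_n$ and clauses $C_1,\dots,C_m$, I would build an instance of \rai{} together with a target makespan $T$ and a constant $\eps_0>0$ such that a satisfying assignment yields a schedule of makespan $T$, while any schedule of makespan at most $(1+\eps_0)T$ can be decoded into a satisfying assignment. Since \rai{} is a special case of the restricted assignment problem, the only freedom in the construction is the linear order on the machines and the interval structure of the eligibility sets; encoding Boolean logic under this one-dimensional geometric constraint is exactly where the difficulty lies.

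The key building blocks I would design are: (i) \emph{variable gadgets}, each occupying a contiguous block of machines, with a set of ``truth'' jobs whose only two low-makespan placements within the block correspond to setting the variable true or false; (ii) \emph{clause gadgets}, each with a dedicated machine (a $\clausemach$) that must receive at least one ``slack'' unit from one of its three literals, which is only possible if the corresponding variable gadget was set to satisfy the clause; and (iii) \emph{wiring / highway jobs} that transmit the truth value of a variable rightward (or leftward) across the line to reach all clauses containing that literal. The wiring is the crux: because eligibility sets must be intervals, a single long job spanning from a variable block to a clause block forces a whole corridor of machines to carry extra load unless it is ``parked'' correctly, and chaining such jobs through $\bridgejob$/$\highwayjob$ constructs (with $\bridgein$/$\bridgeout$ machines delimiting each hop) lets information propagate while keeping the makespan tightly controlled. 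One must be careful that crossing corridors for different variables do not interfere, which I would handle by stacking them at different ``height levels'' realized through carefully chosen job sizes, or by interleaving the variable blocks and clause blocks in a fixed pattern so that each corridor is short.

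The completeness direction (satisfying assignment $\Rightarrow$ makespan $T$) should be a routine check once the gadgets are in place: place each variable gadget's truth jobs according to the assignment, route each highway job along its intended corridor, and verify every machine's load equals $T$. The soundness direction is the substantive part: assuming a schedule of makespan $\le (1+\eps_0)T$, I would argue machine by machine that the load bound leaves essentially no slack, so each truth job sits in one of its two canonical positions (defining a Boolean assignment), each highway job is parked consistently with that value all along its corridor, and each $\clausemach$ receives its required slack unit only from a literal made true — hence $\varphi$ is satisfied. Getting a \emph{constant} gap $\eps_0$ (independent of $n,m$) requires that all the ``forcing'' inequalities have slack bounded below by a constant fraction of $T$; this is what rules out a PTAS, and achieving it while keeping corridors from accumulating error over arbitrarily long chains is the main obstacle — I expect to control it by making each corridor of bounded length (via the interleaved layout) and by choosing the job sizes so that a single misplaced job already overloads some machine by a constant factor.
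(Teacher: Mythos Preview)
Your high-level architecture matches the paper's: reduce from a tailored SAT variant, lay out truth-assignment gadgets and clause gadgets along the line, and use bridge/highway jobs to carry truth values between them. But two concrete technical ideas are missing, and without them the proposal does not go through.

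First, the soundness argument in the paper is \emph{exact}, not approximate: all job sizes are integers, the total load equals $|\machs|\cdot T$ exactly, and one shows that makespan equal to $T$ forces a satisfying assignment; the gap between $T$ and $T+1$ (with $T$ a fixed constant) is what rules out a PTAS. There is no ``decoding from makespan $\le(1+\eps_0)T$'' step and no accumulation of slack along long corridors to control. What makes this work is a digit-encoding trick: every job size and every private load is a ten-digit integer in which each digit position is reserved for one job \emph{type} (bridge, highway, clause, truth, one of four variable-occurrence slots, and a final truth-value digit). Because at most four jobs land on any machine and every digit is at most $2$, sums never carry, so each digit of the target $T$ can be analyzed independently. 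This immediately forces, for instance, that every bridge machine receives exactly one bridge job and exactly one highway job, with matching truth configuration---no slack reasoning is needed at all. Your phrase ``stacking at different height levels via carefully chosen job sizes'' gestures in this direction but does not capture the mechanism.

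Second, corridors in the paper are \emph{not} of bounded length and the layout is \emph{not} interleaved: all truth-assignment blocks sit to the left of all clause blocks, and the signal from variable $x_j$ crosses every later variable's block before reaching its clauses. Your worry about crossing corridors is legitimate, but the fix is not short hops; it is a specific ordering of the bridge machines. Each truth block $\tblock_{j'}$ is flanked by a predecessor block $\pblock_{j'}$ and a successor block $\sblock_{j'}$ containing the bridge (and gateway) machines for all signals that must cross $\tblock_{j'}$. Within $\pblock_{j'}$ these machines are sorted by the \emph{increasing} lexicographic order of their target clause position $\kappa(j,t)$, and within $\sblock_{j'}$ by the \emph{decreasing} order. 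This makes the eligibility intervals of the bridge jobs crossing $\tblock_{j'}$ perfectly nested, so a short induction shows each bridge job can only be placed on the first or last machine of its own interval; the same nesting handles the highway jobs on the other side. Without this ordering trick there is nothing in the interval structure preventing a bridge job for one variable from landing on a bridge machine meant for another, and the argument collapses.
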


\subparagraph{Resource Restrictions.}

The second variant considered in this work, is the problem of scheduling with resource restrictions with $R$ resources (\rar{R}).
Herein, a set $\ress$ or $R$ (renewable) resources is given, each machine $i$ has a resource capacity $c_{r}(i)$ and each job $j$ has a resource demand $d_{r}(j)$ for each $r\in\ress$.
Job $j$ is eligible on machine $i$ if $d_{r}(j)\leq c_{r}(i)$ for each resource~$r$.
For $R=1$, the problem is equivalent to the mentioned hierarchical case and has been studied intensively \cite{LeungL08survey,LeungL16update}.
Furthermore, it is not hard to see that \rai is properly placed between \rar{1} and \rar{2} (see \cref{sec:resource}) and hence there is a close relationship between the two problems.
For arbitrary $R$, the problem was mentioned in a work by Bhaskara et el.~\cite{BhaskaraKTW13} under the name of geometrically restricted scheduling\footnote{The demands $d(j)$ and capacities $c(i)$ may be interpreted as points in $R$-dimensional space.} but to the best of our knowledge it has not been further studied up to now.
There is, however, a close relationship to the low rank version of makespan minimization on unrelated parallel machines (unrelated scheduling) introduced in \cite{BhaskaraKTW13}.
In the problem of unrelated scheduling, the processing time of each job is dependent on the machine it is scheduled on, that is, a processing time matrix $(p_{ij})_{j\in\jobs,i\in\machs}$ is given in the input.
Unrelated scheduling is a classical problem, and the 2-approximation by Lensta et al. \cite{LenstraST90} was actually formulated for this problem.
Restricted assignment can be seen as a special case of unrelated scheduling by setting $p_{ij} = p_j$ for $i\in\emachs(j)$ and $p_{ij} = \infty$ otherwise.
In the rank $D$ version of unrelated scheduling (\lrs{D}), the processing time matrix has a rank of at most $D$, or, equivalently \cite{CMYZ17}, we may assume that there are $D$-dimensional size vectors $s(j)$ for each job $j$ and speed vectors $v(i)$ for each machine $i$ such that $p_{ij} = \sum_{k=1}^{D}s_k(j)\cdot v_k(i)$.
Considering the latter definition, scheduling with resource restrictions may intuitively be seen as the restricted assignment equivalent of low rank unrelated scheduling.
It is not hard to see that formally already for \rar{1} instances the processing time matrix can have rank $|\machs|$.
However, \lrs{D} includes approximations of any \rar{D-1} instance with arbitrary precision (see \cref{sec:resource} for details).
The case with $D=1$ of \lrs{D} is equivalent with the classical problem of makespan minimization on uniformly related machines and well known to admit a PTAS \cite{HochbaumS88uniforme}.
Bhaskara et el.~\cite{BhaskaraKTW13} gave a quasi-polynomial time approximation scheme (QPTAS) for $D=2$, and showed that there is no PTAS or approximation algorithm with rate smaller than $1.5$ for $D\geq 4$ or $D\geq 7$ respectively.
The latter two results have been improved from $D=4$ to $D=3$ by Chen et al. \cite{CMYZ17} and from $D=7$ to $D=4$ by Chen, Ye and Zhang \cite{ChenYZ14}.  
We present similar inapproximability results for scheduling with resource restrictions:
\begin{theorem}\label{thm:result_rar}
There is no approximation algorithm with rate less than $48/47 \approx 1.02$ or $1.5$ for scheduling with resource restrictions with $2$ or $4$ resources, respectively, unless P$=$NP.
\end{theorem}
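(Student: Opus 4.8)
The plan is to prove \cref{thm:result_rar} by two separate polynomial-time reductions, one for each claimed bound, both starting from a suitable hard variant of satisfiability. For the $1.5$-inapproximability with $4$ resources, the natural route is to mimic the gadget construction behind the classical Lenstra--Shmoys--Tardos $1.5$-hardness for restricted assignment (and its low-rank analogue for \lrs{4} by Chen, Ye and Zhang), but to realize every eligibility set $\emachs(j)$ as an intersection of threshold conditions $d_r(j)\le c_r(i)$ across only four resources. Concretely, I would take an instance of \threesat (or the tailored variant \tailoredsat hinted at by the macros in the preamble), and build truth-assignment machines, clause machines, and the associated jobs so that a makespan of (say) $2$ is achievable iff the formula is satisfiable, while any non-satisfying assignment forces some machine to load at least $3$. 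The whole difficulty is encoding the combinatorial eligibility pattern of that construction with a constant number of coordinates: each resource is a one-dimensional "ruler", and a job is eligible on a machine exactly when it fits under the capacity in all four rulers simultaneously, so one must choose the demand/capacity values (essentially interval endpoints in each of the four dimensions) so that the Boolean structure "job $j$ serves clause $c$ with literal polarity $\pi$" comes out as a box-containment condition. I expect this coordinatization — showing four rulers suffice and verifying the yes/no gap survives — to be the main obstacle; the makespan bookkeeping itself is routine once the gadget is in place.

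For the weaker $48/47$ bound with only $2$ resources, the plan is to leverage the relationship already noted in the paper between \rai and \rar{2}: since \rai is sandwiched between \rar{1} and \rar{2} (proved in \cref{sec:resource}), any APX-hardness construction for \rai automatically transfers, so it suffices to extract an explicit constant-gap instance from the \rai hardness argument underlying \cref{thm:result_rai}. Alternatively, and more likely to yield the precise constant $48/47$, I would give a direct reduction from an APX-hard version of satisfiability (for instance a bounded-occurrence MAX-3SAT or \oneinthreesat variant with a known hardness ratio) into \rar{2}, where now only two rulers are available. With two dimensions one essentially gets an interval-restriction structure (one coordinate orders the "left endpoints", the other the "right endpoints"), so the construction should parallel a tight gap-amplified version of the interval-restriction reduction; the constant $48/47$ would come from tracking the arithmetic of the underlying satisfiability gap (the number $47$ strongly suggests it is inherited from a specific known hard instance size, e.g. the $7/8+\eps$ threshold for MAX-3SAT or a Håstad-type bound pushed through a size-$47$-ish gadget). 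The delicate point here is producing a genuine \emph{constant} multiplicative gap rather than an additive one, which means the jobs and machines must be organized into many parallel "copies" of a small gadget whose local makespan is $47$ in the yes-case and $48$ in the no-case, with no cross-copy interference.

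In both reductions the structure of the argument is the same three-step template: (i) describe the machines together with their capacity vectors $c_r(i)$ and the jobs together with their demand vectors $d_r(j)$ and processing times $p_j$, all of size polynomial in the input; (ii) prove completeness — from a satisfying assignment (or an optimal MAX-SAT solution) build a schedule respecting all eligibility constraints and achieving the small target makespan $T$; (iii) prove soundness — show that any schedule of makespan below $\lambda T$ (with $\lambda = 3/2$ resp. $48/47$) can be decoded into a satisfying assignment (resp. a MAX-SAT solution beating the hardness threshold), contradicting the assumed hardness of the source problem. Finally, I would remark that the Santa Claus variants follow by the standard trick of replacing "$\le T$" makespan targets with "$\ge T$" min-load targets and swapping the roles of the forcing jobs, so the same gadgets give the analogous inapproximability for the max-min objective, as asserted in the abstract.

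Throughout, I would state the gadget sizes and value ranges carefully so that all numbers stay polynomially bounded (hence the reduction is genuinely polynomial-time, not merely pseudo-polynomial), and I would isolate the core combinatorial lemma — "the eligibility pattern of the reduction is realizable with $R$ resources" — as a separate claim, since that is where essentially all the novelty beyond the known low-rank-scheduling hardness lies.
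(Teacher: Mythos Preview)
Your plan for the $4$-resource, $1.5$-inapproximability part is in the right spirit---take an existing $1.5$-hardness reduction for restricted assignment and realize its eligibility pattern via four threshold coordinates---though note that the paper uses the Ebenlendr--Krc{\'a}l--Sgall graph-balancing reduction from a bounded-occurrence \threesat, not the Lenstra--Shmoys--Tardos \threedm reduction; the latter seems to require six resources, and it is precisely the graph-balancing structure (each job eligible on at most two machines, organized into a variable side and a clause side) that compresses into four.

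Your plan for the $2$-resource part, however, misidentifies both the source problem and the origin of the constant $47$. The paper does \emph{not} use satisfiability or an APX-hardness gap here: it reduces from a tailored variant \tailoredmatching of $3$-dimensional matching due to Chen, Ye and Zhang. The number $47$ is not inherited from any MAX-SAT threshold; it arises because the six job sizes $\alpha_A=12$, $\alpha_B=13$, $\alpha_C=22$, $\beta_A=14$, $\beta_B=15$, $\beta_C=18$ are chosen so that (i) $\alpha_A+\alpha_B+\alpha_C=\beta_A+\beta_B+\beta_C=47$, and (ii) these are the \emph{only} two unordered triples from this multiset summing to $47$. In a makespan-$47$ schedule every machine therefore receives either three element jobs (one per class) or three dummy jobs, and the machines of the first kind form a perfect matching. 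The $48/47$ gap is then immediate from integrality of the sizes. What makes two resources suffice---rather than the three one would expect from three element classes---is the special structure of \tailoredmatching, whose triples are constrained so that two coordinates already determine eligibility.

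Neither of your proposed routes would reach $48/47$: extracting a gap from the \rai construction behind \cref{thm:result_rai} gives a target makespan of $T=3111111112$ and hence a vastly weaker ratio, and a MAX-SAT-based gap reduction would produce an entirely different (and almost certainly larger) denominator. The key idea you are missing is the ``only two ways to make $47$'' arithmetic lemma combined with a matching-based (not satisfiability-based) source problem whose eligibility relation is already nearly two-dimensional.
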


\subparagraph*{Santa Claus.}

The problems of restricted assignment and unrelated scheduling are also studied with the reverse objective of maximizing the minimal machine load $\min_{i\in\machs}\sum_{j\in\sigma^{-1}(i)}p_{ij}$.
Usually these variants are described in a more game theoretical context with players instead of machines, goods instead of jobs, and values instead of processing times, and sometimes unrelated scheduling with the reverse objective is called the Santa Claus problem.
In this paper, we will mostly stick to the scheduling notation but denote the variants of the considered problems with reverse objective as the Santa Claus version of the respective problem.

For the Santa Claus version of the restricted assignment problem a 13-approximation due to Annamalai, Kalaitzis and Svensson \cite{AnnamalaiKS17} is known, which has been improved to a rate of $6+\eps$ by both Cheng and Mao \cite{ChengM18} and Davies, Rothvoss and Zhang \cite{DRZ18}.
PTAS results are known for the case without restrictions \cite{Woeginger97} and the inclusion-free interval case \cite{KhodamoradiKRS16}.

Our results can be directly transferred to the Santa Claus versions of the respective problems:
\begin{theorem}\label{thm:results_santa}
Unless P$=$NP, there is no PTAS for the Santa Claus version of scheduling with interval restrictions and no approximation algorithm with rate less than $47/46$ or $2$ for the Santa Claus version of scheduling with resource restrictions with $2$ or $4$ resources, respectively.
\end{theorem}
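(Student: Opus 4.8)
The plan is to derive Theorem~\ref{thm:results_santa} from Theorems~\ref{thm:result_rai} and~\ref{thm:result_rar} by exhibiting, in each case, an approximation-preserving correspondence between the makespan (minimization) variant and the Santa Claus (maximization) variant that leaves the eligibility structure---interval or resource---intact. The standard device here is a \emph{complementation} of job sizes against a uniform target load: given an instance of the minimization problem on $m$ machines in which one must decide between makespan $T$ and makespan $(1+\eps)T$ (or between $T$ and $\tfrac32 T$), one builds a Santa Claus instance with the same machines and the same eligibility sets, padding the job set so that the total processing time eligible on each machine is a fixed value $P$, and then arguing that a schedule has makespan at most $T$ on some machine precisely when the complementary schedule has minimum load at least $P-T$ on that machine. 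The padding jobs must themselves respect the eligibility structure, which is where the interval/resource restrictions need a small dedicated construction rather than an off-the-shelf reduction.

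Concretely, for \rai I would first observe that in the hard instances produced for Theorem~\ref{thm:result_rai} one may assume all machines receive the same total eligible load $P$ in any feasible (makespan-$T$) solution---or enforce this by adding, for each machine $M_i$, a private ``filler'' job of the appropriate size eligible only on $\{M_i\}$, which is a legitimate interval. Then a schedule $\sigma$ with $\makespan(\sigma)\le T$ yields, on every machine, a load in $[P-(m-1)T,\,T]$; taking the target for the Santa Claus instance to be $P-T$ and noting the gap $T$ versus $(1+\eps)T$ becomes $P-(1+\eps)T$ versus $P-T$, we get a multiplicative gap of $(P-T)/(P-(1+\eps)T)$. The subtlety is that naive complementation only gives an \emph{additive} gap; to turn it into a constant multiplicative gap bounded away from $1$ one scales so that $P$ is of the same order as $T$, i.e., $P=cT$ for a constant $c$, which forces the filler jobs to be of size $\Theta(T)$ and requires checking that feasibility of the minimization instance is unaffected. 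For the resource version the same complementation applies, but the filler jobs must be made eligible on a single prescribed machine via resource demands, which is possible with $2$ (resp.\ $4$) resources by giving machine $M_i$ a unique capacity profile and the filler job the matching demand profile---exactly the gadget already used in \cref{sec:resource} to compare \rar{R} with interval restrictions.

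The three quantitative constants---no PTAS for \rai-Santa, no $47/46$-approximation for \rar{2}-Santa, no $2$-approximation for \rar{4}-Santa---then fall out by plugging the corresponding minimization gaps ($1+\eps$, $48/47$, $\tfrac32$) into the formula $\rho_{\text{Santa}} = (c-1)/(c-\rho_{\min})$ and choosing the scaling constant $c$ to make the arithmetic come out: $c=2$ sends a $\tfrac32$-gap to a $2$-gap and a $(48/47)$-gap to a $(47/46)$-gap, while any fixed $c>1$ sends a $(1+\eps)$-gap to a $(1+\eps')$-gap with $\eps'>0$, preserving the no-PTAS conclusion. I would present this as a single lemma (``complementation preserves interval and resource restrictions and maps a minimization gap of $\rho$ to a Santa Claus gap of $(c-1)/(c-\rho)$ for the scaling $P=cT$'') and then three one-line corollaries.

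The main obstacle I anticipate is not conceptual but bookkeeping: ensuring that the padding is \emph{exact}, i.e., that after adding filler jobs every machine's eligible-load total is literally the same value $P$, so that the equivalence ``$\makespan\le T$ on all machines $\iff$ min-load $\ge P-T$ on all machines'' holds without slack. In the interval case this is easy because single-machine intervals are allowed; in the resource case one must verify that the distinguishing capacity/demand profiles used for filler jobs do not accidentally make some filler job eligible on a second machine, which would corrupt the count---this is handled by the same ``separating coordinate'' trick as in \cref{sec:resource}, but it has to be spelled out for each of the $R=2$ and $R=4$ constructions. Once exactness is secured, the rest is the elementary gap computation above.
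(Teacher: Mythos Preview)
Your complementation scheme is more elaborate than what the paper needs and, as written, does not actually work. The paper's argument is a single observation made in the preliminaries: every hard instance it constructs has integral job sizes whose total is exactly $|\machs|\cdot T$. Consequently a schedule of makespan $T$ exists if and only if every machine receives load exactly $T$, i.e., if and only if the minimum load equals $T$. In the no-case the minimum load is therefore at most $T-1$ (integrality plus the total-load constraint), so the \emph{same} instance yields a Santa Claus gap of $T/(T-1)$ with no padding whatsoever. Plugging in $T=2$ for \rar{4} and $T=47$ for \rar{2} gives the ratios $2$ and $47/46$; for \rai the target $T$ is a fixed constant, which already rules out a PTAS.

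Your formula $\rho_{\text{Santa}}=(c-1)/(c-\rho_{\min})$ is the gap for the quantity $P-\makespan$, i.e., the minimum \emph{free space} below a cap $P$, not the minimum load. To turn this into a genuine Santa Claus instance you would need a ``complementary schedule'' in which machine $i$ receives load $P-L_i$ whenever the original schedule gives it $L_i$. Such a bijection between schedules exists only in very restricted settings (for example graph balancing, where each job has exactly two eligible machines and one flips every job to its other machine); the hard instances for \rai and \rar{2} have jobs eligible on long intervals and admit no such complement. Adding a private filler of size $(c-1)T$ to each machine does \emph{not} implement this complement either: it shifts every load by the same additive constant, so the resulting Santa Claus gap is $cT/(cT-1)$, strictly weaker than $T/(T-1)$. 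The step ``the complementary schedule has minimum load at least $P-T$'' is precisely where the plan breaks, and no choice of $c$ repairs it. The bookkeeping worry you flag (getting the padding exact) is thus moot; what is missing is the much simpler observation that the instances are already perfectly balanced.
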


\subparagraph{Paper structure.}

In the remainder of this section, we discuss some further related literature and present preliminary considerations needed throughout the paper. 
In \cref{sec:interval}, we present our results for \rai; in \cref{sec:resource}, we discuss the problem of \rar{R}; and lastly, in \cref{sec:conclusion}, we present some open problems and possible future research directions.

\subparagraph*{Further Related Work.}

First note that if the number of machines is constant, there is a fully polynomial time approximation scheme (FPTAS) already for unrelated scheduling~\cite{HorowitzS76}.
Furthermore, for some broad overview concerning parallel machine scheduling with different kinds of restrictions in the context of online and approximation algorithms, we refer to the surveys by Lee et al. \cite{LeeLP13survey} and Leung and Li \cite{LeungL08survey,LeungL16update}.

We already discussed many variants of restricted assignment that admit a PTAS.
In particular, Ou, Leung and Li \cite{OLL08} presented a PTAS for the hierarchical case; Epstein and Levin \cite{EpsteinL11} and Muratore, Schwarz and Woeginger \cite{MuratoreSW10} for the nested case; and Schwarz \cite{Schwarz10Thesis} and Khodamoradi et al. \cite{KhodamoradiKRS16} for the inclusion-free case.
Another case that has been studied in the literature is the tree-hierarchical case, where the machines can be arranged in a rooted tree such that for each job the set of eligible machines corresponds to a path starting at the root. 
It was shown to admit a PTAS by Epstein and Levin \cite{EpsteinL11} and Schwarz \cite{Schwarz10}.
It is not hard to see that all of the above cases contain the hierarchical case as a subcase, and that the tree-hierarchical, nested and inclusion-free case are distinct.
There is, however, a variant admitting a PTAS that covers both the nested and the tree-hierarchical case:
For each instance of the restricted assignment problem the corresponding incidence graph is a bipartite graph whose nodes are given by the jobs and machines and a job $j$ is adjacent to a machine $i$ if $j$ is eligible on $i$. 
Jansen, Maack and Solis-Oba \cite{JansenMS17} showed that there is PTAS for restricted assignment for the case that the clique- or rank-width of the incidence graph is constant.
Furthermore, if the incidence graph is a bi-cograph the clique-width is well-known to be small and this case covers the nested and tree-hierarchical case.
The inclusion-free case, on the other hand, is equivalent to the case that the incidence graph is a bipartite permutation graph \cite{KhodamoradiKRS16} which does not have a bounded clique-width \cite{BrandstadtL03}.
Note that \rar{1} or \rai are equivalent to the cases that the incidence graph is a chain \cite{HeggernesK07} or convex graph \cite{KhodamoradiKRS13}, respectively.
For an overview of the discussed cases, we refer to \cref{fig:problems}.
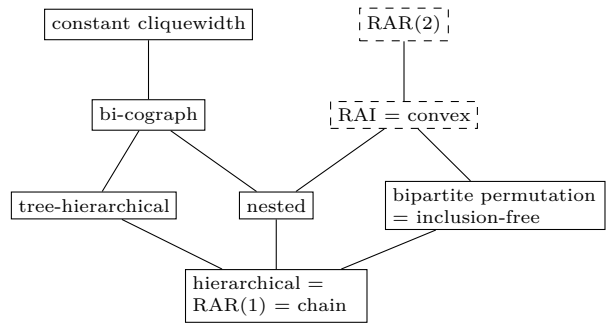
\begin{figure}
\centering
\scriptsize
\begin{tikzpicture}[scale = 0.8]
\pgfmathsetmacro{\h}{1.5}
\pgfmathsetmacro{\w}{3}

\node[draw,rectangle,text width=2.2cm] at (0, 0) (h) {hierarchical = \rar{1} = chain};
\node[draw,rectangle] at (-\w, \h) (th) {tree-hierarchical};
\node[draw,rectangle] at (0, \h) (n) {nested};
\node[draw,rectangle,text width=2.7cm] at (1.2*\w, \h) (if) {bipartite permutation = inclusion-free};
\node[draw,rectangle] at (-0.7*\w, 2*\h) (bc) {bi-cograph};
\node[draw,dashed,rectangle] at (0.7*\w, 2*\h) (rai) {\rai = convex};
\node[draw,rectangle] at (-0.7*\w, 3*\h) (cc) {constant cliquewidth};
\node[draw, dashed,rectangle] at (0.7*\w, 3*\h) (rar) {\rar{2}};

\draw (h) -- (th);
\draw (h) -- (n);
\draw (h) -- (if);
\draw (th) -- (bc);
\draw (n) -- (bc);
\draw (n) -- (rai);
\draw (if) -- (rai);
\draw (rai) -- (rar);
\draw (bc) -- (cc);

\end{tikzpicture}
\caption{An overview of the inclusion structure of several of the discussed variants of restricted assignment. If two problems are connected, the upper includes the lower one. The dashed problems do not admit a PTAS, unless P=NP (see \cref{thm:result_rai}), the remaining ones do.}
\label{fig:problems}
\end{figure}

Lastly, there has been a series of promising results in recent years concerning restricted assignment and variants thereof, and we highlight a few of them.
In a breakthrough result, Svensson \cite{Svensson12} showed that a certain integer linear program modeling the problem has an integrality gap of at most $33/17$, which implies an algorithm approximating the optimal objective value with rate $33/17 + \eps$ for any $\eps > 0$ without producing a corresponding schedule.
This has been improved by Jansen and Rohwedder \cite{JansenR17SODA} to a rate of $11/6$, and in \cite{JansenR17IPCO} the same authors provide a quasi-polynomial approximation algorithm with rate $11/6 + \eps$ that also outputs a corresponding schedule.
For the special case of restricted assignment with only two distinct processing times (not counting $\infty$) an approximation algorithm due to Chakrabarty, Khanna and Li \cite{ChakrabartyKL15} with a rate slightly below $2$ is known.
Furthermore, the case in which the set of eligible machines for each job has cardinality at most $2$ has been studied under the name of graph balancing.
Ebenlendr, Krc{\'{a}}l and Sgall \cite{EbenlendrKS14} presented a $1.75$-approximation for this case.
Note that even if both of the above cases apply, there is no approximation algorithm with rate smaller than $1.5$ \cite{EbenlendrKS14} (unless P$=$NP).
However, for this special case multiple authors found a fitting $1.5$-approximation algorithm \cite{PageS16,HuangO16,ChakrabartyS16}. 
In a recent result, Jansen and Rohwedder \cite{JansenR18GrapbBalancing} showed that in the graph balancing case the optimal objective value can be approximated with a rate slightly below $1.75$.

\subparagraph*{Preliminaries.}

In the following, we deal with satisfiability problems like the classical \threesat problem where a logical formula over variables $x_1,\dots,x_n$ is given.
The formula is a conjunction of clauses, each clause is a disjunction of three literals, and a literal is either a variable or its negation.
The goal is to decide whether there is a fulfilling truth assignment, that is, an assignment of the variables to the truth values \enquote{true} and \enquote{false}, denoted by $\true$ and $\false$, respectively, such that the formula evaluates to \enquote{true}.

We consider polynomial time approximation algorithms: 
Given an instance $I$ of an optimization problem, an $\alpha$-approximation $A$ for this problem produces a solution in time $\mathrm{poly}(|I|)$, where $|I|$ denotes the input length.
For the objective function value $A(I)$ of this solution it is guaranteed that $A(I)\leq \alpha\text{\textsc{opt}}(I)$, in the case of an minimization problem, or $A(I)\geq (1/\alpha)\text{\textsc{opt}}(I)$, in the case of an maximization problem, where $\text{\textsc{opt}}(I)$ is the value of an optimal solution.
We call $\alpha$ the \emph{approximation guarantee} or \emph{rate} of the algorithm. 
In some cases a polynomial time approximation scheme (PTAS) can be achieved, that is, an $(1+\eps)$-approximation for each $\eps>0$.
If for such a family of algorithms the running time is polynomial in both $1/\eps$ and $|I|$ it is called \emph{fully polynomial} (FPTAS).

Nearly all the reductions in this work follow the same pattern:
Given an instance $I$ of the starting problem, we construct an instance $I'$ of the variant of the restricted assignment problem considered in the respective case.
For $I'$, all job sizes are integral and upper bounded by some constant $T$ such that the overall size of the jobs equals $|\machs| T $.
Obviously, if for such an instance a machine receives jobs with overall size more or \emph{less} than $T$, the makespan of the schedule is greater than $T$.
Then we show that that there exists a schedule with makespan $T$ for $I'$, if and only if $I$ is a yes-instance. 
This rules out the existence of an approximation algorithm with rate smaller than $(T+1)/T$ and a PTAS in particular.
Furthermore, for the Santa Claus version, approximation algorithms with rate smaller than $T/(T-1)$ are ruled out.

\section{Interval Restrictions}
\label{sec:interval}

The sole goal of this section is to prove Theorem \ref{thm:result_rai}, that is, the non-existence of a PTAS for \rai (given P$\neq$NP).
Our starting point for the reduction is a satisfiability problem \tailoredsat that we tailor to our needs.
We show that \tailoredsat is NP-hard via a straight forward reduction from the \oneinthreesat problem, which is well-know to be NP-complete \cite{Schaefer78} and discussed in more detail below.
Next, we provide a reduction from \tailoredsat to the classical restricted assignment problem (with arbitrary sets of eligible machines).
This reduction introduces some of the needed gadgets and ideas for the main result.
Lastly, we show how the reduction can be refined for \rai, and this is the most elaborate step.

\subparagraph{Starting Point.}

An instance of \oneinthreesat is a conjunction of clauses with 3 literals each. 
Each clause is a formula depending on 3 literals that is satisfied if and only if exactly one of its literals takes the value $\true$. 
We call such formulas 1-in-3-clauses in the following and define 2-in-3-clauses correspondingly.

An instance of the problem \tailoredsat also is a conjunction of clauses with exactly 3 literals each.
However, each of the clauses is either a 1-in-3-clause or a 2-in-3-clause and there are as many clauses of the first as of the second type.
Furthermore, we require that each literal occurs \emph{exactly twice}.
In the following, we denote a 1-in-3-clause or 2-in-3-clause with literals $z_1$, $z_2$ and $z_3$ by $(z_1,z_2,z_3)_1$ or $(z_1,z_2,z_3)_2$, respectively.

To see that \tailoredsat is NP-hard, consider an instance of \oneinthreesat with $n$ variables $x_1,\dots,x_n$ and $m$ clauses.
We now construct an equivalent \tailoredsat instance.
Let $d_i$ be the number of times the variable $x_i$ occurs in the given \oneinthreesat formula.
For each variable $x_i$, we introduce new variables $x_{i,1}, \dots, x_{i,d_i}$ and $y_{i,1}, \dots, y_{i,d_i}$ along with clauses $(x_{i,1},\neg x_{i,2},y_{i,1})_2$, \dots, $(x_{i,d_i-1},\neg x_{i,d_i},y_{i,d_i-1})_2$, $(x_{i,d_i},\neg x_{i,1},y_{i,d_i})_2$ and clauses $(y_{i,j}, \neg y_{i,j}, \neg y_{i,j})_1$ for each $j\in [d_i]$.
Note that each variable $y_{i,j}$ has to take the value $\true$ in a fulfilling assignment, due to the clause $(y_{i,j}, \neg y_{i,j}, \neg y_{i,j})_1$. 
The remaining clauses ensure, that for each $i$ the variables $x_{i,1}$, \dots $x_{i,d_i}$ have the same value in a fulfilling assignment.
Furthermore, for each of the clauses of the original problem, we introduce one 1-in-3-clause and one 2-in-3-clause.
The 1-in-3-clauses are obtained by exchanging the $j$-th occurrence of each variable $x_i$ with $x_{i,j}$.
Moreover, the 2-in-3-clauses are obtained by copying the new 1-in-3-clauses, negating all the literals and turning them into a 2-in-3-clause.
Hence, each 2-in-3-clauses evaluates to $\true$, if and only if its corresponding 1-in-3-clause does.
It is not hard to verify the correctness of the reduction.
Similar constructions are widely used, see, e.g., \cite{Tovey84} or \cite{CMYZ17}.
The remarkable aspect of the present construction lies in its symmetrical structure which helps to avoid additional dummy gadgets in the following reductions.

\subparagraph{Simple Reduction.}

In the following, we assume that an instance of \tailoredsat with $m$ 1-in-3-clauses $C_1,\dots,C_m$, $m$ 2-in-3-clauses $C_{m+1},\dots,C_{2m}$ and $n$ variables $x_1,\dots,x_n$ is given.
Note that we have $2m$ clauses with $3$ literals each, and $4n$ occurring literals in total, hence $3m = 2n$.
In addition to the ordering of the variables and clauses, we fix an ordering of the literals belonging to each clause, and an ordering of the occurrences of each variable by assigning an index $t\in [4]$ to each of them.
In particular, for each variable $x_j$, $t=1,2$ correspond to the first and second positive and $t=3,4$ to the first and second negative occurrence of $x_j$.
Furthermore, let $\kappa: [n]\times [4]\rightarrow [2m]\times [3]$ be the bijection defined as follows: $\kappa(j,t) = (i,s)$ implies that the $t$-th occurrence of $x_j$ is positioned in clause $C_i$ on position $s$.

We now define the restricted assignment instance.
For some of the machines, we introduce \emph{private loads} which is a synonym for jobs of the corresponding size that have to be scheduled on the respective machine because its the only eligible one.
The sizes and sets of eligible machines of the introduced jobs are presented in \cref{table:simple_size_eligible} and the target makespan is given by $T = 322$.
\begin{itemize}

\item For each clause $C_i$, there are three \emph{clause machines} $\clausemach_{i,s}$ with $s\in[3]$ corresponding to its three literals, as well as three \emph{clause jobs} $\clausejob_{i,s'}^{\circ_{s'}}$ with $s'\in[3]$ and $\circ_{s'}\in\set{\true,\false}$.
We have $\circ_1 = \true$ and $\circ_3 = \false$, as well as $\circ_2 = \false$ if $C_i$ is a 1-in-3 clause, and $\circ_2=\true$ otherwise.
Furthermore, each clause machine has a private load of $111$.

\item For each variable $x_j$, there are two \emph{truth assignment machines} $\truthmach_{j,q}$ with $q\in [2]$ corresponding to the positive ($q=1$) and negative ($q=2$) literal of $x_j$, as well as 2 \emph{truth assignment jobs} $\truthjob_{j}^{\circ}$ with $\circ\in\set{\true,\false}$. 

\item For each variable $x_j$, there are eight \emph{variable jobs} $\variablejob_{j,t}^{\circ}$ with $t\in[4]$ and $\circ\in\set{\true,\false}$ corresponding to the two occurrences of the positive ($t\in\set{1,2}$) and negative ($t\in\set{3,4}$) literal of $x_j$.

\end{itemize}
\begin{table}
\centering
\caption{The sizes and sets of eligible machines of the jobs in the simple reduction. The entry for $\clausemach_{i,s}$ marks the private load of the machine. The target makespan is given by $T = 322$.}
\begin{tabular}{lll}
\toprule
Job & Size & Eligible Machines \\ \midrule
$\clausemach_{i,s}$ & $111$ & $\clausemach_{i,s}$\\
$\clausejob_{i,s'}^{\true}$ & $100$ & $\clausemach_{i,1}$, $\clausemach_{i,2}$, $\clausemach_{i,3}$\\
$\clausejob_{i,s'}^{\false}$ & $101$ & $\clausemach_{i,1}$, $\clausemach_{i,2}$, $\clausemach_{i,3}$\\
$\truthjob_{j}^{\true}$ & $100$ & $\truthmach_{j,1}$, $\truthmach_{j,2}$ \\
$\truthjob_{j}^{\false}$ & $102$ & $\truthmach_{j,1}$, $\truthmach_{j,2}$ \\
$\variablejob_{j,t}^{\true}$ & $111$ & $\truthmach_{j,\ceil{t/2}}$, $\clausemach_{\kappa(j,t)}$ \\
$\variablejob_{j,t}^{\false}$ & $110$ & $\truthmach_{j,\ceil{t/2}}$, $\clausemach_{\kappa(j,t)}$ \\
\bottomrule
\end{tabular}
\label{table:simple_size_eligible}
\end{table}
First note:
\begin{claim}
The overall size of all the jobs is exactly $|\machs|T$.
\end{claim}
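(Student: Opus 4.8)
The plan is to verify the identity $\sum_{\text{jobs}} (\text{size}) = |\machs|\cdot T$ by computing both sides as explicit polynomials in $m$ and $n$, then substituting the relation $3m = 2n$ established earlier. First I would count the machines: there are $3$ clause machines per clause giving $3\cdot 2m = 6m$ clause machines, and $2$ truth assignment machines per variable giving $2n$ of them, so $|\machs| = 6m + 2n$, and hence $|\machs|T = 322(6m+2n)$.

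Next I would tabulate the total job size contributed by each row of Table~\ref{table:simple_size_eligible}. The private loads on clause machines contribute $111$ each, totalling $111\cdot 6m$. For each clause there are three clause jobs; since exactly one of $\circ_1,\circ_2,\circ_3$ differs between the 1-in-3 and 2-in-3 cases but in both cases the multiset $\{\circ_1,\circ_2,\circ_3\}$ consists of exactly one value that is realized by a job of size $100$ versus $101$ depending on $\true$/$\false$ — more precisely, a 1-in-3 clause has one $\true$-job and two $\false$-jobs while a 2-in-3 clause has two $\true$-jobs and one $\false$-job — the contribution is $100+101+101 = 302$ per 1-in-3 clause and $100+100+101 = 301$ per 2-in-3 clause, for a total of $302m + 301m = 603m$ over all $2m$ clauses. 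For each variable there is one $\truthjob^{\true}$ of size $100$ and one $\truthjob^{\false}$ of size $102$, contributing $202n$. Finally, for each variable there are four $\variablejob^{\true}$ jobs of size $111$ and four $\variablejob^{\false}$ jobs of size $110$, contributing $4\cdot 111 + 4\cdot 110 = 884$ per variable, i.e. $884n$. Summing, the total job size is $666m + 603m + 202n + 884n = 1269m + 1086n$.

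It then remains to check that $1269m + 1086n = 322(6m + 2n) = 1932m + 644n$, i.e. that $442n = 663m$, i.e. $2n = 3m$ after dividing by $221$. This is exactly the counting identity $3m = 2n$ recorded right before the construction (each of the $2m$ clauses has $3$ literals, and each of the $n$ variables contributes $4$ occurrences, so $6m = 4n$). Substituting closes the verification.

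The only place requiring care — the potential obstacle — is the bookkeeping for the clause jobs, since the sign pattern $\circ_{s'}$ depends on whether $C_i$ is a 1-in-3 or a 2-in-3 clause; one must use the hypothesis from the definition of \tailoredsat that there are equally many clauses of each type so that the asymmetry ($302$ versus $301$) averages out cleanly to $603$ per clause-pair. Everything else is a routine linear computation, and the final step is a single application of the literal-counting identity $3m = 2n$.
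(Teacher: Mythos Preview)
Your proof is correct and takes essentially the same approach as the paper: both tabulate the total job size by job type, use the equal count of 1-in-3 and 2-in-3 clauses to handle the clause-job asymmetry, and close with the identity $3m=2n$. The paper simply substitutes $3m=2n$ upfront (writing $|\machs|=6n$) and presents the sum in one line, whereas you carry $m$ and $n$ separately and verify the linear identity at the end; the arithmetic content is identical.
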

\begin{claimproof}
We have $6m + 2n = 6n$ machines (since $3m = 2n$), and---taking into account that we have as many 1-in-3 as 2-in-3 clauses---the overall job size equals: \[6m\cdot 111 + m(3\cdot 100 + 3\cdot 101) + n(100 + 102 + 4\cdot 111 + 4\cdot 110) = 1932n = 322 \cdot 6n = |\machs|T\]
\end{claimproof}
We will show that there is a fulfilling truth assignment for the \tailoredsat instance if and only if there is a schedule in which each machine receives jobs with load exactly $T$.

For any job $\mathtt{Job}^\circ$ with $\circ\in\set{\true,\false}$, we refer to $\circ$ as its \emph{truth configuration} and say that $\mathtt{Job}^\circ$ has $\circ$-configuration.
The rationale of the reduction is as follows:
Each clause machine $\clausemach_{i,s}$ should receive exactly one variable job corresponding to the literal placed in position $s$ in the clause.
The truth configuration of this variable job should correspond to the truth value the variable contributes to the clause.
To ensure that the jobs $\variablejob_{j,t}^{\true}$ belonging to variable $x_j$ contribute consistent truth values, the truth assignment jobs and machines are introduced.
In the following, we sometimes talk about the truth assignment gadget and thus refer to these jobs and machines. 
Similarly, the clause machines and jobs are sometimes called the clause gadget.
In the appendix, we provide an example \tailoredsat instance and the corresponding restricted assignment instance produced in the reduction.

Next, we present a sequence of easy claims concerning the properties of a schedule for the above instance with makespan $T$.
\begin{claim}\label{claim:simple_3jobs}
Each machine receives exactly 3 jobs (including private loads).
\end{claim}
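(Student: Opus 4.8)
The plan is to prove the claim by a simple counting argument exploiting the fact that all job sizes lie in a narrow band around $T/3$. The first step is to pin down the load of every machine exactly. A schedule with makespan $T$ assigns to each machine a load of at most $T$; since the preceding claim establishes that the total size of all jobs equals $|\machs|\,T$, it follows that every machine receives jobs of total size \emph{exactly} $T = 322$. This upgrade from ``at most $T$'' to ``exactly $T$'' is the one point that requires a little care, since it is what will rule out machines with too few jobs.

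The second step is to read off the range of job sizes from \cref{table:simple_size_eligible}: every job -- be it a private load ($111$), a clause job ($100$ or $101$), a truth assignment job ($100$ or $102$), or a variable job ($110$ or $111$) -- has size between $100$ and $111$. Hence a machine receiving $k$ jobs (private loads included) carries a load lying in the interval $[100k,\,111k]$.

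The final step is the arithmetic. For $k\le 2$ the load is at most $222 < 322$, and for $k\ge 4$ it is at least $400 > 322$; both contradict the load being exactly $322$, which we showed must hold. Since $322 \in [300,333] = [100\cdot 3,\,111\cdot 3]$, the value $k=3$ is consistent, and we conclude that every machine receives exactly three jobs. I do not anticipate any genuine obstacle: the uniform bound $100 \le p_j \le 111$ together with the inequalities $3\cdot 111 < 4\cdot 100$ and $2\cdot 111 < 322$ do all the work, and the only subtlety is the invocation of the previous claim to force each machine's load to equal $T$ rather than merely to be bounded by it.
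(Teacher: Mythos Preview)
Your proposal is correct and follows essentially the same approach as the paper: use the previous claim to pin each machine's load to exactly $T=322$, observe that all job sizes lie in $[100,111]$, and conclude by arithmetic that only three jobs fit. The paper's own proof is just a two-sentence version of what you wrote.
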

\begin{claimproof}
Since the overall size of the jobs is $|\machs|T$, we know that each machine has to receive jobs with overall size $T = 322$. Each job or private load has a size of at least $100$ and at most~$111$.
\end{claimproof}
Since each digit of each occurring size is upper bounded by $2$, the above claim implies that there can be no carryover when adding up job sizes of jobs scheduled on each machine.
Hence the digits of the numbers involved can be considered independently, e.g., there can be at most two jobs with a $1$ in the third (or second) digit of its size scheduled on any machine.
This together with the given job restrictions already implies:
\begin{claim}\label{claim:simple_TMach_CMach_Assignment}
Each truth assignment machine receives exactly one truth assignment and two variable jobs; and each clause machine receives exactly one clause and one variable job.
\end{claim}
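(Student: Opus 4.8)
The plan is to exploit \cref{claim:simple_3jobs} together with the digit-wise independence observation: since every size digit is at most $2$ and every machine receives exactly three jobs summing to $T=322$, the units, tens and hundreds digits of the multiset of sizes on any machine can be analyzed separately, with no carries. I would look at each machine type in turn, using which jobs are even eligible there.

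First I would treat a truth assignment machine $\truthmach_{j,q}$. The only jobs eligible on it are the two truth assignment jobs $\truthjob_j^{\circ}$ (sizes $100,102$) and the four variable jobs $\variablejob_{j,t}^{\circ}$ with $\lceil t/2\rceil = q$ (sizes $110,111$). It receives three of these. Looking at the hundreds digit: truth assignment jobs contribute $1$ and variable jobs contribute $1$, so any three of them give hundreds digit $3$, consistent with $322$ — no information yet. The tens digit is the decisive one: a truth assignment job contributes $0$ and a variable job contributes $1$; to reach the tens digit $2$ of $T=322$ (again, no carry is possible) we need exactly two variable jobs, hence exactly one truth assignment job, on $\truthmach_{j,q}$. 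Then I would note that the units digit works out automatically: $322 - (\text{two variable jobs, units }0\text{ or }1) - (\text{one truth job, units }0\text{ or }2)$ leaves the correct remainder, so this is forced purely by the tens digit. This handles the first half of the claim. Since each variable $x_j$ has exactly four variable jobs split between $\truthmach_{j,1}$ and $\truthmach_{j,2}$ (two each by the eligibility $\truthmach_{j,\lceil t/2\rceil}$), and each of the two truth assignment machines takes exactly two of them, all four variable jobs of $x_j$ land on truth assignment machines; consequently, by a counting argument, no variable job is left over for a clause machine beyond what we now establish.

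Next I would treat a clause machine $\clausemach_{i,s}$. It carries a private load of $111$, so by \cref{claim:simple_3jobs} it receives exactly two further jobs whose sizes sum to $T - 111 = 211$. The jobs eligible there are the three clause jobs $\clausejob_{i,s'}^{\circ}$ (sizes $100,101$) and the variable jobs $\variablejob_{\kappa^{-1}(i,s)}^{\circ}$ corresponding to the literal in position $s$ (sizes $110,111$). Working digit-wise on the required remaining sum $211$: the hundreds digit must be $2$, which is automatic for any two such jobs; the tens digit must be $1$, and since clause jobs contribute tens digit $0$ while variable jobs contribute tens digit $1$, we need exactly one variable job and exactly one clause job on $\clausemach_{i,s}$. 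The units digit then forces nothing new (it is consistent: one variable job with units $0$ or $1$ plus one clause job with units $0$ or $1$ can reach units $1$). This gives the second half of the claim. I would close by a brief global-count sanity check: there are $2m$ clauses giving $3\cdot 2m = 6m$ clause machines each taking one variable job, and $n$ variables giving $8n$ variable jobs of which $4n$ (two per truth assignment machine, $2n$ truth assignment machines) are absorbed by truth assignment machines; since $6m = 4n$ by $3m = 2n$, the remaining $4n = 6m$ variable jobs exactly match the clause machines, so everything is consistent.

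I do not expect a serious obstacle here — the whole argument is the digit-wise bookkeeping already licensed by \cref{claim:simple_3jobs} and the no-carry remark preceding the statement. The one point requiring a little care is making explicit, in each case, that eligibility restricts the candidate jobs to exactly the lists above, so that the tens-digit count genuinely forces the split; and then observing that once the tens digits are pinned down, the hundreds and units digits impose no further constraint, so the counts $(1\text{ truth},2\text{ variable})$ and $(1\text{ clause},1\text{ variable})$ are the unique possibilities. The global counting check at the end is not logically necessary for the claim but is worth including as a consistency remark.
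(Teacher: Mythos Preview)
Your core argument is correct and is exactly the digit-wise analysis the paper intends (the paper does not spell it out, it just says the claim follows from \cref{claim:simple_3jobs} together with the no-carry observation and the eligibility restrictions). One small slip in your optional aside: each variable $x_j$ has \emph{eight} variable jobs $\variablejob_{j,t}^{\circ}$ (four values of $t$, two truth configurations), not four, with four eligible on each of $\truthmach_{j,1}$ and $\truthmach_{j,2}$; hence only four of the eight land on truth assignment machines, and the remaining four go to clause machines --- your final global count ($8n-4n=4n=6m$) is right, but the intermediate sentence ``all four variable jobs of $x_j$ land on truth assignment machines'' is not.
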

\begin{claim}\label{claim:simple_TMach_CMach_Truth_Conf}
The jobs scheduled on a truth assignment or clause machine all have the same truth configuration (excluding private loads).
\end{claim}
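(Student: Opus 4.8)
The plan is to combine the per-machine job composition established in \cref{claim:simple_TMach_CMach_Assignment} with the no-carryover observation made right after \cref{claim:simple_3jobs}: every job size has all decimal digits at most $2$ and exactly three jobs lie on each machine, so when the three sizes are added no carry occurs, and hence each digit of $T = 322$ must already be produced by the matching digits of the (at most) three sizes on that machine. I would run this digit argument on the \emph{units digit}, which is the digit that tells the two truth configurations apart in both gadgets; the tens and hundreds digits only re-confirm that there are three jobs per machine.

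First, for a clause machine $\clausemach_{i,s}$: by \cref{claim:simple_TMach_CMach_Assignment} its three jobs are the private load of size $111$, one clause job, and one variable job. In the units digit the private load contributes $1$; a clause job contributes $0$ in $\true$-configuration (size $100$) and $1$ in $\false$-configuration (size $101$); a variable job contributes $1$ in $\true$-configuration (size $111$) and $0$ in $\false$-configuration (size $110$). Since the units digit of $T$ is $2$, the clause job's and the variable job's units-digit contributions must sum to $1$, and going through the four combinations this holds exactly when the clause job and the variable job share the same truth configuration. That settles the clause machines.

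Second, for a truth assignment machine $\truthmach_{j,q}$: \cref{claim:simple_TMach_CMach_Assignment} says it holds one truth assignment job and two variable jobs. In the units digit a truth assignment job contributes $0$ ($\true$, size $100$) or $2$ ($\false$, size $102$), and each variable job contributes $1$ ($\true$) or $0$ ($\false$); the three contributions must again sum to $2$. If the truth job is in $\true$-configuration, both variable jobs must contribute $1$, hence both are $\true$; if it is in $\false$-configuration, both variable jobs must contribute $0$, hence both are $\false$. This gives the claim for the truth assignment machines.

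I do not expect a genuine obstacle here; the one point to be careful about is ruling out a \enquote{mixed} pair of variable jobs on a truth assignment machine (one of each configuration, contributing $1$ in total), which is impossible simply because no truth assignment job has units digit $1$. Should the units digit ever fail to separate the configurations in a later variant of the reduction, one could instead pick whatever digit does, or just argue directly from the admissible sums of sizes as above.
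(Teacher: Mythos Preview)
Your proposal is correct and follows exactly the approach the paper intends: the paper states that, thanks to \cref{claim:simple_3jobs}, there is no carry when summing the three sizes on any machine, so the digits can be treated independently, and then lists \cref{claim:simple_TMach_CMach_Assignment}--\cref{claim:simple_TJobs_Truth_Assignment} as immediate consequences without spelling out the case analysis. Your explicit treatment of the units digit on both machine types is precisely the intended argument, just written out in full.
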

\begin{claim}\label{claim:simple_TJobs_Truth_Assignment}
Let $j\in[n]$. The truth configuration of any job scheduled on $\truthmach_{j,1}$ is distinct from the truth configuration of any job scheduled on $\truthmach_{j,2}$.
\end{claim}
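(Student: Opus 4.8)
The plan is to obtain the claim as a short corollary of the preceding three claims, with no new computation needed. First I would invoke \cref{claim:simple_TMach_CMach_Assignment}, which tells us that each of the two truth assignment machines $\truthmach_{j,1}$ and $\truthmach_{j,2}$ associated with $x_j$ receives exactly one truth assignment job. Reading off the eligibility column of \cref{table:simple_size_eligible}, the only truth assignment jobs that may be placed on either of these two machines are $\truthjob_{j}^{\true}$ (size $100$) and $\truthjob_{j}^{\false}$ (size $102$); a truth assignment job $\truthjob_{j'}^{\circ}$ for $j'\neq j$ is eligible only on $\truthmach_{j',1},\truthmach_{j',2}$ and hence cannot help here. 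Since there are exactly two such jobs and exactly two such machines, each receiving exactly one, the assignment is forced to be a bijection: one of $\truthmach_{j,1},\truthmach_{j,2}$ gets $\truthjob_{j}^{\true}$ and the other gets $\truthjob_{j}^{\false}$.

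Next I would apply \cref{claim:simple_TMach_CMach_Truth_Conf}: all non-private jobs scheduled on a single truth assignment machine share a common truth configuration, which is therefore necessarily the configuration of the unique truth assignment job that machine received. Consequently, every job on the machine holding $\truthjob_{j}^{\true}$ has $\true$-configuration and every job on the machine holding $\truthjob_{j}^{\false}$ has $\false$-configuration, so the two sets of truth configurations appearing on $\truthmach_{j,1}$ and $\truthmach_{j,2}$ are disjoint, which is exactly the statement of the claim.

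I do not expect a genuine obstacle in this argument; the only step requiring a moment's care is the verification that no job of a different type, and no truth assignment job belonging to another variable, can stand in for one of the two intended truth assignment jobs on $\truthmach_{j,q}$ — but this is immediate from the eligibility column, since the variable jobs $\variablejob_{j,t}^{\circ}$ are the only other jobs eligible on $\truthmach_{j,q}$ and they are already accounted for by \cref{claim:simple_TMach_CMach_Assignment}. Thus the claim follows directly from \cref{claim:simple_3jobs,claim:simple_TMach_CMach_Assignment,claim:simple_TMach_CMach_Truth_Conf}.
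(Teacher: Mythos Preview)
Your proposal is correct and matches the paper's approach: the paper does not give a separate proof for this claim but simply states that it (along with the two preceding claims) follows from the digit-independence observation together with the eligibility table, and your argument is precisely the explicit unpacking of that implication via \cref{claim:simple_TMach_CMach_Assignment} and \cref{claim:simple_TMach_CMach_Truth_Conf}.
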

\begin{table}
\centering
\caption{Each set indicates one of the possible job assignments for each machine in a schedule with makespan $T$.}
\begin{tabular}{ll}
\toprule
Machine & Possible Schedules \\ \midrule
$\truthmach_{j,1}$ & $\set{\truthjob_{j}^\true, \variablejob_{j,1}^\true, \variablejob_{j,2}^\true}$, $\set{\truthjob_{j}^\false, \variablejob_{j,1}^\false, \variablejob_{j,2}^\false}$ \\
$\truthmach_{j,2}$ & $\set{\truthjob_{j}^\true, \variablejob_{j,3}^\true, \variablejob_{j,4}^\true}$, $\set{\truthjob_{j}^\false, \variablejob_{j,3}^\false, \variablejob_{j,4}^\false}$ \\
$\clausemach_{i,s}$ (1-in-3-clause) & $\set{\variablejob_{\kappa^{-1}(i,s)}^{\true}, \clausejob_{i,1}^\true}$, $\set{\variablejob_{\kappa^{-1}(i,s)}^{\false}, \clausejob_{i,2}^\false}$, $\set{\variablejob_{\kappa^{-1}(i,s)}^{\false}, \clausejob_{i,3}^\false}$\\
$\clausemach_{i,s}$ (2-in-3-clause) & $\set{\variablejob_{\kappa^{-1}(i,s)}^{\true}, \clausejob_{i,1}^\true}$, $\set{\variablejob_{\kappa^{-1}(i,s)}^{\true}, \clausejob_{i,2}^\true}$, $\set{\variablejob_{\kappa^{-1}(i,s)}^{\false}, \clausejob_{i,3}^\false}$\\
\bottomrule
\end{tabular}
\label{table:simple_possible_jobs}
\end{table}

\begin{figure}
\centering
\begin{tikzpicture}
\pgfmathsetmacro{\bw}{1.8}
\pgfmathsetmacro{\bh}{2.6}
\pgfmathsetmacro{\tth}{.7}
\pgfmathsetmacro{\tfh}{1.1}
\pgfmathsetmacro{\vth}{0.8}
\pgfmathsetmacro{\vfh}{0.6}

\draw[thick] (0, 0) -- (0,\bh); 
\draw[thick] (0,0) -- (\bw,0) node [midway,yshift=-13pt] {$\truthmach_{j,1}$};
\draw[thick] (\bw, 0) -- (\bw,\bh); 
\draw[thick] (\bw,0) -- (2*\bw,0) node [midway,yshift=-13pt] {$\truthmach_{j,2}$};
\draw[thick] (2*\bw, 0) -- (2*\bw,\bh); 

\draw[thick] (0,0) rectangle (\bw, \tth) node [midway] {$\truthjob_{j}^\true$};
\draw[thick] (0,\tth) rectangle (\bw, \tth + \vth) node [midway] {$\variablejob_{j,1}^\true$};
\draw[thick] (0,\tth + \vth) rectangle (\bw, \tth + 2*\vth) node [midway] {$\variablejob_{j,2}^\true$};

\draw[thick] (\bw,0) rectangle (2*\bw, \tfh) node [midway] {$\truthjob_{j}^\false$};
\draw[thick] (\bw,\tfh) rectangle (2*\bw, \tfh + \vfh) node [midway] {$\variablejob_{j,3}^\false$};
\draw[thick] (\bw,\tfh + \vfh) rectangle (2*\bw, \tfh + 2*\vfh) node [midway] {$\variablejob_{j,4}^\false$};

\draw[thick, ->, >=latex] (\bw/2, \bh - 0.2) to [out=90,in=190] (\bw/2 + 0.5, \bh - 0.2 + 1.8) node [right] {$\variablejob_{j,1}^{\false}$}; 
\draw[thick, ->, >=latex] (\bw/2, \bh - 0.2) to [out=90,in=190] (\bw/2 + 0.7, \bh - 0.2 + 1.3) node [right] {$\variablejob_{j,2}^{\false}$}; 
\draw[thick, ->, >=latex] (\bw + \bw/2, \bh - 0.2) to [out=90,in=190] (\bw + \bw/2 + 1.0, \bh - 0.2 + 1.8) node [right] {$\variablejob_{j,3}^{\true}$}; 
\draw[thick, ->, >=latex] (\bw + \bw/2, \bh - 0.2) to [out=90,in=190] (\bw + \bw/2 + 1.2, \bh - 0.2 + 1.3) node [right] {$\variablejob_{j,4}^{\true}$}; 

\begin{scope}[xshift=6cm]

\draw[thick] (0, 0) -- (0,\bh); 
\draw[thick] (0,0) -- (\bw,0) node [midway,yshift=-13pt] {$\truthmach_{j,1}$};
\draw[thick] (\bw, 0) -- (\bw,\bh); 
\draw[thick] (\bw,0) -- (2*\bw,0) node [midway,yshift=-13pt] {$\truthmach_{j,2}$};
\draw[thick] (2*\bw, 0) -- (2*\bw,\bh); 

\draw[thick] (0,0) rectangle (\bw, \tfh) node [midway] {$\truthjob_{j}^\false$};
\draw[thick] (0,\tfh) rectangle (\bw, \tfh + \vfh) node [midway] {$\variablejob_{j,1}^\false$};
\draw[thick] (0,\tfh + \vfh) rectangle (\bw, \tfh + 2*\vfh) node [midway] {$\variablejob_{j,2}^\false$};

\draw[thick] (\bw,0) rectangle (2*\bw, \tth) node [midway] {$\truthjob_{j}^\true$};
\draw[thick] (\bw,\tth) rectangle (2*\bw, \tth + \vth) node [midway] {$\variablejob_{j,3}^\true$};
\draw[thick] (\bw,\tth + \vth) rectangle (2*\bw, \tth + 2*\vth) node [midway] {$\variablejob_{j,4}^\true$};

\draw[thick, ->, >=latex] (\bw/2, \bh - 0.2) to [out=90,in=190] (\bw/2 + 0.5, \bh - 0.2 + 1.8) node [right] {$\variablejob_{j,1}^{\true}$}; 
\draw[thick, ->, >=latex] (\bw/2, \bh - 0.2) to [out=90,in=190] (\bw/2 + 0.7, \bh - 0.2 + 1.3) node [right] {$\variablejob_{j,2}^{\true}$}; 
\draw[thick, ->, >=latex] (\bw + \bw/2, \bh - 0.2) to [out=90,in=190] (\bw + \bw/2 + 1.0, \bh - 0.2 + 1.8) node [right] {$\variablejob_{j,3}^{\false}$}; 
\draw[thick, ->, >=latex] (\bw + \bw/2, \bh - 0.2) to [out=90,in=190] (\bw + \bw/2 + 1.2, \bh - 0.2 + 1.3) node [right] {$\variablejob_{j,4}^{\false}$}; 

\end{scope}
\end{tikzpicture}

\caption{The truth assignment gadget: There are two possible schedules of the truth assignment machines $\truthmach_{j,1}$ and $\truthmach_{j,2}$ that already determine the schedule of the variable jobs.}
\label{fig:truth_assignment_gadget}
\end{figure}
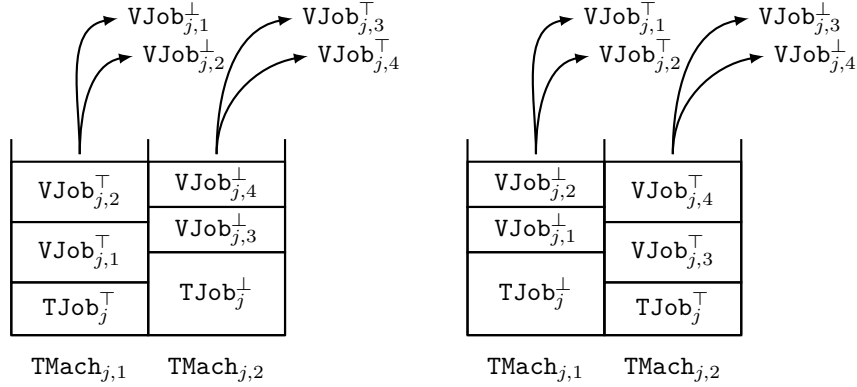
The resulting possible schedules for each machine are summed up in Table~\ref{table:simple_possible_jobs}, and Figure~\ref{fig:truth_assignment_gadget} depicts the resulting two possible schedules for each pair of truth assignment machines.
Lastly, we have:
\begin{claim}\label{claim:simple_CMach_triplet_truth_configuration}
For each $i\in [2m]$, the three clause machines corresponding to $i$ receive exactly one variable job with $\true$-configuration if $C_i$ is a 1-in-3-clause and exactly two such jobs if $C_i$ is a 2-in-3-clause.
\end{claim}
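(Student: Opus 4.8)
The plan is to combine the structural facts already established for a makespan-$T$ schedule with one short counting argument about the clause jobs.

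First I would recall, via \cref{claim:simple_TMach_CMach_Assignment}, that each of the three clause machines $\clausemach_{i,1},\clausemach_{i,2},\clausemach_{i,3}$ associated with $C_i$ receives exactly one clause job and exactly one variable job. Now the three clause jobs $\clausejob_{i,1}^{\true}$, $\clausejob_{i,2}^{\circ_2}$, $\clausejob_{i,3}^{\false}$ are, according to \cref{table:simple_size_eligible}, eligible only on these three machines; since there are exactly three such jobs and three such machines, and each machine takes exactly one clause job, they are assigned bijectively, one clause job of $C_i$ per clause machine of $C_i$.

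Next I would read off the truth configurations from the construction: the multiset of truth configurations of the clause jobs of $C_i$ is $\{\true,\false,\false\}$ when $C_i$ is a 1-in-3-clause (so $\circ_2=\false$) and $\{\true,\true,\false\}$ when $C_i$ is a 2-in-3-clause (so $\circ_2=\true$). Hence exactly one clause job of $C_i$ has $\true$-configuration in the former case and exactly two in the latter. Finally I would invoke \cref{claim:simple_TMach_CMach_Truth_Conf}: on each clause machine the variable job and the clause job share the same truth configuration. Summing over the three clause machines of $C_i$, the number of variable jobs with $\true$-configuration therefore equals the number of clause jobs with $\true$-configuration, i.e.\ $1$ for a 1-in-3-clause and $2$ for a 2-in-3-clause, which is exactly the assertion.

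I do not expect a real obstacle here: the argument is just a chaining of the preceding claims. The only point deserving a line of care is the bijection step—arguing that the three clause jobs of $C_i$ land one per clause machine of $C_i$—and this follows immediately from their eligibility being confined to precisely those three machines together with the per-machine count provided by \cref{claim:simple_TMach_CMach_Assignment}.
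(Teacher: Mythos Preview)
Your argument is correct, but it differs from the paper's. The paper proves the claim by a direct load-counting argument on the triplet: the total load on $\clausemach_{i,1},\clausemach_{i,2},\clausemach_{i,3}$ must be $3T=966$; subtracting the private loads ($3\cdot 111=333$) and the three clause jobs (summing to $302$ for a 1-in-3-clause and $301$ for a 2-in-3-clause) leaves $331$ or $332$, which---since the remaining eligible jobs are variable jobs of size $111$ ($\true$) or $110$ ($\false$)---forces exactly one or two $\true$-configuration variable jobs, respectively.

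Your route instead chains \cref{claim:simple_TMach_CMach_Assignment} and \cref{claim:simple_TMach_CMach_Truth_Conf}: the three clause jobs are distributed bijectively over the three clause machines, their truth configurations are $\{\true,\false,\false\}$ or $\{\true,\true,\false\}$ by construction, and each variable job inherits the truth configuration of the clause job on its machine. This is equally valid and arguably more modular, since it reuses structural facts already in hand rather than repeating an arithmetic computation. The paper's version has the minor advantage of being essentially self-contained (it needs only the eligibility constraints and the sizes), whereas yours leans on the two preceding claims---but since those are already proved, this is no loss.
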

\begin{claimproof}
The overall load on each triplet of clause machines has to be $3T = 966$ and the private loads and clause jobs that have to be scheduled on the triplet have summed up load $635$, in case of a 1-in-3-clause, and $634$, in case of a 2-in-3-clause.
The only other jobs eligible on the clause machines are variable jobs with size $111$ in $\true$-configuration and $110$ otherwise.
This implies the claim.
\end{claimproof}
Using the above claims, we can easily show:
\begin{proposition}
There is a fulfilling truth assignment for the given \tailoredsat instance if and only if there is a schedule with makespan $T$ for the constructed restricted assignment instance. 
\end{proposition}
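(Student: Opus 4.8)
The plan is to prove both directions by combining the structural \cref{claim:simple_3jobs,claim:simple_TMach_CMach_Assignment,claim:simple_TMach_CMach_Truth_Conf,claim:simple_TJobs_Truth_Assignment,claim:simple_CMach_triplet_truth_configuration} with the case distinction collected in \cref{table:simple_possible_jobs}; since these claims already pin down the shape of every makespan-$T$ schedule, what remains is to set up the dictionary between truth assignments and schedules and to keep track of which job is placed where.

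For the direction ``fulfilling assignment $\Rightarrow$ schedule'', I would build the schedule variable by variable and then clause by clause. For a variable $x_j$ with $\beta(x_j)=\true$ I would place $\set{\truthjob_j^\false,\variablejob_{j,1}^\false,\variablejob_{j,2}^\false}$ on $\truthmach_{j,1}$ and $\set{\truthjob_j^\true,\variablejob_{j,3}^\true,\variablejob_{j,4}^\true}$ on $\truthmach_{j,2}$ (two of the options in \cref{table:simple_possible_jobs}), which frees exactly the variable jobs $\variablejob_{j,1}^\true,\variablejob_{j,2}^\true$ of the positive occurrences and $\variablejob_{j,3}^\false,\variablejob_{j,4}^\false$ of the negative occurrences to go to their clause machines $\clausemach_{\kappa(j,t)}$; for $\beta(x_j)=\false$ I would mirror this. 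By construction the variable job arriving at $\clausemach_{i,s}$ is in $\true$-configuration iff the literal on position $s$ of $C_i$ is true under $\beta$, so each $1$-in-$3$-clause gets exactly one and each $2$-in-$3$-clause exactly two clause machines with a $\true$-configuration variable job; the three clause jobs of $C_i$ (which are $\clausejob_{i,1}^\true,\clausejob_{i,2}^\false,\clausejob_{i,3}^\false$ resp.\ $\clausejob_{i,1}^\true,\clausejob_{i,2}^\true,\clausejob_{i,3}^\false$) can then be matched to the clause machines so that configurations agree, again via \cref{table:simple_possible_jobs}. A direct check gives load exactly $T=322$ on every machine, and since every job is used this is a valid makespan-$T$ schedule.

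For the converse, starting from a schedule of makespan $T$, I would read off an assignment: by \cref{claim:simple_TMach_CMach_Truth_Conf,claim:simple_TJobs_Truth_Assignment}, for each $j$ exactly one of $\truthmach_{j,1},\truthmach_{j,2}$ carries only $\true$-configuration jobs and the other only $\false$-configuration jobs, and I set $\beta(x_j)=\true$ precisely when $\truthmach_{j,1}$ is the one with $\false$-configuration jobs. The key equivalence, read from \cref{claim:simple_TMach_CMach_Assignment} and \cref{table:simple_possible_jobs}, is that for $\kappa^{-1}(i,s)=(j,t)$ the clause machine $\clausemach_{i,s}$ receives the variable job $\variablejob_{j,t}^\true$ iff $\truthmach_{j,\ceil{t/2}}$ is the $\false$-configuration machine (otherwise $\variablejob_{j,t}^\true$ would be forced onto that truth machine), and symmetrically for $\variablejob_{j,t}^\false$. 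Unwinding the index convention---$t\in\set{1,2}$ are the positive occurrences with $\ceil{t/2}=1$ and $t\in\set{3,4}$ the negative ones with $\ceil{t/2}=2$, invoking \cref{claim:simple_TJobs_Truth_Assignment} in the latter case---this says exactly that $\clausemach_{i,s}$ receives a $\true$-configuration variable job iff the literal on position $s$ of $C_i$ is true under $\beta$. Then \cref{claim:simple_CMach_triplet_truth_configuration} yields that each $1$-in-$3$-clause has exactly one true literal and each $2$-in-$3$-clause exactly two, so $\beta$ is fulfilling.

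The one genuinely delicate point is the bookkeeping that keeps the truth encoding consistent across the two gadgets: one has to verify that ``a variable occurrence contributes $\true$ to its clause'' is matched by the \emph{same} configuration of variable jobs both at the truth-assignment machines (where the complementary configuration is parked) and at the clause machines, and that the conventions linking an occurrence index $t$, the truth machine index $\ceil{t/2}$, and the positive/negative status of the occurrence all line up; the negative-occurrence case additionally needs the sign flip provided by \cref{claim:simple_TJobs_Truth_Assignment}. Everything else is a routine check that every proposed assignment loads each machine to exactly $T$ and that no job is left over.
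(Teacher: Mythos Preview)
Your proposal is correct and follows essentially the same approach as the paper: both directions exploit \cref{claim:simple_TMach_CMach_Assignment,claim:simple_TMach_CMach_Truth_Conf,claim:simple_TJobs_Truth_Assignment,claim:simple_CMach_triplet_truth_configuration} and \cref{table:simple_possible_jobs} to set up the dictionary between assignments and schedules, and the concrete schedule you build in the forward direction is exactly the one the paper writes down. The only cosmetic difference is that in the backward direction you read off $\beta(x_j)$ from the configuration on $\truthmach_{j,1}$, whereas the paper reads it from the configuration of the variable job landing on $\clausemach_{\kappa(j,1)}$; the two definitions are equivalent by the very complementarity you spell out.
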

\begin{proof}
Let there be a schedule with makespan $T$ for the constructed instance.
For each variable $x_j$ and occurrence $t\in[4]$, let $\variablejob_{j,1}^{\circ_{j,t}}$ be the variable job scheduled on $\clausemach_{\kappa(j,t)}$ (see Table \ref{table:simple_possible_jobs}).
We choose the truth value of $x_j$ to be $\circ_{j,1}$.
The variable $x_j$ occurs exactly four times in the formulas, namely as a positive literal on the positions $\kappa(j,1)$ and $\kappa(j,2)$ and as a negative literal at position $\kappa(j,3)$ and $\kappa(j,4)$.
Because of the above observations (see Figure \ref{fig:truth_assignment_gadget}), we know that $\circ_{j,2} = \circ_{j,1}$ and $\circ_{j,3} = \circ_{j,4} \neq \circ_{j,1}$.
Hence, for each variable $x_j$ and occurrence $t\in[4]$, the truth configuration $\variablejob_{j,1}^{\circ_{j,t}}$ corresponds exactly to the truth value $x_j$ contributes to the clause given by $\kappa(j,t)$.
Lastly, for each clause $C_i$, there are exactly three variable jobs scheduled on the corresponding clause machines, and exactly one or two of these has $\true$-configuration, if $C_i$ is a 1-in-3-clause or 2-in-3-clause respectively (Claim \ref{claim:simple_CMach_triplet_truth_configuration}).
Hence, $C_i$ is fulfilled.

Next, we consider the case that a fulfilling truth assignment is given.
For each variable $x_j$, let $\triangleleft_j$ be the corresponding truth value and $\triangleright_j$ its negation.
We set $\circ_{j,t} = \triangleleft_j$ for $t\in\set{1,2}$ and $\circ_{j,t} = \triangleright_j$ for $t\in\set{3,4}$ and assign $\variablejob_{j,1}^{\circ_{j,t}}$ to $\clausemach_{\kappa(j,t)}$.
All the other jobs are assigned as indicated by Table \ref{table:simple_possible_jobs} and Figure \ref{fig:truth_assignment_gadget}.
It is easy to verify, that all jobs are assigned and each machine has a load of $T$.
\end{proof}
The basic approach of using some kind of truth assignment and clause gadget for reductions in the context of restricted assignment and unrelated scheduling has been used before, see, e.g., \cite{CMYZ17,EbenlendrKS14}.

\subparagraph{Refined Reduction.}

When trying to adapt the above reduction to the more restricted problem of \rai, we obviously have less latitude when defining the restrictions.
To deal with this, we introduce additional gadgets and encode much more information into the job sizes.
The idea of the reduction can be described as follows.
We arrange the truth assignment gadgets on the left and the clause gadgets on the right.
Consider the case that a truth assignment decision is made in the left most truth assignment gadget.
Information about this decision---called signal in the following---has to be passed on to the proper clause gadgets passing multiple other truth assignment and clause gadgets on the way.
This signal in the simple reduction simply corresponds to a variable job that is to be scheduled on its corresponding clause machine, and in order to prevent interaction with other gadgets, we could encode information about the corresponding variable into the size of the variable job.
However, this would lead to a super constant number of job sizes.
To avoid this, we introduce a new gadget called the bridge and highway gadget.
Very roughly speaking, the signal is passed on to the \emph{highway} via \emph{gateways}; the highway passes each following truth assignment gadget using \emph{bridges} and carries the signal to the proper clauses. 
Next, we give a detailed description and analysis of the refined reduction.

We adopt all the machines and jobs introduced in the simple reduction, but change the sizes and sets of eligible machines and introduce additional jobs and machines as well as private loads for \emph{every} machine.
We introduce the following jobs and machines:
\begin{itemize}
	
\item For each $j\in [n]$ and $t\in[4]$, we introduce one \emph{gateway machine} $\gatemach_{j,t}$.

\item For each $j\in [n]$, $t\in[4]$ and $j'\in \set{j+1,\dots n}$, we introduce two \emph{bridge machines} $\bridgein_{j,t,j'}$ and $\bridgeout_{j,t,j'}$.
Furthermore, we introduce two \emph{bridge jobs} $\bridgejob_{j,t,j'}^\true$ and $\bridgejob_{j,t,j'}^\false$.

\item For each $j\in [n]$, $t\in[4]$ and $j'\in \set{j,\dots n}$, we introduce two \emph{highway jobs} $\highwayjob_{j,t,j'}^\true$ and $\highwayjob_{j,t,j'}^\false$.

\end{itemize}
In order to define the intervals of eligible machines, we first need a total order of the machines.
We partition the machines into blocks, define an internal order for each block, and then define an order of the blocks.
Remember that $\kappa: [n]\times [4]\rightarrow [2m]\times [3]$ is a bijection indicating the positions of the occurrences of variables in the clauses.
In particular, $\kappa(j,1) = (i,s)$ indicates that the first positive occurrence of variable $x_j$ is in clause $C_i$ on position $s$, and $\kappa(j,2)$, $\kappa(j,3)$, and $\kappa(j,4)$ indicate analogue information for the second positve, first negative, and second negative occurrence of $x_j$.
\begin{itemize}
	
\item For each $j\in [n]$, we have a truth assignment block $\tblock_j$ containing the truth assignment machines $\truthmach_{j,1}$ and $\truthmach_{j,2}$ in this order.

\item For each $i\in[2m]$, we have a clause block $\cblock_i$ containing the clause machines $\clausemach_{i,s}$ for each $s\in [3]$ and ordered increasingly by $s$.

\item For each $j\in [n]$, we have a successor block $\sblock_j$ containing the gateway machines $\gatemach_{j,t}$ for each $t\in [4]$ and the bridge machines $\bridgeout_{j',t,j}$ for each $t\in [4]$ and $j' < j$.
For each machine, we define an index, namely $\kappa(j,t)$ for $\gatemach_{j,t}$ and $\kappa(j',t)$ for $\bridgeout_{j',t,j}$, and order the machines by the \emph{decreasing} lexicographical ordering of their indices.
For example, if $\bridgeout_{j_1,t_1,j},\bridgeout_{j_2,t_2,j},\gatemach_{j,t_3} \in\sblock_j$ and $\kappa(j_1, t_1) = (1,2)$, $\kappa(j_2, t_2) = (1,1)$ and $\kappa(j, t_3) = (2,3)$, then $\gatemach_{j,t_3}$ precedes $\bridgeout_{j_1,t_1,j}$ which in turn precedes $\bridgeout_{j_2,t_2,j}$.

\item  For each $j\in [n]$ with $j>1$, we have a predecessor block $\pblock_j$ containing the bridge machines $\bridgein_{j',t,j}$ for each $t\in [4]$ and $j' < j$.
Machine $\bridgein_{j',t,j}$ has index $\kappa(j',t)$ and the machines are ordered by the \emph{increasing} lexicographical ordering of their indices.
\end{itemize}
The blocks are ordered as follows:
\[(\tblock_1,\sblock_1,\pblock_2,\tblock_2,\sblock_2,\dots,\pblock_n,\tblock_n,\sblock_n,\cblock_1,\dots,\cblock_{2m})\]
The sets of eligible machines are specified in Table \ref{table:eligible_machs} and the job sizes in Table \ref{table:job_sizes}\footnote{Note that we have prioritized comprehensibility over small sizes. For instance, it is not hard to see that the columns in \cref{table:job_sizes} corresponding to the highway and clause jobs could be deleted and the reduction would still work.}.
In the appendix, we provide an example instance together with a figure (Figure \ref{fig:interval_example}) visualizing the ordering of the machines and the eligibility constraints.
Furthermore, Figure \ref{fig:interval} gives some intuition on the overall structure.
We have:
\begin{claim}\label{claim:refined_overall_load}
The overall size of the jobs is exactly $|\machs|T$.
\end{claim}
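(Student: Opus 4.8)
The plan is to verify \cref{claim:refined_overall_load} by a direct but carefully bookkept summation, mimicking the structure of the analogous claim in the simple reduction but now accounting for the many new gadgets. First I would count the machines block by block: for each $j\in[n]$ there are $2$ truth assignment machines in $\tblock_j$, $3$ clause machines per clause block (so $6m$ total over all $\cblock_i$), $4$ gateway machines plus $4(j-1)$ outgoing bridge machines in $\sblock_j$, and $4(j-1)$ incoming bridge machines in $\pblock_j$. Summing the bridge machines over $j$ gives $2\cdot 4\cdot\binom{n}{2} = 4n(n-1)$ of them, so $|\machs| = 2n + 6m + 4n + 4n(n-1) = 6n + 4n + 4n(n-1)$ after using $3m=2n$; I would simplify this to a clean closed form in $n$ and then multiply by $T$.

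Next I would tabulate the total job size, grouping jobs by type exactly as they appear in \cref{table:job_sizes}: the private loads (one per machine, so contributing a fixed amount per machine that combines with the other jobs on that machine to reach $T$), the clause machine/clause jobs and truth assignment jobs (whose contribution is essentially as in the simple reduction, adjusted by the new private-load convention), the $8n$ variable jobs, the bridge jobs ($2$ per triple $(j,t,j')$ with $j'>j$, i.e. $8\binom{n}{2}\cdot\tfrac{1}{1}$ appropriately counted), and the highway jobs ($2$ per triple $(j,t,j')$ with $j'\ge j$). Because the reduction is designed so that each machine is meant to receive a fixed small number of jobs whose sizes sum to exactly $T$, the cleanest way to organize the computation is to argue that the jobs partition (in the intended schedule) into groups of the right total per machine; but since the claim is purely arithmetic, I would instead just sum the left-hand side type by type and check equality with $|\machs|\cdot T$, using $3m=2n$ and the fact that there are equally many $1$-in-$3$ and $2$-in-$3$ clauses to pair up the $\circ_2$ contributions.

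The main obstacle is not conceptual difficulty but the sheer bookkeeping: the bridge and highway jobs are indexed by triples and their counts are quadratic in $n$, so I must be careful that the quadratic terms coming from the bridge machines' private loads exactly cancel against the quadratic terms from the bridge and highway job sizes, leaving an identity that reduces (via $3m=2n$) to a statement linear in $n$. Getting the ranges of $j'$ right — $j'\in\{j+1,\dots,n\}$ for bridges versus $j'\in\{j,\dots,n\}$ for highway jobs — is the step most prone to off-by-one errors, so I would double-check those counts against the small worked example in the appendix (Figure~\ref{fig:interval_example}). Once the type-by-type sums are written down, verifying $\sum_{\text{jobs}} p_j = |\machs| T$ is a routine (if tedious) algebraic identity, which I would present as a single displayed equation analogous to the one in the proof of the earlier claim.
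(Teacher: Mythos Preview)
Your plan is correct and is essentially the same approach as the paper's: count the machines (the paper gets $|\machs|=4n^2+6n$, matching your $2n+6m+4n+4n(n-1)$ after $3m=2n$), sum the job sizes type by type from \cref{table:job_sizes}, and check the arithmetic identity. The one simplification the paper adds, which you may want to adopt, is to exploit the decimal structure of the sizes in \cref{table:job_sizes}: since every digit is at most $2$ and at most four jobs land on any machine, there are no carries, so the identity $\sum p_j = |\machs|T$ can be verified one digit (column) at a time rather than by manipulating ten-digit numbers---this cuts the tedious bookkeeping you anticipate down to ten tiny sums.
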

\begin{claimproof}
This can be verified by basic arithmetic using Table \ref{table:job_sizes}.
For simplicity, this can also be done digit by digit.
We look at the last digit as an example.
Note that we have $4n^2 + 6n$ machines and the last digit of the makespan is $2$.
Summing up the last digits of the job sizes, on the other hand, yields $2n + 2n + n + n + n + n + 2n(n-1) + 2n(n+1) + 4n + n + n + n + n + 2n(n-1) + 2n(n-1) = 8n^2 + 12n$.
\end{claimproof}

\begin{table}
\centering
\caption{The sets of eligible machines for each job or job type, defined by the first and last eligible machine in the ordering. Note that in case of the highway jobs all four combinations of first and last machine are possible.}
\begin{tabularx}{1\textwidth}{Xp{4cm}p{4cm}}
\toprule
Job & First machine & Last machine \\\midrule
Clause job $\clausejob_{i,s}^{\circ_s}$ & $\clausemach_{i,1}$ & $\clausemach_{i,3}$ \\
Truth assignment job $\truthjob_{j}^{\circ}$ & $\truthmach_{j,1}$ & $\truthmach_{j,2}$\\
Variable job $\variablejob_{j,t}^{\circ}$ & $\truthmach_{j,\ceil{t/2}}$ & $\gatemach_{j,t}$\\
Bridge job $\bridgejob_{j,t,j'}^{\circ}$ & $\bridgein_{j,t,j'}$ & $\bridgeout_{j,t,j'}$ \\
Highway job $\highwayjob_{j,t,j'}^{\circ}$ & $\bridgeout_{j,t,j'}$, if $j'>j$, $\gatemach_{j,t}$, if $j' = j$ & $\bridgein_{j,t,j'+1}$ if $j'<n$, $\clausemach_{\kappa(j,t)}$, if $j' = n $  \\\bottomrule
\end{tabularx}
\label{table:eligible_machs}

\vspace*{\floatsep}

\caption{Table of job and machine types with job sizes and private loads and the makespan.
The second column states the number of jobs and machines of the respective types. 
Each horizontal sequence of numbers following the second column indicates the size of the respective job or private load. 
Each of the corresponding columns serves a function in the reduction: the first bounds the number of jobs on each machines; the following eight implement restrictions for the bridge, highway, clause, truth assignment and variable jobs; and the last encodes truth values.}
\begin{tabularx}{0.74\textwidth}{X|c||l|l|l|l|l|llll|l}
 & \# & & $\mathtt{B}$ & $\mathtt{H}$ & $\mathtt{C}$ & $\mathtt{T}$ & $\mathtt{V}$ & $\mathtt{V}$ & $\mathtt{V}$ & $\mathtt{V}$ & \\
\hline
$\clausejob_{i,s}^{\true}$   	& $3m = 2n$ & 1 & 0 & 0 & 1 & 0 & 0 & 0 & 0 & 0 & 0\\
$\clausejob_{i,s}^{\false}$  	& $3m = 2n$ & 1 & 0 & 0 & 1 & 0 & 0 & 0 & 0 & 0 & 1\\
\hline
$\truthjob_{j}^{\true}$      	& $n $ & 1 & 0 & 0 & 0 & 1 & 0 & 0 & 0 & 0 & 2\\
$\truthjob_{j}^{\false}$     	& $n $ & 1 & 0 & 0 & 0 & 1 & 0 & 0 & 0 & 0 & 0\\
\hline
$\variablejob_{j,1}^{\true}$ 	& $n $ & 1 & 0 & 0 & 0 & 0 & 1 & 0 & 0 & 0 & 0\\
$\variablejob_{j,2}^{\true}$ 	& $n $ & 1 & 0 & 0 & 0 & 0 & 0 & 1 & 0 & 0 & 0\\
$\variablejob_{j,3}^{\true}$ 	& $n $ & 1 & 0 & 0 & 0 & 0 & 0 & 0 & 1 & 0 & 0\\
$\variablejob_{j,4}^{\true}$  	& $n $ & 1 & 0 & 0 & 0 & 0 & 0 & 0 & 0 & 1 & 0\\
$\variablejob_{j,1}^{\false}$ 	& $n $ & 1 & 0 & 0 & 0 & 0 & 1 & 0 & 0 & 0 & 1\\
$\variablejob_{j,2}^{\false}$ 	& $n $ & 1 & 0 & 0 & 0 & 0 & 0 & 1 & 0 & 0 & 1\\
$\variablejob_{j,3}^{\false}$ 	& $n $ & 1 & 0 & 0 & 0 & 0 & 0 & 0 & 1 & 0 & 1\\
$\variablejob_{j,4}^{\false}$ 	& $n $ & 1 & 0 & 0 & 0 & 0 & 0 & 0 & 0 & 1 & 1\\
\hline
$\bridgejob_{j,t,j'}^{\true}$	& $2n(n-1) $ & 1 & 1 & 0 & 0 & 0 & 0 & 0 & 0 & 0 & 0\\
$\bridgejob_{j,t,j'}^{\false}$	& $2n(n-1) $ & 1 & 1 & 0 & 0 & 0 & 0 & 0 & 0 & 0 & 1\\
\hline
$\highwayjob_{j,t,j'}^{\true}$	& $2n(n+1) $ & 1 & 0 & 1 & 0 & 0 & 0 & 0 & 0 & 0 & 1\\
$\highwayjob_{j,t,j'}^{\false}$	& $2n(n+1) $ & 1 & 0 & 1 & 0 & 0 & 0 & 0 & 0 & 0 & 0\\
\hline
$\clausemach_{i,s}$ 			& $6m = 4n $ & 1 & 1 & 0 & 0 & 1 & 1 & 1 & 1 & 1 & 1\\
\hline
$\truthmach_{j,1}$ 				& $n $ & 0 & 1 & 1 & 1 & 0 & 0 & 0 & 1 & 1 & 0\\
$\truthmach_{j,2}$ 				& $n $ & 0 & 1 & 1 & 1 & 0 & 1 & 1 & 0 & 0 & 0\\
\hline
$\gatemach_{j,1}$ 				& $n $ & 1 & 1 & 0 & 1 & 1 & 0 & 1 & 1 & 1 & 1\\
$\gatemach_{j,2}$ 				& $n $ & 1 & 1 & 0 & 1 & 1 & 1 & 0 & 1 & 1 & 1\\
$\gatemach_{j,3}$ 				& $n $ & 1 & 1 & 0 & 1 & 1 & 1 & 1 & 0 & 1 & 1\\
$\gatemach_{j,4}$ 				& $n $ & 1 & 1 & 0 & 1 & 1 & 1 & 1 & 1 & 0 & 1\\
\hline
$\bridgein_{j,t,j'}$ 			& $2n(n-1) $ & 1 & 0 & 0 & 1 & 1 & 1 & 1 & 1 & 1 & 1\\
$\bridgeout_{j,t,j'}$ 			& $2n(n-1) $ & 1 & 0 & 0 & 1 & 1 & 1 & 1 & 1 & 1 & 1\\
\hline
Makespan $T$ 					& & 3 & 1 & 1 & 1 & 1 & 1 & 1 & 1 & 1 & 2\\
\end{tabularx}
\label{table:job_sizes}
\end{table}

Like for the simple reduction, we proof a sequence of easy claims concerning the properties of a schedule for the constructed instance with makespan $T$.
\begin{claim}\label{claim:refined_4_or_3_jobs_per_machine}
Each machine receives exactly 4 jobs if it is a truth assignment machine and exactly 3 jobs otherwise (including private loads).
\end{claim}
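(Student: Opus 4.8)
The plan is to follow the proof of \cref{claim:simple_3jobs} from the simple reduction, replacing the uniform size bound used there by the magnitude of the digit-encoded sizes in \cref{table:job_sizes}. First, \cref{claim:refined_overall_load} together with the standard observation from the preliminaries (if some machine has load $\neq T$, then some machine has load $>T$) shows that in any schedule of makespan $T$ every machine carries jobs of total size exactly $T$. Reading the ten columns of \cref{table:job_sizes} as the decimal digits of a number, we have $T = 3{,}111{,}111{,}112$; every job has leading ("count") digit $1$ and all its lower digits together contribute less than $1.1\cdot10^8$, so every job size lies in $[10^9,\,1.11\cdot10^9]$; the private load of every machine that is not a truth assignment machine likewise has leading digit $1$ and hence lies in $[10^9,\,1.11\cdot10^9]$; and the private load of a truth assignment machine has leading digit $0$ and is either $111{,}000{,}110$ or $111{,}011{,}000$.

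Now fix a machine $M$, let $P$ be its private load and $S$ the set of non-private jobs assigned to it, so that $\sum_{j\in S}p_j = T - p_P$. If $M$ is not a truth assignment machine, then $T - p_P \in [2{,}001{,}111{,}112,\,2{,}111{,}111{,}112]$: a single job has size at most $1.11\cdot10^9$, which is too small, and three jobs have total size at least $3\cdot10^9$, which is too large, so $|S| = 2$ and $M$ receives exactly $3$ jobs including its private load. If $M$ is a truth assignment machine, then $T - p_P$ equals $3{,}000{,}111{,}002$ or $3{,}000{,}100{,}112$, which is at least $3\cdot10^9$ and strictly less than $4\cdot10^9$: two jobs total at most $2.22\cdot10^9$, which is too small, and four jobs total at least $4\cdot10^9$, which is too large, so $|S| = 3$ and $M$ receives exactly $4$ jobs including its private load.

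I expect the only subtlety to be checking that these magnitude intervals exclude every wrong cardinality at both ends, and in particular that the exact value of $T-p_P$ on a truth assignment machine lands in the narrow band $[3\cdot10^9,\,4\cdot10^9)$ so that precisely three jobs fit; this is the reason the private loads of the truth assignment machines are pinned down to concrete values rather than merely bounded. As a by-product we obtain what the subsequent claims need: since each machine receives at most $4$ jobs (counting its private load) and every size has all its decimal digits at most $2$, each column sum on a machine is at most $4\cdot 2<10$, so no carries occur when the assigned sizes are added and from here on the columns of \cref{table:job_sizes} may be analysed independently, exactly as in the simple reduction after \cref{claim:simple_3jobs}.
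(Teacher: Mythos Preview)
Your argument is correct and is exactly the fleshed-out version of the paper's one-line proof, which simply states that the claim follows from \cref{claim:refined_overall_load} and the sizes in \cref{table:job_sizes}. Your explicit magnitude bounds and the concluding remark about the absence of carries are precisely what the paper uses (implicitly here and explicitly right after the claim), so there is no difference in approach.
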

\begin{claimproof}
This follows directly from \cref{claim:refined_overall_load} and the job sizes defined in Table \ref{table:job_sizes}.
\end{claimproof}
Since each machine receives at most $4$ jobs and each digit in the job sizes is bounded by $2$, we may consider each digit of the involved numbers independently, e.g., if two jobs and the makespan have a $1$ at the $\ell$-th digit, we already know that these jobs cannot be scheduled on the same machine.
This already implies a series of claims:
\begin{claim}\label{claim:refined_TJob_restriction_and_distribution}
The jobs $\truthjob_{j}^{\true}$ and $\truthjob_{j}^{\false}$ can exclusively be scheduled on $\truthmach_{j,1}$ and $\truthmach_{j,2}$, for each $j\in [n]$, and each of the two machines receives exactly one of the two jobs.
\end{claim}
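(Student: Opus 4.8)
The plan is to combine the eligibility intervals recorded in \cref{table:eligible_machs} with a digit-wise inspection of the sizes in \cref{table:job_sizes}.

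First I would determine the exact set of machines eligible for $\truthjob_j^{\true}$ and $\truthjob_j^{\false}$. By \cref{table:eligible_machs} the first and last eligible machine of $\truthjob_j^{\circ}$ are $\truthmach_{j,1}$ and $\truthmach_{j,2}$, and since the constructed instance is an \rai\ instance these jobs are eligible exactly on the machines lying between $\truthmach_{j,1}$ and $\truthmach_{j,2}$ in the fixed global order. Now $\truthmach_{j,1}$ and $\truthmach_{j,2}$ are precisely the two machines of the block $\tblock_j$, listed in this order, and distinct blocks occupy pairwise disjoint consecutive stretches of the order; hence no machine lies strictly between them and $\emachs(\truthjob_j^{\circ}) = \set{\truthmach_{j,1},\truthmach_{j,2}}$ for $\circ\in\set{\true,\false}$. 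This is the first assertion of the claim.

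For the distribution I would argue with the $\mathtt{T}$-column of \cref{table:job_sizes}. By \cref{claim:refined_4_or_3_jobs_per_machine} every machine receives at most four jobs counting its private load, and every digit of every size is at most $2$; therefore the sum of the digits assigned to a machine in any single position never exceeds $8$, so there is no carry and the digits may be treated independently. In a schedule of makespan $T$ the digits of the private load and of the jobs placed on a machine must then add up, position by position, to the corresponding digit of $T$. In the $\mathtt{T}$-position this digit of $T$ equals $1$; the only jobs with a nonzero entry (namely $1$) there are the truth assignment jobs $\truthjob_{j'}^{\true},\truthjob_{j'}^{\false}$ over all $j'\in[n]$; and among all machine private loads exactly those of $\truthmach_{j,1}$ and $\truthmach_{j,2}$ have entry $0$, whereas all others (clause, gateway and bridge machines) have entry $1$. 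Consequently each of $\truthmach_{j,1}$ and $\truthmach_{j,2}$ must receive exactly one truth assignment job, and by the first assertion (applied to every index) the only such jobs eligible on either machine are $\truthjob_j^{\true}$ and $\truthjob_j^{\false}$. Since these are exactly two jobs and there are exactly two machines, and each of the two jobs is eligible only on these two machines, each machine receives exactly one of the two jobs.

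I do not expect a genuine obstacle: the only points needing care are (i) that the blocks form consecutive stretches of the order, so that the eligibility interval of $\truthjob_j^{\circ}$ collapses to the two machines of $\tblock_j$, and (ii) the bookkeeping in the $\mathtt{T}$-column, i.e.\ that no other job type and no other machine's private load contributes a $1$ in that position. Both follow at once from the block ordering and from \cref{table:job_sizes}.
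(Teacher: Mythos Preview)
Your proposal is correct and follows exactly the approach the paper intends: the paper merely states that the claim is implied by \cref{claim:refined_4_or_3_jobs_per_machine} together with the digit-wise reading of \cref{table:job_sizes} and the eligibility intervals, and your write-up spells out precisely these two ingredients (the collapse of the interval of $\truthjob_j^\circ$ to $\tblock_j$, and the bookkeeping in the $\mathtt{T}$-column).
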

\begin{claim}\label{claim:refined_VJob_restriction_and_distribution}
The jobs $\variablejob_{j,t}^{\true}$ and $\variablejob_{j,t}^{\false}$ can exclusively be scheduled on $\truthmach_{j,\ceil{t/2}}$ and $\gatemach_{j,t}$, for each $j\in [n]$ and $t\in [4]$, and each of the two machines receives exactly one of the two jobs.
\end{claim}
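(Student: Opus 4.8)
The plan is to argue as for \cref{claim:refined_TJob_restriction_and_distribution}: a digit-by-digit counting argument on the sizes in Table~\ref{table:job_sizes}, combined with the observation (from Table~\ref{table:eligible_machs}) that the eligibility interval of a variable job is confined to the blocks $\tblock_j$ and $\sblock_j$.

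First I would pin down, from the fixed block order and the internal orders of $\tblock_j$ and $\sblock_j$, which machines lie in the interval from $\truthmach_{j,\ceil{t/2}}$ to $\gatemach_{j,t}$: since $\truthmach_{j,1},\truthmach_{j,2}\in\tblock_j$, $\gatemach_{j,t}\in\sblock_j$, and $\sblock_j$ directly follows $\tblock_j$ in the block order, this interval is contained in $\tblock_j\cup\sblock_j$ and consists of $\truthmach_{j,\ceil{t/2}}$, of $\truthmach_{j,2}$ when $t\le 2$, and of a prefix of $\sblock_j$ ending with $\gatemach_{j,t}$ --- hence possibly also some $\gatemach_{j,t'}$ with $t'\ne t$ and some $\bridgeout_{j',t',j}$ with $j'<j$. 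By Table~\ref{table:eligible_machs}, both $\variablejob_{j,t}^{\true}$ and $\variablejob_{j,t}^{\false}$ are eligible precisely on this machine set. Then I would focus on the $t$-th of the four $\mathtt{V}$-columns of Table~\ref{table:job_sizes} and use three facts read off that table: (i) among all jobs, only the variable jobs $\variablejob_{j',t}^{\true},\variablejob_{j',t}^{\false}$ (for every $j'$) have a nonzero entry, namely $1$, in this column; (ii) the makespan has entry $1$ there; and (iii) among the machines of $\tblock_j\cup\sblock_j$ only $\truthmach_{j,\ceil{t/2}}$ and $\gatemach_{j,t}$ have private-load entry $0$ in this column, every other truth assignment, gateway and $\bridgeout$ machine having entry $1$. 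By \cref{claim:refined_4_or_3_jobs_per_machine} each machine carries at most four jobs (private loads included), so there are no carries and in this digit the private-load entry of a machine plus the entries of the jobs placed on it must equal $1$. With (iii), this already forbids placing any $\variablejob_{j,t}^{\circ}$ on an eligible machine other than $\truthmach_{j,\ceil{t/2}}$ or $\gatemach_{j,t}$, which is the first assertion. For the second, I would additionally note that for $j'\ne j$ the eligibility interval of $\variablejob_{j',t}^{\circ}$ sits inside $\tblock_{j'}\cup\sblock_{j'}$, disjoint from $\tblock_j\cup\sblock_j$; hence on each of $\truthmach_{j,\ceil{t/2}}$ and $\gatemach_{j,t}$ the only jobs that can contribute to the considered digit are $\variablejob_{j,t}^{\true}$ and $\variablejob_{j,t}^{\false}$, and since that digit must reach $1$ on each of them while both jobs must be placed somewhere, each of the two machines receives exactly one of them.

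I do not anticipate a genuine obstacle: the argument is structurally identical to the one for \cref{claim:refined_TJob_restriction_and_distribution}. The only mildly delicate points are the case distinction on $t$ --- whether the interval contains only $\truthmach_{j,2}$ (when $t\ge 3$) or both truth assignment machines (when $t\le 2$) --- and the routine column-by-column verification in Table~\ref{table:job_sizes} that the $0/1$ pattern of the $\mathtt{V}$-columns over the truth assignment, gateway and bridge machines is really the tabulated one.
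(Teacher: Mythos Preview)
Your proposal is correct and follows essentially the same digit-by-digit approach as the paper, which itself only gestures at the argument by saying that the claims follow from the carry-free addition and Table~\ref{table:job_sizes}. Your explicit identification of the eligibility interval inside $\tblock_j\cup\sblock_j$ and your use of the $t$-th $\mathtt{V}$-column to exclude all other eligible machines, together with the counting that forces one job on each of the two endpoints, is exactly the intended reasoning spelled out in full.
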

\begin{claim}\label{claim:refined_BJob_restriction_and_distribution}
Bridge jobs can exclusively be scheduled on bridge machines and each bridge machine receives exactly one bridge job.
\end{claim}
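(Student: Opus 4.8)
The plan is to repeat, for the column labelled $\mathtt{B}$ in Table~\ref{table:job_sizes}, exactly the digit-bookkeeping argument that proves the two preceding claims. First I would invoke \cref{claim:refined_4_or_3_jobs_per_machine}: every machine receives at most four jobs (including its private load), and since every digit occurring in Table~\ref{table:job_sizes} is at most $2$, each column sum on a single machine stays below $10$, so there is no carry and the $\mathtt{B}$-digit can be examined in isolation. The makespan $T$ has $\mathtt{B}$-digit equal to $1$, so on every machine the $\mathtt{B}$-digits of the assigned jobs and the private load must add up to exactly $1$.

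Next I would read off from Table~\ref{table:job_sizes} which jobs and private loads have $\mathtt{B}$-digit $1$: precisely the bridge jobs $\bridgejob_{j,t,j'}^{\true},\bridgejob_{j,t,j'}^{\false}$ together with the private loads of the clause machines, the truth assignment machines, and the gateway machines. The private loads of the bridge machines $\bridgein_{j,t,j'}$ and $\bridgeout_{j,t,j'}$ are the only ones with $\mathtt{B}$-digit $0$. Hence a bridge job can be placed only on a bridge machine — every other machine already uses up its $\mathtt{B}$-budget on its own private load — and a bridge machine, whose private load contributes $0$ to the $\mathtt{B}$-digit, can carry at most one bridge job.

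Finally I would finish by counting. There are $4n(n-1)$ bridge jobs ($2n(n-1)$ of each truth configuration) and exactly $4n(n-1)$ bridge machines ($2n(n-1)$ of type $\bridgein$ and $2n(n-1)$ of type $\bridgeout$). All bridge jobs must be scheduled, all of them on bridge machines, and no bridge machine takes more than one; by pigeonhole each bridge machine receives exactly one bridge job, which is the assertion.

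I do not expect a real obstacle. The one way this differs from \cref{claim:refined_TJob_restriction_and_distribution} and \cref{claim:refined_VJob_restriction_and_distribution} is that the eligibility interval of a bridge job spans several blocks and therefore does not localize it on its own, so the $\mathtt{B}$-digit argument must be paired with the \emph{global} counting step rather than a purely local one. The only care required is to check from Table~\ref{table:job_sizes} that no machine other than a bridge machine has a private load with vanishing $\mathtt{B}$-digit, and to confirm that the tally of bridge jobs equals the tally of bridge machines, both being $4n(n-1)$.
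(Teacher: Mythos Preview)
Your proposal is correct and matches the paper's intended argument; the paper itself does not spell out a proof but simply notes that the digit-by-digit reasoning ``already implies'' this and the neighbouring claims, and your write-up is precisely the unpacking of that sentence. One small simplification: the counting step is unnecessary, since you have already observed that the $\mathtt{B}$-digit on each machine must sum to \emph{exactly} $1$, and on a bridge machine the private load contributes $0$ while bridge jobs are the only jobs with a nonzero $\mathtt{B}$-digit---so each bridge machine is forced to take exactly one bridge job directly, without pigeonhole.
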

\begin{claim}\label{claim:refined_HJob_restriction_and_distribution}
Highway jobs can exclusively be scheduled on bridge, gateway and clause machines and each such machine receives exactly one highway job.
\end{claim}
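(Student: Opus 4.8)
The plan is to reuse, for the column labelled $\mathtt{H}$ of \cref{table:job_sizes}, exactly the digit-by-digit bookkeeping that establishes \cref{claim:refined_TJob_restriction_and_distribution,claim:refined_VJob_restriction_and_distribution,claim:refined_BJob_restriction_and_distribution}. First I would read off two facts from the table: among all jobs, the highway jobs $\highwayjob_{j,t,j'}^{\true}$ and $\highwayjob_{j,t,j'}^{\false}$ are the only ones whose size has a nonzero $\mathtt{H}$-digit, and there it equals $1$; and among all private loads, the only ones with a nonzero $\mathtt{H}$-digit are those of the truth assignment machines $\truthmach_{j,1}$ and $\truthmach_{j,2}$, where it is again $1$, while every bridge, gateway and clause machine has $\mathtt{H}$-digit $0$ in its private load.

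Next I would invoke the standing no-carryover observation (each machine carries at most four jobs and each size-digit is at most $2$, so column sums stay below $10$ and the digits may be treated separately) together with the fact that in a makespan-$T$ schedule every machine receives load exactly $T$ (\cref{claim:refined_overall_load}). Since the $\mathtt{H}$-digit of $T$ is $1$, on every machine the $\mathtt{H}$-digits of its assigned jobs plus that of its private load must sum to exactly $1$. Hence a truth assignment machine, already at the quota through its private load, can carry no highway job, whereas every other machine --- i.e.\ every clause, gateway, bridge-in or bridge-out machine, equivalently every bridge, gateway or clause machine --- must carry exactly one job with $\mathtt{H}$-digit $1$, that is, exactly one highway job. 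This is precisely the claim; as a consistency check there are $2\cdot 2n(n+1) = 4n(n+1)$ highway jobs and $4n + 4n + 2\cdot 2n(n-1) = 4n(n+1)$ bridge, gateway and clause machines. (One could also argue the \enquote{only on bridge, gateway and clause machines} part from \cref{table:eligible_machs} and the block order, since a highway-job interval spans only adjacent blocks of the form $\sblock_{j'}$ and $\pblock_{j'+1}$, or $\sblock_n$ together with clause blocks, and therefore never meets a truth assignment block; but the digit argument is shorter and matches the style of the neighbouring claims.)

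I do not anticipate a real obstacle: this is the single-column instance of the template already used several times in a row. The only care required is bookkeeping --- checking that the $\mathtt{H}$-column of \cref{table:job_sizes} is indeed $0$ for every non-highway job and for the private load of every non-truth-assignment machine, and that the no-carryover remark applies so the $\mathtt{H}$-digit constraint can be imposed independently of the other columns.
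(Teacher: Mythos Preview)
Your proposal is correct and follows exactly the approach the paper intends: the paper groups \cref{claim:refined_TJob_restriction_and_distribution,claim:refined_VJob_restriction_and_distribution,claim:refined_BJob_restriction_and_distribution,claim:refined_HJob_restriction_and_distribution,claim:refined_CJob_distribution} together and states only that they follow from the digit-by-digit no-carryover observation applied to \cref{table:job_sizes}, without spelling out the individual column arguments. Your write-up is in fact more detailed than the paper's own treatment, and both the $\mathtt{H}$-column bookkeeping and the cardinality check $4n(n+1)$ are accurate.
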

\begin{claim}\label{claim:refined_CJob_distribution}
Each clause machine $\clausemach_{i,s}$ receives exactly one of the corresponding clause jobs $\clausejob_{i,s'}^{\circ_{s'}}$ with $s'\in [3]$.
\end{claim}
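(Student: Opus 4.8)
The plan is to prove the claim by a short counting argument, using the already-established distribution claims together with the interval structure. First I would apply \cref{claim:refined_4_or_3_jobs_per_machine}: a clause machine $\clausemach_{i,s}$ is not a truth assignment machine, so it receives exactly three jobs; one of these is its own private load (which must be placed there, being eligible on no other machine), so exactly two further jobs remain to be accounted for.

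Next I would read off \cref{table:eligible_machs} which jobs are eligible on $\clausemach_{i,s}$ at all. Truth assignment jobs, variable jobs, and bridge jobs all have their last eligible machine inside some block $\tblock_j$, $\sblock_j$, or $\sblock_{j'}$; since $\gatemach_{j,t}\in\sblock_j$ and $\bridgeout_{j,t,j'}\in\sblock_{j'}$, and every such block precedes $\cblock_1$ in the global order $(\tblock_1,\sblock_1,\pblock_2,\dots,\sblock_n,\cblock_1,\dots,\cblock_{2m})$, none of these jobs can reach any clause machine. Hence the only jobs eligible on $\clausemach_{i,s}$ besides its private load are the three clause jobs $\clausejob_{i,s'}^{\circ_{s'}}$ with $s'\in[3]$ (whose eligible interval is exactly $\clausemach_{i,1},\dots,\clausemach_{i,3}$) and certain highway jobs of the form $\highwayjob_{j,t,n}^{\circ}$ (a highway job reaches a clause machine only when $j'=n$).

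Finally I would invoke \cref{claim:refined_HJob_restriction_and_distribution}, which states that each clause machine receives exactly one highway job. Combining this with the two observations above forces the two non-private jobs on $\clausemach_{i,s}$ to be precisely one highway job and precisely one clause job $\clausejob_{i,s'}^{\circ_{s'}}$ with $s'\in[3]$, which is the claim (and incidentally also gives uniqueness). I do not expect a real obstacle, since the argument is pure bookkeeping; the only step requiring care is checking the eligibility intervals against the block order---that the intervals of the truth assignment, variable, and bridge jobs all end strictly before $\cblock_1$---which is exactly where the \rai interval restrictions are being exploited and therefore worth spelling out rather than leaving to the reader.
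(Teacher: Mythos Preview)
Your argument is correct, but it differs from the paper's intended route. The paper derives all of Claims~\ref{claim:refined_TJob_restriction_and_distribution}--\ref{claim:refined_CJob_distribution} uniformly from the digit-by-digit argument set up right after \cref{claim:refined_4_or_3_jobs_per_machine}: for \cref{claim:refined_CJob_distribution} one simply looks at the $\mathtt{C}$ column in \cref{table:job_sizes}. The private load of $\clausemach_{i,s}$ has a $0$ there, the makespan has a $1$, and the only jobs carrying a $1$ in that digit are clause jobs; hence exactly one clause job lands on $\clausemach_{i,s}$, and by the eligibility interval in \cref{table:eligible_machs} it must be one of $\clausejob_{i,1}^{\circ_1},\clausejob_{i,2}^{\circ_2},\clausejob_{i,3}^{\circ_3}$.

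Your approach instead combines the job count from \cref{claim:refined_4_or_3_jobs_per_machine}, an explicit check of which job types can reach the clause blocks via the interval order, and the already-proved \cref{claim:refined_HJob_restriction_and_distribution} to squeeze out the remaining slot. This is sound and there is no circularity, since \cref{claim:refined_HJob_restriction_and_distribution} precedes \cref{claim:refined_CJob_distribution}. The paper's digit argument is more direct and, importantly, makes transparent \emph{why} the dedicated $\mathtt{C}$ column was built into the size encoding; your argument, on the other hand, illustrates nicely how the interval structure alone (without that extra digit) would already force the conclusion once the highway-job distribution is known.
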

At this point, we already know that variable (and truth assignment) jobs can exclusively be scheduled on the first or last machine of their respective interval of eligible machines.
The next step is to show that the same holds for highway and bridge jobs.
To do so, the ordering of the bridge and highway machines is of critical importance.
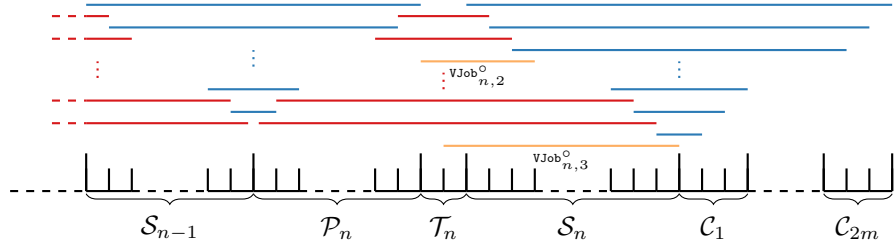
\begin{figure}
\centering

\begin{tikzpicture}
\pgfmathsetmacro{\mw}{0.3}
\pgfmathsetmacro{\gw}{1}
\pgfmathsetmacro{\bbh}{0.5}
\pgfmathsetmacro{\bsh}{0.3}
\pgfmathsetmacro{\ih}{0.3}
\pgfmathsetmacro{\ph}{0.22}
\pgfmathsetmacro{\sh}{0.15}

\foreach \x/\y in {0/0, 1/2, 2/6, 3/13, 4/19}{
    \draw[thick,dashed] (\x*\gw + \y*\mw,0) -- (\x*\gw + \gw + \y*\mw,0);
    \draw[thick] (\x*\gw + \gw + \y*\mw,\bsh) -- (\x*\gw + \gw + \y*\mw,0);
}

\foreach \x/\y in {1/0, 2/4, 3/8, 3/10, 4/16, 4/19, 5/19, 5/22}{
    \draw[thick] (\x*\gw + \y*\mw,\bbh) -- (\x*\gw + \y*\mw,0);
}

\foreach \x/\y in {1/0, 1/1, 2/2, 2/3, 2/4, 2/5, 3/6, 3/7, 3/8, 3/9, 3/10, 3/11, 3/12, 4/13, 4/14, 4/15, 4/16, 4/17, 4/18, 5/19, 5/20, 5/21}{
    \draw[thick] (\x*\gw + \y*\mw,0) -- (\x*\gw + \y*\mw + \mw,0);
    \draw[thick] (\x*\gw + \y*\mw + \mw,\bsh) -- (\x*\gw + \y*\mw + \mw,0);
}

\draw[decorate,decoration={brace,amplitude=4pt,raise=1pt,mirror},yshift=-0pt] (1*\gw + 0*\mw,0) -- (2*\gw + 4*\mw,0) node [midway,yshift=-14pt]{$\sblock_{n-1}$};
\draw[decorate,decoration={brace,amplitude=4pt,raise=1pt,mirror},yshift=-0pt] (2*\gw + 4*\mw,0) -- (3*\gw + 8*\mw,0) node [midway,yshift=-14pt]{$\pblock_{n}$};
\draw[decorate,decoration={brace,amplitude=4pt,raise=1pt,mirror},yshift=-0pt] (3*\gw + 8*\mw,0) -- (3*\gw + 10*\mw,0) node [midway,yshift=-14pt]{$\tblock_{n}$};
\draw[decorate,decoration={brace,amplitude=4pt,raise=1pt,mirror},yshift=-0pt] (3*\gw + 10*\mw,0) -- (4*\gw + 16*\mw,0) node [midway,yshift=-14pt]{$\sblock_{n}$};
\draw[decorate,decoration={brace,amplitude=4pt,raise=1pt,mirror},yshift=-0pt] (4*\gw + 16*\mw,0) -- (4*\gw + 19*\mw,0) node [midway,yshift=-14pt]{$\cblock_{1}$};
\draw[decorate,decoration={brace,amplitude=4pt,raise=1pt,mirror},yshift=-0pt] (5*\gw + 19*\mw,0) -- (5*\gw + 22*\mw,0) node [midway,yshift=-14pt]{$\cblock_{2m}$};

\draw[thick, color = color3] (3*\gw + 9*\mw,\bsh + \ih) -- (4*\gw + 16*\mw,\bsh + \ih) node [midway, yshift = - 0.2cm, text = black] {\tiny $\variablejob_{n,3}^{\circ}$};
\draw[thick, color = color1] (2*\gw + 4*\mw + 0.07,\bsh + 2*\ih) -- (4*\gw + 15*\mw,\bsh + 2* \ih);
\draw[thick, color = color1] (2*\gw + 5*\mw,\bsh + 3*\ih) -- (4*\gw + 14*\mw,\bsh + 3* \ih);

\draw[thick, color = color1 , dotted] (3*\gw + 9*\mw,\bsh + 3*\ih + \sh) -- (3*\gw + 9*\mw,\bsh + 3* \ih + \sh + \ph);

\draw[thick, color = color3] (3*\gw + 8*\mw,\bsh + 3*\ih + 2*\sh + \ph) -- (3*\gw + 13*\mw,\bsh + 3*\ih + 2*\sh + \ph) node [midway, yshift = -0.2cm, text = black] {\tiny $\variablejob_{n,2}^{\circ}$};
\draw[thick, color = color1] (3*\gw + 6*\mw,\bsh + 4*\ih + 2*\sh + \ph) -- (3*\gw + 12*\mw,\bsh + 4*\ih + 2*\sh + \ph);
\draw[thick, color = color1] (3*\gw + 7*\mw,\bsh + 5*\ih + 2*\sh + \ph) -- (3*\gw + 11*\mw,\bsh + 5*\ih + 2*\sh + \ph);

\draw[thick, color = color2] (4*\gw + 15*\mw,\bsh + 2* \ih - \sh) -- (4*\gw + 17*\mw,\bsh + 2* \ih - \sh);
\draw[thick, color = color2] (4*\gw + 14*\mw,\bsh + 3* \ih - \sh) -- (4*\gw + 18*\mw,\bsh + 3* \ih - \sh);
\draw[thick, color = color2] (4*\gw + 13*\mw,\bsh + 4* \ih - \sh) -- (4*\gw + 19*\mw,\bsh + 4* \ih - \sh);

\draw[thick, color = color2 , dotted] (4*\gw + 16*\mw,\bsh + 4*\ih + 0*\sh ) -- (4*\gw + 16*\mw,\bsh + 4* \ih + 0*\sh + \ph);

\draw[thick, color = color2] (3*\gw + 12*\mw,\bsh + 4*\ih + 1*\sh + \ph) -- (5*\gw + 20*\mw,\bsh + 4*\ih + 1*\sh + \ph);
\draw[thick, color = color2] (3*\gw + 11*\mw,\bsh + 5*\ih + 1*\sh + \ph) -- (5*\gw + 21*\mw,\bsh + 5*\ih + 1*\sh + \ph);
\draw[thick, color = color2] (3*\gw + 10*\mw,\bsh + 6*\ih + 1*\sh + \ph) -- (5*\gw + 22*\mw,\bsh + 6*\ih + 1*\sh + \ph);

\draw[thick, color = color2] (2*\gw + 3*\mw,\bsh + 3* \ih - \sh) -- (2*\gw + 5*\mw,\bsh + 3* \ih - \sh);
\draw[thick, color = color2] (2*\gw + 2*\mw,\bsh + 4* \ih - \sh) -- (2*\gw + 6*\mw,\bsh + 4* \ih - \sh);

\draw[thick, color = color2 , dotted] (2*\gw + 4*\mw,\bsh + 4*\ih + 1*\sh ) -- (2*\gw + 4*\mw,\bsh + 4* \ih + 1*\sh + \ph);

\draw[thick, color = color2] (1*\gw + 1*\mw,\bsh + 5*\ih + 1*\sh + \ph) -- (3*\gw + 7*\mw,\bsh + 5*\ih + 1*\sh + \ph);
\draw[thick, color = color2] (1*\gw + 0*\mw,\bsh + 6*\ih + 1*\sh + \ph) -- (3*\gw + 8*\mw,\bsh + 6*\ih + 1*\sh + \ph);

\draw[thick, color = color1] (1*\gw + 0*\mw,\bsh + 2*\ih) -- (2*\gw + 4*\mw - 0.07,\bsh + 2* \ih);
\draw[thick, color = color1 , dashed] (1*\gw - 1.5*\mw,\bsh + 2*\ih) -- (1*\gw + 0*\mw,\bsh + 2* \ih);
\draw[thick, color = color1] (1*\gw + 0*\mw,\bsh + 3*\ih) -- (2*\gw + 3*\mw,\bsh + 3* \ih);
\draw[thick, color = color1, dashed] (1*\gw - 1.5*\mw,\bsh + 3*\ih) -- (1*\gw + 0*\mw,\bsh + 3* \ih);

\draw[thick, color = color1 , dotted] (1*\gw + 0.5*\mw,\bsh + 3*\ih + 2*\sh) -- (1*\gw + 0.5*\mw,\bsh + 3* \ih + 2*\sh + \ph);

\draw[thick, color = color1] (1*\gw + 0*\mw,\bsh + 4*\ih + 2*\sh + \ph) -- (1*\gw + 2*\mw,\bsh + 4*\ih + 2*\sh + \ph);
\draw[thick, color = color1, dashed] (1*\gw - 1.5*\mw,\bsh + 4*\ih + 2*\sh + \ph) -- (1*\gw + 0*\mw,\bsh + 4*\ih + 2*\sh + \ph);
\draw[thick, color = color1] (1*\gw + 0*\mw,\bsh + 5*\ih + 2*\sh + \ph) -- (1*\gw + 1*\mw,\bsh + 5*\ih + 2*\sh + \ph);
\draw[thick, color = color1, dashed] (1*\gw -1.5*\mw,\bsh + 5*\ih + 2*\sh + \ph) -- (1*\gw + 0*\mw,\bsh + 5*\ih + 2*\sh + \ph);

\end{tikzpicture}
\caption{The bridge and highway gadget. The intervals of eligible machines of highway, bridge and variable jobs are depicted in blue, red and orange, respectively. In this example, variable $x_n$ occurs for the second time in its positive form in the last clause at the first position, and for the first time in its negative form in the first clause at the first position.}
\label{fig:interval}
\end{figure}
\begin{claim}\label{claim:refined_BJob_distribution_detail}
The jobs $\bridgejob_{j,t,j'}^{\true}$ and $\bridgejob_{j,t,j'}^{\false}$ can exclusively be scheduled on $\bridgein_{j,t,j'}$ and $\bridgeout_{j,t,j'}$, for each $j\in [n]$, $j'\in\set{j+1,\dots,n}$ and $t\in [4]$, and each of the two machines receives exactly one of the two jobs.
\end{claim}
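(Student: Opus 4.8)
The plan is to build on \cref{claim:refined_BJob_restriction_and_distribution}, which already guarantees that every bridge job is placed on a bridge machine and that every bridge machine carries exactly one bridge job. The first step is to observe that the problem decomposes according to the third index $j'$: the interval of eligible machines of a bridge job $\bridgejob_{j,t,j'}^{\circ}$ runs from $\bridgein_{j,t,j'}\in\pblock_{j'}$ to $\bridgeout_{j,t,j'}\in\sblock_{j'}$, hence lies inside $\pblock_{j'}\cup\tblock_{j'}\cup\sblock_{j'}$, and the only bridge machines occurring in these three blocks are the machines $\bridgein_{\cdot,\cdot,j'}$ and $\bridgeout_{\cdot,\cdot,j'}$. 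Thus a bridge job with third index $j'$ can only be placed on a bridge machine with third index $j'$. Since for a fixed $j'$ there are $8(j'-1)$ bridge jobs of this form and equally many such bridge machines, and since every bridge machine carries exactly one bridge job, the schedule restricts to a bijection between the bridge jobs and the bridge machines with third index $j'$; it remains to show this bijection sends $\set{\bridgejob_{j,t,j'}^{\true},\bridgejob_{j,t,j'}^{\false}}$ to $\set{\bridgein_{j,t,j'},\bridgeout_{j,t,j'}}$ for every admissible $(j,t)$.

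For this I would first make precise which bridge machines of $\pblock_{j'}\cup\sblock_{j'}$ actually lie in the interval of $\bridgejob_{j,t,j'}^{\circ}$. Recall that $\pblock_{j'}$ is ordered by \emph{increasing} index $\kappa(\cdot,\cdot)$, that $\sblock_{j'}$ is ordered by \emph{decreasing} index, and that $\bridgein_{j,t,j'}$ and $\bridgeout_{j,t,j'}$ both carry index $\kappa(j,t)$. Consequently the bridge machines strictly before $\bridgein_{j,t,j'}$ within $\pblock_{j'}$ are exactly those $\bridgein_{\hat j,\hat t,j'}$ with $\kappa(\hat j,\hat t)$ lexicographically smaller than $\kappa(j,t)$, the bridge machines strictly after $\bridgeout_{j,t,j'}$ within $\sblock_{j'}$ are exactly those $\bridgeout_{\hat j,\hat t,j'}$ with $\kappa(\hat j,\hat t)$ lexicographically smaller than $\kappa(j,t)$, and the machines sitting between $\bridgein_{j,t,j'}$ and $\bridgeout_{j,t,j'}$ that are not of the form $\bridgein_{\cdot,\cdot,j'}$ or $\bridgeout_{\cdot,\cdot,j'}$ are truth-assignment or gateway machines, which host no bridge jobs by \cref{claim:refined_BJob_restriction_and_distribution}. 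Hence the bridge machines on which $\bridgejob_{j,t,j'}^{\circ}$ is eligible are exactly those $\bridgein_{\hat j,\hat t,j'}$ and $\bridgeout_{\hat j,\hat t,j'}$ with $\kappa(\hat j,\hat t)$ lexicographically at least $\kappa(j,t)$; in particular the endpoints $\bridgein_{j,t,j'}$ and $\bridgeout_{j,t,j'}$ are always among them.

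With this in hand, I would finish by a downward induction over the $4(j'-1)$ pairs $(j,t)$ with $j<j'$, taken in decreasing lexicographic order of $\kappa(j,t)$. For the pair $(j^{\ast},t^{\ast})$ with the lexicographically largest index, the two jobs $\bridgejob_{j^{\ast},t^{\ast},j'}^{\true}$ and $\bridgejob_{j^{\ast},t^{\ast},j'}^{\false}$ are eligible only on the two machines $\bridgein_{j^{\ast},t^{\ast},j'}$ and $\bridgeout_{j^{\ast},t^{\ast},j'}$, so these two machines carry precisely these two jobs. Deleting them and passing to the next pair in decreasing order, its two bridge jobs are eligible on four machines, two of which are now occupied, so again they are forced onto the two endpoints of that pair; iterating proves the statement for all pairs with third index $j'$, and letting $j'$ range over $\set{2,\dots,n}$ establishes the claim.

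I expect the only genuinely delicate point to be the eligibility characterization in the middle step: one has to keep track of the interaction of the two \emph{opposite} internal orderings of $\pblock_{j'}$ and $\sblock_{j'}$ with the positions of $\bridgein_{j,t,j'}$ and $\bridgeout_{j,t,j'}$, and invoke \cref{claim:refined_BJob_restriction_and_distribution} to dismiss the truth-assignment and gateway machines lying in between. Once that is set up, the matching argument is the standard ``peel off the extremal element'' induction and requires no computation.
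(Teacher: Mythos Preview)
Your proposal is correct and follows essentially the same approach as the paper: decompose by the third index $j'$, exploit the nested interval structure induced by the opposite orderings of $\pblock_{j'}$ and $\sblock_{j'}$, and peel off one pair at a time. The only cosmetic difference is the direction of the induction: the paper orders the pairs by \emph{increasing} $\kappa$ and argues from the machine side (the outermost machines $\bridgein_{j_1,t_1,j'}$ and $\bridgeout_{j_1,t_1,j'}$ can only be reached by the widest bridge-job pair, hence must receive it), whereas you order by \emph{decreasing} $\kappa$ and argue from the job side (the narrowest bridge-job pair is eligible on only two bridge machines, hence must land there). These are dual versions of the same peeling argument, and your more explicit eligibility characterization is a welcome addition that the paper leaves to Figure~\ref{fig:interval}.
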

\begin{claimproof}
The claim can be proved with a simple inductive argument:
Let $j'\in \set{2,\dots, n}$ and, furthermore, $(j_\ell,t_\ell)\in[j'-1]\times[4]$ denote the $\ell$-th element from $[j'-1]\times[4]$ when ordering the pairs $(j,t)\in[j'-1]\times[4]$ by the increasing lexicographical ordering of the pairs $\kappa(j,t)$.
Considering the ordering of the machines and the job restrictions, $\bridgejob_{j_1,t_1,j'}^{\true}$ and $\bridgejob_{j_1,t_1,j'}^{\false}$ are the only bridge jobs that can be scheduled on $\bridgein_{j_1,t_1,j'}$ and $\bridgeout_{j_1,t_1,j'}$ (see Figure \ref{fig:interval}). 
Hence, the claim has to hold for $(j_1,t_1)$.
But then again $\bridgejob_{j_2,t_2,j'}^{\true}$ and $\bridgejob_{j_2,t_2,j'}^{\false}$ are the only remaining bridge jobs that can be scheduled on $\bridgein_{j_2,t_2,j'}$ and $\bridgeout_{j_2,t_2,j'}$, and so on.
\end{claimproof}
\begin{claim}\label{claim:refined_HJob_distribution_detail}
The jobs $\highwayjob_{j,t,j'}^{\true}$ and $\highwayjob_{j,t,j'}^{\false}$ can exclusively be scheduled on machine $\mathtt{X}$ and $\mathtt{Y}$, for each $j\in [n]$,  $j'\geq j$, and $t\in [4]$; where $\mathtt{X} = \bridgeout_{j,t,j'}$ if $j'>j$, and $\mathtt{X} = \gatemach_{j,t}$ otherwise, and $\mathtt{Y} = \bridgein_{j,t,j'+1}$ if $j'< n$, and $\mathtt{Y} = \clausemach_{\kappa(j,t)}$ otherwise.
Furthermore, each of the two machines receives exactly one of the two jobs.
\end{claim}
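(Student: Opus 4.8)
Throughout, write $\mathtt{X}$ and $\mathtt{Y}$ for the two machines named in the claim (as functions of $j,t,j'$). The plan is to lift the peeling argument of \cref{claim:refined_BJob_distribution_detail} one level up. First I would group the highway jobs by their last index $j'$ and show that, for every $j'\in[n]$, the highway jobs $\highwayjob_{j,t,j'}^{\circ}$ (over all admissible $j,t$ and both configurations) together with a particular machine set $B_{j'}$ form a \emph{closed subsystem}: each such highway job is eligible only on machines of $B_{j'}$, and, among highway jobs, each machine of $B_{j'}$ is eligible only for highway jobs of last index $j'$. Here $B_{j'} := \sblock_{j'}\cup\pblock_{j'+1}$ for $j'<n$ and $B_n := \sblock_n\cup\cblock_1\cup\dots\cup\cblock_{2m}$. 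Given this and \cref{claim:refined_HJob_restriction_and_distribution} — which already guarantees that every machine of $B_{j'}$ receives exactly one highway job and that highway jobs land only on bridge, gateway, and clause machines — the highway jobs of last index $j'$ must be perfectly matched to the machines of $B_{j'}$, and it remains to analyse each such matching on its own.

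To prove closedness I would only use the block order and the first/last entries of \cref{table:eligible_machs}. The eligibility interval of $\highwayjob_{j,t,j'}^{\circ}$ begins in $\sblock_{j'}$ (at $\gatemach_{j,t}$ if $j'=j$ and at $\bridgeout_{j,t,j'}$ otherwise, each of index $\kappa(j,t)$) and ends in $\pblock_{j'+1}$ (at $\bridgein_{j,t,j'+1}$, of index $\kappa(j,t)$) — or, when $j'=n$, at $\clausemach_{\kappa(j,t)}$ in a clause block; since $\sblock_{j'}$ is immediately followed by $\pblock_{j'+1}$ (respectively by the clause blocks) in the block order, the whole interval stays inside $B_{j'}$. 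Conversely, a machine of $\sblock_{j'}$ or of $\pblock_{j'+1}$, or a clause machine, lies in the interval of some $\highwayjob_{j,t,c}^{\circ}$ only when $c=j'$, because that interval is always confined to $\sblock_c\cup\pblock_{c+1}$ (respectively to $\sblock_n$ together with the clause blocks). A short count then matches the two sides: $\sblock_{j'}$ and $\pblock_{j'+1}$ have $4j'$ machines each, $\sblock_n$ has $4n$ machines, there are $6m=4n$ clause machines, and there are $8j'$ (respectively $8n$) highway jobs of last index $j'$.

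Finally I would resolve each of these matchings by peeling from the centre. Writing $\rho_1>\dots>\rho_{4j'}$ for the indices $\kappa(j,t)$ with $j\le j'$, $t\in[4]$ listed in decreasing lexicographic order — these are exactly the indices carried by the machines of $\sblock_{j'}$ and, as the same set, those carried by $\pblock_{j'+1}$ (for $j'=n$ this set is all of $[2m]\times[3]$, which is also the set of clause-machine indices) — the ordering rules make the machines of $B_{j'}$, read in the total order, carry the palindromic index sequence $\rho_1,\dots,\rho_{4j'},\rho_{4j'},\dots,\rho_1$. For $\kappa(j,t)=\rho_k$ the machine $\mathtt{X}$ is the one of index $\rho_k$ in $\sblock_{j'}$, i.e.\ position $k$, the machine $\mathtt{Y}$ is the one of index $\rho_k$ in $\pblock_{j'+1}$ (or the corresponding clause machine), i.e.\ position $8j'+1-k$, and the two jobs $\highwayjob_{j,t,j'}^{\true},\highwayjob_{j,t,j'}^{\false}$ are eligible exactly on positions $k,\dots,8j'+1-k$. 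These windows are nested, so the two jobs of index $\rho_{4j'}$ are forced onto the two central machines, the two of index $\rho_{4j'-1}$ onto the next pair, and inductively the two jobs of index $\rho_k$ onto positions $k$ and $8j'+1-k$, that is onto $\mathtt{X}$ and $\mathtt{Y}$, one configuration on each (again because each of these machines carries exactly one highway job). I expect the main obstacle to be the bookkeeping in the closedness step — confirming from the block order that each highway interval really spans only two consecutive blocks so that distinct $j'$ cannot interact — together with the parallel index bookkeeping that turns each level into the clean palindromic instance; after that, the peeling is as mechanical as in \cref{claim:refined_BJob_distribution_detail}.
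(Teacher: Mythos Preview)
Your proposal is correct and is essentially the approach the paper intends: the paper's own proof is a one-liner pointing back to the nested-interval peeling of \cref{claim:refined_BJob_distribution_detail} ``with reversed orderings'', and your argument is a faithful, more explicit unpacking of that. The only cosmetic difference is that you peel from the centre (the narrowest pair $\rho_{4j'}$ is forced first) whereas the direct analogue of the bridge-job proof peels from the outside (only the widest pair $\rho_1$ reaches the two extremal positions); both directions are valid for the same nested system, and your explicit closedness/counting step for the sets $B_{j'}$ is exactly the bookkeeping the paper leaves implicit.
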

\begin{claimproof}
We can use the same argument (with reversed orderings) as we did in the last claim.
It is only slightly more complicated, because more machine types are involved.
\end{claimproof}
Summing up, each job except for clause jobs may only be scheduled on the first or last machine of their interval of eligible machines, and each of these machines receives either the respective job in $\true$- or $\false$-configuration.
Considering this distribution of the jobs and the last digit of the size vectors, we get the following two claims:
\begin{claim}\label{claim:refined_same_truth_configuration}
For any machine, the jobs assigned to this machine all have the same truth configuration (excluding private loads).
\end{claim}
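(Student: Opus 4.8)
The plan is to argue machine by machine, using the structural claims already proved (\cref{claim:refined_4_or_3_jobs_per_machine,claim:refined_TJob_restriction_and_distribution,claim:refined_VJob_restriction_and_distribution,claim:refined_BJob_restriction_and_distribution,claim:refined_HJob_restriction_and_distribution,claim:refined_CJob_distribution,claim:refined_HJob_distribution_detail}) to pin down, for each of the five machine types occurring in the instance, the exact multiset of jobs that the machine must receive in any schedule of makespan $T$ -- up to the truth configuration of the individual jobs. The claim then drops out by inspecting a single digit, namely the last (truth-encoding) digit of the size vectors in \cref{table:job_sizes}.

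First I would record two preliminaries. In a schedule of makespan $T$ every machine carries load \emph{exactly} $T$, since the total job size is $|\machs|T$ by \cref{claim:refined_overall_load}; and by \cref{claim:refined_4_or_3_jobs_per_machine} a truth assignment machine carries four jobs and every other machine three, private loads included. Hence the sum of the last digits of the job sizes on one machine is at most $4\cdot 2 = 8 < 10$, so no carry ever reaches the last digit, and this sum is forced to equal the last digit of $T$, which is $2$.

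Next I would run through the machine types. For a truth assignment machine $\truthmach_{j,q}$, \cref{claim:refined_TJob_restriction_and_distribution,claim:refined_VJob_restriction_and_distribution} force its three non-private jobs to be one truth assignment job $\truthjob_{j}^{\circ}$ and the two variable jobs $\variablejob_{j,t}^{\circ}$ with $\ceil{t/2}=q$, each in some configuration, while its private load has last digit $0$. Since $\truthjob_{j}^{\true}$ and $\truthjob_{j}^{\false}$ have last digits $2$ and $0$, and a variable job has last digit $0$ in $\true$- and $1$ in $\false$-configuration, the requirement ``the last digits sum to $2$'' admits only two solutions: the truth job is in $\true$-configuration and so are both variable jobs, or all three are in $\false$-configuration. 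For the four remaining machine types the argument is the same but even simpler. A clause machine $\clausemach_{i,s}$ receives, besides a private load of last digit $1$, exactly one clause job (\cref{claim:refined_CJob_distribution}) and, by \cref{claim:refined_HJob_restriction_and_distribution,claim:refined_HJob_distribution_detail}, exactly one highway job; a gateway machine $\gatemach_{j,t}$ receives a private load of last digit $1$, one variable job (\cref{claim:refined_VJob_restriction_and_distribution}) and one highway job; and each bridge machine receives a private load of last digit $1$, one bridge job (\cref{claim:refined_BJob_restriction_and_distribution}) and one highway job. In all these cases a clause, variable or bridge job has last digit $0$ in $\true$- and $1$ in $\false$-configuration, whereas a highway job has last digit $1$ in $\true$- and $0$ in $\false$-configuration; subtracting the private digit $1$, the two remaining last digits must sum to $1$, which forces the two jobs into the same truth configuration.

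I do not expect a genuine obstacle: the claim is a bookkeeping consequence of the job-to-machine structure established earlier together with the way truth values were placed into the last digit of the size vectors. The only point requiring care is completeness of the case distinction -- one must verify, for every machine type, that the set of jobs it receives has genuinely been determined (not merely constrained) by the cited claims, cross-checking against the job counts of \cref{claim:refined_4_or_3_jobs_per_machine}; once that is done the rest is immediate arithmetic on one digit.
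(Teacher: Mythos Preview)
Your argument is correct and is exactly the approach the paper intends: the paper merely remarks that the claim follows by ``considering this distribution of the jobs and the last digit of the size vectors'', and you have spelled this out machine type by machine type. The one minor redundancy is that for clause machines you invoke \cref{claim:refined_HJob_distribution_detail} in addition to \cref{claim:refined_HJob_restriction_and_distribution}, whereas the latter already states that each clause machine receives exactly one highway job; otherwise the bookkeeping matches.
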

\begin{claim}\label{claim:refined_CMach_triplet_truth_configuration}
For each $i\in [2m]$, the three clause machines corresponding to $i$ receive exactly one highway job with $\true$-configuration, if $C_i$ is a 1-in-3-clause, and exactly two such jobs, if $C_i$ is a 2-in-3-clause.
\end{claim}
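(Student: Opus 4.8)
The plan is to reuse the load-counting idea behind \cref{claim:simple_CMach_triplet_truth_configuration}, now extracting the relevant information from the last column of \cref{table:job_sizes}. First I would assemble what the earlier claims already tell us about the three clause machines $\clausemach_{i,1},\clausemach_{i,2},\clausemach_{i,3}$ of the clause $C_i$: by \cref{claim:refined_4_or_3_jobs_per_machine} each of them receives exactly three jobs, by \cref{claim:refined_CJob_distribution,claim:refined_HJob_restriction_and_distribution} these are precisely the machine's private load, one clause job and one highway job, and---since the clause jobs $\clausejob_{i,1}^{\circ_1},\clausejob_{i,2}^{\circ_2},\clausejob_{i,3}^{\circ_3}$ are eligible only on these three machines while each of the three machines gets exactly one of them---the three clause jobs are assigned bijectively to the three clause machines.

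Next I would look only at the last digit of the size vectors. Since no carry occurs when adding job sizes on a single machine (every digit is at most $2$ and at most four jobs share a machine, by \cref{claim:refined_4_or_3_jobs_per_machine}) and every clause machine has total load $T$, the last digits of the nine jobs lying on the triple of clause machines sum to $3\cdot 2 = 6$; the three private loads account for $3$ of this. Reading off \cref{table:job_sizes}, a clause job adds $1$ to the last digit exactly when it has $\false$-configuration, whereas a highway job adds $1$ exactly when it has $\true$-configuration. Writing $k$ for the number of the clause jobs $\clausejob_{i,1}^{\circ_1},\clausejob_{i,2}^{\circ_2},\clausejob_{i,3}^{\circ_3}$ that are in $\true$-configuration, the clause jobs on the triple contribute $3-k$, so the highway jobs must contribute $6-3-(3-k)=k$, i.e.\ exactly $k$ of the highway jobs on the clause machines are in $\true$-configuration.

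Finally I would evaluate $k$ from the definition of the clause-job configurations: $\circ_1=\true$, $\circ_3=\false$, and $\circ_2=\false$ if $C_i$ is a $1$-in-$3$-clause and $\circ_2=\true$ otherwise, so $k=1$ for a $1$-in-$3$-clause and $k=2$ for a $2$-in-$3$-clause, as claimed. (Alternatively one can avoid the digit arithmetic and simply invoke \cref{claim:refined_same_truth_configuration}: the highway job and the clause job on any clause machine share their truth configuration, so the number of $\true$-configuration highway jobs on the triple equals $k$.) I do not expect a genuine obstacle here---the statement is pure bookkeeping once the distribution of jobs is fixed; the only thing requiring attention is that the argument hinges on the clause jobs being spread one per clause machine, which is why the preparatory step must come first.
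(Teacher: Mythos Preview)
Your proposal is correct and matches the paper's (implicit) argument: the paper simply states that \cref{claim:refined_same_truth_configuration} and \cref{claim:refined_CMach_triplet_truth_configuration} both follow from ``the distribution of the jobs and the last digit of the size vectors'', and your write-up spells out precisely this bookkeeping. Your parenthetical alternative via \cref{claim:refined_same_truth_configuration} is arguably the cleaner route (and is what the paper's Table~\ref{table:refined_possible_schedules} encodes), but the direct digit-sum over the triple works just as well.
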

The former property together with the possible job distribution determined so far implies that there are only few possible schedules for each machine. 
We summarize these schedules in Table \ref{table:refined_possible_schedules}.
\begin{table}
\centering
\caption{For each machine there are only few possible jobs that may be assigned to it in a schedule with makespan $T$. Each set corresponds to one of the possible schedules.}
\begin{tabular}{ll} 
\toprule
Machine & Possible Schedule \\\midrule
$\truthmach_{j,1}$ & $\set{\truthjob_{j}^{\true}, \variablejob_{j,1}^\true, \variablejob_{j,2}^\true}$, $\set{\truthjob_{j}^{\false}, \variablejob_{j,1}^\false, \variablejob_{j,2}^\false}$\\
$\truthmach_{j,2}$ & $\set{\truthjob_{j}^{\true}, \variablejob_{j,3}^\true, \variablejob_{j,4}^\true}$, $\set{\truthjob_{j}^{\false}, \variablejob_{j,3}^\false, \variablejob_{j,4}^\false}$\\
$\gatemach_{j,t}$ & $\set{\variablejob_{j,t}^{\true}, \highwayjob_{j,t,j}^{\true}}$, $\set{\variablejob_{j,t}^{\false}, \highwayjob_{j,t,j}^{\false}}$\\
$\bridgein_{j,t,j'}$ & $\set{\bridgejob_{j,t,j'}^{\true}, \highwayjob_{j,t,j'-1}^{\true}}$, $\set{\bridgejob_{j,t,j'}^{\false}, \highwayjob_{j,t,j'-1}^{\false}}$\\
$\bridgeout_{j,t,j'}$ & $\set{\bridgejob_{j,t,j'}^{\true}, \highwayjob_{j,t,j'}^{\true}}$, $\set{\bridgejob_{j,t,j'}^{\false}, \highwayjob_{j,t,j'}^{\false}}$ \\
$\clausemach_{i,s}$ (1-in-3) & $\set{\clausejob_{i,1}^{\true}, \highwayjob_{\kappa^{-1}(i,s),n}^{\true}}$, $\set{\clausejob_{i,2}^{\false}, \highwayjob_{\kappa^{-1}(i,s),n}^{\false}}$,$\set{\clausejob_{i,3}^{\false}, \highwayjob_{\kappa^{-1}(i,s),n}^{\false}}$\\
$\clausemach_{i,s}$ (2-in-3) & $\set{\clausejob_{i,1}^{\true}, \highwayjob_{\kappa^{-1}(i,s),n}^{\true}}$, $\set{\clausejob_{i,2}^{\true}, \highwayjob_{\kappa^{-1}(i,s),n}^{\true}}$,$\set{\clausejob_{i,3}^{\false}, \highwayjob_{\kappa^{-1}(i,s),n}^{\false}}$\\\bottomrule
\end{tabular}
\label{table:refined_possible_schedules}
\end{table}
Furthermore, we can infer that the truth assignment gadget works essentially the same as before (see Figure \ref{fig:truth_assignment_gadget}):
\begin{claim}\label{claim:refined_truth_gadget}
Let $j\in[n]$. The truth configuration of any job scheduled on $\truthmach_{j,1}$ is distinct from the truth configuration of any job scheduled on $\truthmach_{j,2}$.
\end{claim}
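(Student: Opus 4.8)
The plan is to obtain \cref{claim:refined_truth_gadget} as an immediate consequence of the two preceding claims, \cref{claim:refined_TJob_restriction_and_distribution} and \cref{claim:refined_same_truth_configuration}, so almost no new work is required. Fix $j\in[n]$. First I would invoke \cref{claim:refined_TJob_restriction_and_distribution}: exactly one of the two truth assignment jobs $\truthjob_{j}^{\true}$, $\truthjob_{j}^{\false}$ is scheduled on $\truthmach_{j,1}$ and the other on $\truthmach_{j,2}$. In particular, each of these two machines carries at least one job that possesses a truth configuration, so the notion of \enquote{the} truth configuration of each machine (excluding its private load) is well defined.

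Next, I would apply \cref{claim:refined_same_truth_configuration} to each of the two machines: all non-private-load jobs on $\truthmach_{j,q}$ share a single truth configuration $\circ_q\in\set{\true,\false}$ for $q\in[2]$. Since the truth assignment job placed on $\truthmach_{j,q}$ is itself a non-private-load job, its own configuration equals $\circ_q$. But the two truth assignment jobs $\truthjob_{j}^{\true}$ and $\truthjob_{j}^{\false}$ have distinct configurations by construction, and they lie on the two different machines; hence $\circ_1\neq\circ_2$, which is exactly the assertion of the claim. Equivalently, one may read this off \cref{table:refined_possible_schedules} directly: each of $\truthmach_{j,1}$ and $\truthmach_{j,2}$ has only two admissible schedules, one of each configuration, and picking the same configuration for both would require the single job $\truthjob_{j}^{\true}$ (resp.\ $\truthjob_{j}^{\false}$) to be used on two machines at once.

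I do not expect any real obstacle here; the substance has already been carried by the earlier claims, in particular by the \enquote{no carryover} observation following \cref{claim:refined_4_or_3_jobs_per_machine} that underlies both \cref{claim:refined_same_truth_configuration} and \cref{claim:refined_TJob_restriction_and_distribution}. The only point that needs mild care is bookkeeping about private loads: the statement, like \cref{claim:refined_same_truth_configuration}, implicitly ranges only over jobs that actually carry a truth configuration, and the argument above respects that throughout.
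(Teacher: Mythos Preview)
Your proposal is correct and follows exactly the implicit argument the paper relies on: the paper does not spell out a proof for this claim but simply notes that it follows from the preceding discussion (\cref{claim:refined_TJob_restriction_and_distribution}, \cref{claim:refined_same_truth_configuration}, and \cref{table:refined_possible_schedules}), which is precisely what you unpack. Your remark about private loads not carrying a truth configuration is the only subtlety, and you handle it correctly.
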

Lastly, we can show that the bridge and highway gadget works as well:
\begin{claim}\label{claim:refined_signal_is_passed_on}
Let $j\in [n]$ and $t\in[4]$. The variable job scheduled on $\truthmach_{j,\ceil{t/2}}$ and the highway job scheduled on $\clausemach_{\kappa(j,t)}$ have the same truth configuration. 
\end{claim}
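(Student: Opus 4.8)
Let $j\in[n]$ and $t\in[4]$. The variable job scheduled on $\truthmach_{j,\ceil{t/2}}$ and the highway job scheduled on $\clausemach_{\kappa(j,t)}$ have the same truth configuration.

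The plan is to trace the ``signal'' for the fixed pair $(j,t)$ along the chain of machines that carry the associated highway jobs, using \cref{table:refined_possible_schedules} at each step. First I would fix notation: let $\circ$ denote the truth configuration of the variable job $\variablejob_{j,t}^{\circ}$ assigned to $\truthmach_{j,\ceil{t/2}}$. By \cref{claim:refined_VJob_restriction_and_distribution}, the two jobs $\variablejob_{j,t}^{\true}$ and $\variablejob_{j,t}^{\false}$ are split between $\truthmach_{j,\ceil{t/2}}$ and $\gatemach_{j,t}$, so $\gatemach_{j,t}$ receives $\variablejob_{j,t}^{\bar\circ}$, where $\bar\circ$ is the negation of $\circ$. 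Now consult the row for $\gatemach_{j,t}$ in \cref{table:refined_possible_schedules}: the only possible schedules are $\set{\variablejob_{j,t}^{\true},\highwayjob_{j,t,j}^{\true}}$ and $\set{\variablejob_{j,t}^{\false},\highwayjob_{j,t,j}^{\false}}$, so the highway job $\highwayjob_{j,t,j}^{\bar\circ}$ is placed on $\gatemach_{j,t}$, which by \cref{claim:refined_HJob_distribution_detail} forces $\highwayjob_{j,t,j}^{\circ}$ onto the other eligible machine of that highway job, namely $\bridgein_{j,t,j+1}$ (or $\clausemach_{\kappa(j,t)}$ directly if $j=n$).

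Next I would push this forward by induction on $j'$ ranging over $j,j+1,\dots,n$. The inductive hypothesis is that the highway job $\highwayjob_{j,t,j'}^{\circ}$ (in configuration $\circ$) is scheduled on its ``right'' endpoint machine, which is $\bridgein_{j,t,j'+1}$ when $j'<n$ and $\clausemach_{\kappa(j,t)}$ when $j'=n$. For the inductive step with $j'<n$: the row for $\bridgein_{j,t,j'+1}$ in \cref{table:refined_possible_schedules} shows its only schedules are $\set{\bridgejob_{j,t,j'+1}^{\true},\highwayjob_{j,t,j'}^{\true}}$ and $\set{\bridgejob_{j,t,j'+1}^{\false},\highwayjob_{j,t,j'}^{\false}}$; since $\highwayjob_{j,t,j'}^{\circ}$ sits there, the bridge job $\bridgejob_{j,t,j'+1}^{\circ}$ must join it, hence by \cref{claim:refined_BJob_distribution_detail} the other bridge job $\bridgejob_{j,t,j'+1}^{\bar\circ}$ goes to $\bridgeout_{j,t,j'+1}$. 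The row for $\bridgeout_{j,t,j'+1}$ then forces the highway job $\highwayjob_{j,t,j'+1}^{\bar\circ}$ onto that machine, so by \cref{claim:refined_HJob_distribution_detail} the job $\highwayjob_{j,t,j'+1}^{\circ}$ lands on its right endpoint, which is $\bridgein_{j,t,j'+2}$ if $j'+1<n$ and $\clausemach_{\kappa(j,t)}$ otherwise; this is exactly the hypothesis for $j'+1$. When the induction reaches $j'=n$, the job $\highwayjob_{j,t,n}^{\circ}$ is scheduled on $\clausemach_{\kappa(j,t)}$, and by \cref{claim:refined_HJob_distribution_detail} the highway job on $\clausemach_{\kappa(j,t)}$ is precisely this one; its configuration $\circ$ matches that of the variable job on $\truthmach_{j,\ceil{t/2}}$, proving the claim.

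The main thing to get right is not any single step but the bookkeeping of which endpoint each highway and bridge job occupies, and in particular that each such job alternates a ``$\circ$'' copy and a ``$\bar\circ$'' copy as the signal hops from machine to machine; the configuration that reaches the clause machine is the one carried by the $\highwayjob$ with subscript $n$, and by the alternation pattern it is $\circ$, the configuration of the original variable job. This is exactly where \cref{claim:refined_same_truth_configuration} (all jobs on a machine share a truth configuration) is invoked implicitly to rule out mixed schedules, and it is built into the fact that each row of \cref{table:refined_possible_schedules} lists only configuration-homogeneous job sets. Since everything is a finite, deterministic propagation once the schedules of \cref{table:refined_possible_schedules} are in hand, the only real work is checking the base case at the gateway and the handling of the two boundary cases $j=n$ and $j'=n$, which is routine.
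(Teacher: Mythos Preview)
Your proposal is correct and follows essentially the same approach as the paper's own proof: trace the signal from the truth-assignment machine through the gateway, then alternate across bridge-in/bridge-out pairs until the final highway job lands on the clause machine. The paper's argument is terser (it merely says the configuration is reversed at the gateway and then ``this argument can be repeated with the bridge and highway jobs''), while you spell out the repetition as an explicit induction on $j'$ and cite the relevant claims and table rows at each hop; but the underlying idea and chain of implications are identical.
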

\begin{claimproof}
Note that the truth configuration of the variable job scheduled on $\gatemach_{j,t}$ compared with the one of the variable job scheduled on $\truthmach_{j,\ceil{t/2}}$ is reversed. 
Hence, the highway job scheduled on $\gatemach_{j,t}$ also has the reversed truth-configuration while the highway job that is passed on again has the original truth-configuration.
This argument can be repeated with the bridge and highway jobs in the following, yielding the asserted claim.
\end{claimproof}
Using the above claims, we can conclude the proof of Theorem \ref{thm:result_rai} via the following Lemma:
\begin{lemma}
There is a fulfilling truth assignment for the given \tailoredsat instance, if and only if there is a schedule with makespan $T$ for the constructed \rai instance. 
\end{lemma}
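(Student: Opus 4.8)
The plan is to mirror the proof of the corresponding proposition for the simple reduction, now using the refined claims as the building blocks, and to establish the two implications separately. For the direction from a schedule to a truth assignment, I would start with a schedule of makespan $T$ for the constructed \rai instance. By \cref{claim:refined_4_or_3_jobs_per_machine} together with \cref{claim:refined_TJob_restriction_and_distribution,claim:refined_VJob_restriction_and_distribution,claim:refined_BJob_restriction_and_distribution,claim:refined_HJob_restriction_and_distribution,claim:refined_CJob_distribution,claim:refined_BJob_distribution_detail,claim:refined_HJob_distribution_detail}, every job other than a clause job lies on the first or the last machine of its interval of eligible machines, and by \cref{claim:refined_same_truth_configuration} all non-private jobs on a fixed machine carry the same truth configuration, so each machine is in one of the states listed in \cref{table:refined_possible_schedules}. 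I would then define the truth value of $x_j$ to be the truth configuration of the non-private jobs on $\truthmach_{j,1}$. By \cref{claim:refined_truth_gadget} the jobs on $\truthmach_{j,2}$ have the complementary configuration, so for each occurrence $t\in[4]$ the variable job on $\truthmach_{j,\ceil{t/2}}$ carries exactly the truth value that the $t$-th literal of $x_j$ contributes to its clause $C_i$, where $\kappa(j,t)=(i,s)$ (for $t\in\set{1,2}$ this is the value of $x_j$, for $t\in\set{3,4}$ its negation). By \cref{claim:refined_signal_is_passed_on} the highway job on $\clausemach_{\kappa(j,t)}$ carries the same configuration, and by \cref{claim:refined_CMach_triplet_truth_configuration} the three clause machines of $C_i$ receive exactly one $\true$-configured highway job if $C_i$ is a 1-in-3-clause and exactly two if it is a 2-in-3-clause; that is, $C_i$ has exactly one, respectively exactly two, literals set to $\true$, so the assignment is fulfilling.

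For the converse direction, given a fulfilling assignment with value $\triangleleft_j$ for $x_j$ and negation $\triangleright_j$, I would put $\truthmach_{j,1}$ into its $\triangleleft_j$-state and $\truthmach_{j,2}$ into its $\triangleright_j$-state from \cref{table:refined_possible_schedules} (so $\truthjob_j^{\triangleleft_j}$, $\variablejob_{j,1}^{\triangleleft_j}$, $\variablejob_{j,2}^{\triangleleft_j}$ go on $\truthmach_{j,1}$, and symmetrically on $\truthmach_{j,2}$). The remaining copy of each variable job is then forced onto its gateway machine, which fixes the state of that gateway machine and, by the same mechanism running along the chain of bridge and highway jobs of that occurrence---each job pair moving the configuration between its two eligible machines while every machine keeps a single configuration---fixes the states of all its bridge machines. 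The only remaining freedom is the state of the clause machines; by \cref{claim:refined_signal_is_passed_on} the highway job arriving at $\clausemach_{\kappa(j,t)}$ carries exactly the truth value that the $t$-th literal of $x_j$ contributes to its clause, so on the three clause machines of a 1-in-3-clause (respectively 2-in-3-clause) exactly one (respectively two) highway jobs are $\true$-configured because the assignment is fulfilling. I would then assign $\clausejob_{i,1}^{\true}$ to a clause machine carrying a $\true$-configured highway job and $\clausejob_{i,2}^{\circ_2}$, $\clausejob_{i,3}^{\false}$ to the remaining two so that the configurations agree with \cref{table:refined_possible_schedules}, and it remains to verify that every machine receives load exactly $T$, which follows from the chosen states together with \cref{claim:refined_overall_load} (equivalently, by the digit-by-digit check used in its proof).

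Since the preceding claims already carry the structural weight, the step I expect to require the most care is the bookkeeping of how a truth configuration travels from $\truthmach_{j,\ceil{t/2}}$ across the gateway, through the alternating bridge--highway pairs, and into $\clausemach_{\kappa(j,t)}$: one has to be sure that the number of configuration reversals along this path is even, so that \cref{claim:refined_signal_is_passed_on} really yields matching configurations, and that splitting each job pair over its two eligible machines creates no conflict when constructing the schedule in the converse direction. Once that is in place, the remainder is the same assembly of a truth-assignment gadget and a clause gadget as in the simple reduction.
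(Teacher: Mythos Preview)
Your proposal is correct and follows essentially the same approach as the paper: both directions are assembled from the refined claims and \cref{table:refined_possible_schedules}, with the truth assignment read off from the truth-configuration carried along the gateway/bridge/highway chain and the clause check supplied by \cref{claim:refined_CMach_triplet_truth_configuration}. The only cosmetic differences are that the paper reads the truth value of $x_j$ from the highway job on $\clausemach_{\kappa(j,1)}$ rather than from $\truthmach_{j,1}$ (equivalent by \cref{claim:refined_signal_is_passed_on}), and in the converse direction it spells out the assignment on every machine type explicitly instead of invoking \cref{claim:refined_signal_is_passed_on}---which, strictly speaking, is stated for a given makespan-$T$ schedule, so when constructing one you should reuse its argument rather than cite it.
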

\begin{proof}
First, we consider the case that a schedule with makespan $T$ for the constructed \rai instance is given.
For each variable $x_j$ and occurrence $t\in[4]$, let $\highwayjob_{j,t,n}^{\circ_{j,t}}$ be the highway job scheduled on $\clausemach_{\kappa(j,t)}$ (see Table \ref{table:refined_possible_schedules}).
We choose the truth value of $x_j$ to be $\circ_{j,1}$.
Considering the distribution of jobs on the truth assignment machines (see Table \ref{table:refined_possible_schedules}), as well as Claim \ref{claim:refined_truth_gadget} and \ref{claim:refined_signal_is_passed_on}, we know that for each variable $x_j$ and occurrence $t\in[4]$, the truth configuration $\circ_{j,t}$ corresponds exactly to the truth value $x_j$ contributes to the clause given by $\kappa(j,t)$.
Furthermore,  we know that for each clause $C_i$, there are exactly three variable jobs scheduled on the corresponding clause machines, and exactly one or two of these has $\true$-configuration, if $C_i$ is a 1-in-3-clause or 2-in-3-clause, respectively (Claim \ref{claim:refined_CMach_triplet_truth_configuration}).
Hence, $C_i$ is fulfilled.

Now, let there be a fulfilling truth assignment, and $\triangleleft_j$ be the corresponding truth value of variable $x_j$ and $\triangleright_j$ its negation.
We set $\bigtriangleup_{j,t} = \triangleleft_j$ for $t\in\set{1,2}$ and $\bigtriangleup_{j,t} = \triangleright_j$ for $t\in\set{3,4}$ and assign $\highwayjob_{j,t,n}^{\bigtriangleup_{j,t}}$ to $\clausemach_{\kappa(j,t)}$.
Let $\bigtriangledown_{jt}$ be the negation of $\bigtriangleup_{jt}$.
All the other jobs are assigned as indicated by the claims and Table \ref{table:refined_possible_schedules} in particular:
Each machine receives its private load;
$\clausemach_{\kappa(j,t)}$ additionally receives one of the eligible remaining clause jobs with $\bigtriangleup_{j,t}$-configuration (this can be done because the truth assignment is fulfilling); 
$\bridgeout_{j,t,j'}$ receives $\highwayjob_{j,t,j'}^{\bigtriangledown_{j,t}}$ and $\bridgejob_{j,t,j'}^{\bigtriangledown_{j,t}}$;
$\bridgein_{j,t,j'}$ receives $\highwayjob_{j,t,j'-1}^{\bigtriangleup_{j,t}}$ and $\bridgejob_{j,t,j'}^{\bigtriangleup_{j,t}}$;
$\gatemach_{j,t}$ receives $\highwayjob_{j,t,j}^{\bigtriangledown_{j,t}}$ and $\variablejob_{j,t}^{\bigtriangledown_{j,t}}$;
$\truthmach_{j,1}$ receives $\variablejob_{j,1}^{\bigtriangleup_{j,1}}$, $\variablejob_{j,2}^{\bigtriangleup_{j,2}}$ and $\truthjob_j^{\bigtriangleup_{j,1}}$ ; and $\truthmach_{j,2}$ receives $\variablejob_{j,3}^{\bigtriangleup_{j,3}}$, $\variablejob_{j,4}^{\bigtriangleup_{j,4}}$ as well as $\truthjob_j^{\bigtriangleup_{j,3}}$.
It is easy to verify, that all jobs are assigned and each machine has a load of $T$.
\end{proof}

\section{Resource Restrictions}
\label{sec:resource}

In this section, we first present some preliminary observations concerning \rar{R} and discuss the relationship of the problem with \rai and \lrs{D}.
Next, we revisit established reductions for the restricted assignment problem and show that they can be modeled with only few resources.
This already gives the result for $4$ resources in \cref{thm:result_rar}.
Lastly, we study the cases with $2$ and $3$ resources. 
We first give a reduction for $R=3$ and then refine the result to work for $R=2$ as well thereby concluding the proof of \cref{thm:result_rar} and \cref{thm:results_santa}.

\subparagraph{Preliminaries.}

Recall that in the problem of scheduling with resource restrictions with $R$ resources (\rar{R}), a set $\ress$ of $R$ (renewable) resources is given, each machine $i$ has a resource capacity $c_{r}(i)$ and each job $j$ has a resource demand $d_{r}(j)$ for each $r\in\ress$.
Job $j$ is eligible on machine $i$, if $d_{r}(j)\leq c_{r}(i)$ for each resource~$r$. 
We allow arbitrary real values for the capacities and demands but it is not hard to see that relatively small integer values suffice.
Indeed, given an instance of \rar{R}, we may perform the following two steps:
First, we increase for each job $j$ and each resource $r$ the demand $d_r(j)$ to the smallest value included in $\sett{c_r(i)}{i\in\machs,c_r(i)\geq d_r(j)}$ (if this set is empty the job cannot be processed anywhere).
Afterwards, there are at most $m=|\machs|$ distinct demand or capacity values for each resource, and we can change the smallest value to $1$ the second to $2$ and so on.
This yields an instance with the same restrictions and the property that all the capacities and demands are included in $[m]$.

Technically, there are two versions of the problem \rar{R} depending on whether the resources, demands and capacities are explicitly given or not. 
In the second variant recognition is an issue that we do not address in this work since our results work for both versions of the problem.
However, note that the proof of the following lemma gives some intuition concerning this: 
\begin{lemma}\label{lem:RARm}
Each restricted assignment instance with $m$ machines is also a \rar{m} instance; and for each $m\in \NN$ there is a \rar{m} instance with $m$ machines (and $m$ jobs) that is not a \rar{R} instance for any $R<m$.
\end{lemma}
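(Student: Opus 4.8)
The plan is to treat the two assertions separately. For the first one, fix a restricted assignment instance on machines $\machs=\{M_1,\dots,M_m\}$ with eligibility sets $\emachs(j)$, and introduce one resource $r_k$ for each machine $M_k$. I would set the capacities to $c_{r_k}(M_i)=0$ if $i=k$ and $c_{r_k}(M_i)=1$ otherwise, and the demands to $d_{r_k}(j)=1$ if $M_k\notin\emachs(j)$ and $d_{r_k}(j)=0$ otherwise. Then for a machine $M_i$ the only potentially binding constraint is the one for resource $r_i$, which has capacity $0$ and hence forces $d_{r_i}(j)=0$, i.e.\ $M_i\in\emachs(j)$; every other resource only imposes $d_{r_k}(j)\le 1$, which holds trivially. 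So $j$ is eligible on $M_i$ in the resulting \rar{m} instance exactly when $M_i\in\emachs(j)$, which is what we want.

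For the second assertion I would use the explicit instance with machines $M_1,\dots,M_m$ and jobs $J_1,\dots,J_m$ defined by $\emachs(J_j)=\machs\setminus\{M_j\}$. That this is a \rar{m} instance is immediate from the first part (and the construction above in fact only uses demands and capacities in $\{0,1\}$). The substantial part is the lower bound: there is no way to realize it with fewer than $m$ resources. Assuming it is a \rar{R} instance with resource set $\ress$, capacities $c_r$ and demands $d_r$, I would note that for each $j$ the set of machines on which $J_j$ is \emph{in}eligible equals $\{M_j\}$, and that by the definition of eligibility this set is $\bigcup_{r\in\ress}\{\,M_i : c_r(M_i)<d_r(J_j)\,\}$. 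Since a union of sets equals the singleton $\{M_j\}$, one of these sets must already equal $\{M_j\}$; and $\{\,M_i : c_r(M_i)<d_r(J_j)\,\}=\{M_j\}$ forces $c_r(M_j)<d_r(J_j)\le c_r(M_i)$ for all $i\neq j$, so $M_j$ is the unique machine of minimum $r$-capacity.

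To finish, I would choose for every $j$ some resource $\phi(j)\in\ress$ witnessing the previous observation, and argue that $\phi$ is injective: a fixed resource has at most one strictly smallest capacity among the machines, so $\phi(j)=\phi(j')$ forces $M_j=M_{j'}$, hence $j=j'$. Thus $R=|\ress|\ge m$, contradicting $R<m$. The step I expect to take the most care is precisely this lower bound---in particular stating the structural claim with the correct (strict) inequalities and using that the union being a singleton guarantees a witnessing resource exists---whereas the constructions and their verifications are routine.
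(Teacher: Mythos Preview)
Your proposal is correct and takes essentially the same approach as the paper: the first part uses the identical one-resource-per-machine construction, and the second part uses the same instance $\emachs(J_j)=\machs\setminus\{M_j\}$ together with the same core idea of picking, for each $j$, a resource that separates $M_j$ from $J_j$ and arguing these resources are pairwise distinct. Your phrasing of the injectivity step via ``$M_j$ is the unique machine of minimum $\phi(j)$-capacity'' is a slightly cleaner packaging of what the paper proves through an explicit chain of inequalities, but the underlying argument is the same.
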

\begin{proof}
Given a restricted assignment instance with $m$ machines, we define resources, demands and capacities that model the given restrictions.
First, we identify each machine with a resource, that is, we set $\ress = \machs$.
Furthermore, we set the capacities of machine $i\in\machs$ concerning resource $r\in\ress$ to be $c_r(i) = 1$ if $r\neq i$ and $c_i(i) = 0$; and the demand of job $j\in\jobs$ concerning $r$ to be $d_r(i) = 0$ if $r\in\emachs(j)$, and $d_r(i) = 1$ otherwise.
It is easy to check that for each job $j$ and machine $i$, we have $d_r(j)\leq c_r(i)$ for each resource $r$, if and only if $i\in\emachs(j)$. 

The next goal is to construct a simple \rar{m} instance for each $m\in\NN$.
We first collect some simple observations.
Given some instance of \rar{R} with $j\in\jobs$ and $i'\in\machs\setminus\emachs(j)$, we know that there is a resource $r(j,i')\in\ress$ such that $c_{r(j,i')}(i') < d_{r(j,i')}(j)$ (otherwise $i'\in\emachs(j)$).
One could say that $r(j,i')$ separates $i'$ from $j$ or $\emachs(j)$.
On the other hand, we know that $d_{r}(j) \leq c_{r}(i)$ for each $i\in\emachs(j)$ and $r\in\ress$.
Let $i\in\emachs(j)$.
Now, consider the case that we have another job $j'$ with $ i'\in\emachs(j') $ and $i\in\machs\setminus\emachs(j')$.
Then we have $r(j,i') \neq r(j',i)$, because otherwise:
\[d_{r(j',i)}(j) \leq c_{r(j',i)}(i) < d_{r(j',i)}(j') \leq c_{r(j',i)}(i') = c_{r(j,i')}(i') < d_{r(j,i')}(j) = d_{r(j',i)}(j)\] 
Using this insight, we construct the instance as follows:
We set $\machs = [m]$ and $\jobs = \sett[\big]{j\subseteq\machs}{|j| = m-1}$ with $\emachs(j) = j$. 
We may assume unit processing times.
This instance has exactly $m$ jobs and machines and we know that it is a \rar{m} instance because of the first part of the proof.
Now, given any resources $\ress$ along with capacities and demands that model the restrictions for the above instance, we show that $|\ress|\geq m$.
For each $j\in\jobs$ let $i_j\in\machs$ be the single machine that is restricted to process $j$, i.e., $i_j\in\machs\setminus\emachs(j)$.
This implies $\machs = \sett{i_j}{j\in\jobs}$ and due to the above observation, we have $r(j,i_j) \neq r(j',i_{j'})$ for each pair of distinct jobs $j,j'\in\jobs$.
Hence, $\sett{r(j,i_j)}{j\in\jobs}= \ress'\subseteq \ress$ and $|\ress'| = m$ concluding the proof.
\end{proof}

The relationship between scheduling with resource and interval restrictions is discussed in the following lemma:
\begin{lemma}
Each \rar{1} instance is also a \rai instance and there is a \rai instance that is not a \rar{1} instance.
Moreover, each \rai instance is also a \rar{2} instance and there is a \rar{2} instance that is not a \rai instance.
With a slight abuse of notation, we may write: $\text{\rar{1}}\subset\text{\rai}\subset\text{\rar{2}}$.
\end{lemma}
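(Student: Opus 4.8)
The plan is to establish the four claimed containments and separations in turn, each by an explicit construction. For $\text{\rar{1}}\subseteq\text{\rai}$, I would take an arbitrary \rar{1} instance with resource $r$, sort the machines by nondecreasing capacity $c_r(i)$, and observe that the set of machines on which a job $j$ is eligible, namely $\sett{i}{c_r(i)\geq d_r(j)}$, is exactly a suffix of this ordering. A suffix is a special case of an interval, so the same machine order witnesses the \rai structure. (Recall this is precisely the hierarchical case mentioned earlier.) For the proper separation, I would exhibit a small \rai instance that is not \rar{1}: three machines $M_1,M_2,M_3$ in this order, one job eligible exactly on $\set{M_1}$ and one job eligible exactly on $\set{M_3}$. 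With a single resource, eligibility sets are totally ordered by inclusion (they are all of the form $\sett{i}{c_r(i)\geq \text{threshold}}$), so two singleton eligibility sets $\set{M_1}$ and $\set{M_3}$ that are incomparable cannot both arise; hence no \rar{1} model exists.

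For $\text{\rai}\subseteq\text{\rar{2}}$, given a \rai instance with machine order $M_1,\dots,M_m$ and a job $j$ eligible on $\set{M_\ell,\dots,M_r}$, I would use two resources to encode the two endpoints of the interval. Concretely, set $c_1(M_i) = i$ and $c_2(M_i) = m+1-i$; for a job with interval $[\ell,r]$ set $d_1(j) = \ell$ and $d_2(j) = m+1-r$. Then $d_1(j)\leq c_1(M_i)$ iff $i\geq \ell$, and $d_2(j)\leq c_2(M_i)$ iff $i\leq r$, so both constraints hold simultaneously iff $\ell\leq i\leq r$, exactly as required. This is the natural ``points in $2$-dimensional space'' interpretation: resource $1$ constrains from the left, resource $2$ from the right. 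For the separation, I would give a \rar{2} instance that cannot be realized as \rai; a convenient candidate is the instance from \cref{lem:RARm} with $m=3$ (three machines, three jobs, each job eligible on all but one machine), or a similarly small ``cyclic'' example, and argue via the fact that the incidence graph of a \rai instance is a convex bipartite graph while this instance's incidence graph is not (it contains the forbidden structure — a job whose non-neighbors straddle another job's neighbors on every ordering).

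I expect the main obstacle to be the \rar{2}-but-not-\rai separation, since both of the containments and the \rar{1}-separation are essentially one-line constructions or parity arguments. For the tricky direction I would prefer a self-contained combinatorial argument over invoking graph-class characterizations: fix a candidate \rar{2} instance and show that for \emph{every} total order of its machines some job fails to be eligible on a contiguous block. The instance from \cref{lem:RARm} with $m=3$ works: jobs $j_1,j_2,j_3$ with $\emachs(j_k) = \machs\setminus\set{M_k}$; whatever order we place $M_1,M_2,M_3$ in, the middle machine is excluded by exactly one job whose two eligible machines then sit at the two ends, violating contiguity. Combining the four parts — with the convention that ``$A\subset B$'' abbreviates ``every $A$-instance is a $B$-instance and some $B$-instance is not an $A$-instance'' — yields $\text{\rar{1}}\subset\text{\rai}\subset\text{\rar{2}}$, which is the statement.
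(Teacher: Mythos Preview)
Your arguments for the two containments and for the $\text{\rai}\setminus\text{\rar{1}}$ separation are correct and essentially identical to the paper's (the paper sorts by decreasing capacity to get prefixes rather than suffixes, and uses two machines instead of three for the non-$\text{\rar{1}}$ example, but these are cosmetic differences).

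There is, however, a genuine gap in your $\text{\rar{2}}\setminus\text{\rai}$ separation. You propose the instance from \cref{lem:RARm} with $m=3$: three machines and three jobs $j_1,j_2,j_3$ with $\emachs(j_k)=\machs\setminus\set{M_k}$. Your argument that this instance is not \rai is fine. But you never verify that it \emph{is} a \rar{2} instance, and in fact it is not: \cref{lem:RARm} states precisely that this instance is \rar{3} but not \rar{R} for any $R<3$. So you have exhibited an instance outside both classes, which says nothing about the desired strict inclusion.

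The paper avoids this trap by using a four-machine instance: $\machs=[4]$ with jobs having eligibility sets $\set{1,2,3,4}$, $\set{1,2}$, $\set{1,3}$, $\set{1,4}$. It first exhibits explicit two-dimensional capacity and demand vectors realizing these sets (so the instance is genuinely \rar{2}), and then observes that machine $1$ would have to be adjacent to each of $2$, $3$, $4$ in any interval ordering, which is impossible. If you want to salvage your approach, you must either supply capacities and demands over two resources for your three-machine instance (which cannot be done) or switch to an example like the paper's and explicitly check both halves of the claim.
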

\begin{proof}
Given an instance of \rar{1}, we may sort the machines based on their capacity values decreasingly. 
Than each job $j$ that can be processed on any machine, can also be processed on any predecessor of this machine.
Hence, $\emachs(j)$ corresponds to an interval of machines starting with the first machine.
On the other hand, consider an instance with two machines and two jobs.
The first job is (exclusively) eligible on the first machine and the second one on the second.
This instance is a \rai but not a \rar{2} instance.

Given an instance of \rai, we may assume wlog.\xspace $\machs=[m]$ and that the ordering of the machines is the natural ordering. 
We set $\ress = [2]$.
Furthermore, for each machine $i$, we set $c(i)= (i,(m+1)-i)$; and for each job $j$ with $\emachs(j)=\set{\ell,\dots,r}$, we set $d(j) = (\ell, (m+1) - r)$ (see Figure \ref{fig:2res}).
If $ i \in \emachs(j)$, we have $c_1(i) = i \geq \ell = d_1(j)$ and $c_2(i) = (m+1) - i \geq  (m+1) - r = d_2(j)$; and if $ i \in \machs\setminus\emachs(j)$, we have either $i< \ell$ or $i > r$ which implies $c_1(i) < d_1(j)$ or $c_2(i) < d_2(j)$ respectively.
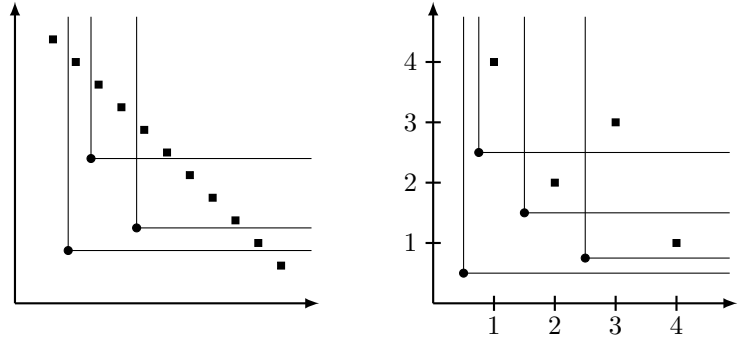
\begin{figure}
\centering
\begin{tikzpicture}
\pgfmathsetmacro{\ch}{4}
\pgfmathsetmacro{\cw}{4}

\draw[thick, ->, >=latex] (0,0) -- (0,\ch);
\draw[thick, ->, >=latex] (0,0) -- (\cw,0);

\foreach \x in {0,...,10}{
    \filldraw[xshift=-1.3pt,yshift=-1.3pt] ({(\x * 0.5 + (10-\x )*(\cw - 0.5))/10},{(\x * (\ch - 0.5) + (10-\x )*0.5)/10}) rectangle ++(2.6pt,2.6pt);
    }

\filldraw (1, 0.48*\ch) circle (1.5pt);
\draw (1, 0.48*\ch) -- (1, \ch - 0.2);
\draw (1, 0.48*\ch) -- (\cw - 0.1, 0.48*\ch);

\filldraw (0.7, 0.7) circle (1.5pt);
\draw (0.7, 0.7) -- (0.7, \ch - 0.2);
\draw (0.7, 0.7) -- (\cw - 0.1, 0.7);

\filldraw (0.4 * \cw, 1.0) circle (1.5pt);
\draw (0.4 * \cw, 1.0) -- (0.4 * \cw, \ch - 0.2);
\draw (0.4 * \cw, 1.0) -- (\cw - 0.1, 1.0);

\begin{scope}[xshift = 5.5cm]

\draw[thick, ->, >=latex] (0,0) -- (0,\ch);
\draw[thick, ->, >=latex] (0,0) -- (\cw,0);

\foreach \x in {1,...,4}{
    \draw[thick] ({(\x * (0.8*\cw)/4}, -0.1) -- ({(\x * (0.8*\cw)/4}, 0.1) node [midway, yshift = -0.3cm, text = black] {\x};
    \draw[thick] (0.1,{(\x * (0.8*\ch)/4}) -- (-0.1, {(\x * (0.8*\ch)/4}) node [midway, xshift = -0.3cm, text = black] {\x};
    }
\filldraw ({(0.5 * (0.8*\cw))/4}, {(0.5 * (0.8*\ch))/4}) circle (1.5pt);
\draw ({(0.5 * (0.8*\cw))/4}, {(0.5 * (0.8*\ch))/4}) -- ({(0.5 * (0.8*\cw))/4}, \ch - 0.2);
\draw ({(0.5 * (0.8*\cw))/4}, {(0.5 * (0.8*\ch))/4}) -- (\cw - 0.1, {(0.5 * (0.8*\ch))/4});

\filldraw ({(2.5 * (0.8*\cw))/4}, {(0.75 * (0.8*\ch))/4}) circle (1.5pt);
\draw ({(2.5 * (0.8*\cw))/4}, {(0.75 * (0.8*\ch))/4}) -- ({(2.5 * (0.8*\cw))/4}, \ch - 0.2);
\draw ({(2.5 * (0.8*\cw))/4}, {(0.75 * (0.8*\ch))/4}) -- (\cw - 0.1,  {(0.75 * (0.8*\ch))/4});

\filldraw ({(1.5 * (0.8*\cw))/4}, {(1.5 * (0.8*\ch))/4}) circle (1.5pt);
\draw ({(1.5 * (0.8*\cw))/4}, {(1.5 * (0.8*\ch))/4}) -- ({(1.5 * (0.8*\cw))/4}, \ch - 0.2);
\draw ({(1.5 * (0.8*\cw))/4}, {(1.5 * (0.8*\ch))/4}) -- (\cw - 0.1,  {(1.5 * (0.8*\ch))/4});

\filldraw ({(0.75 * (0.8*\cw))/4}, {(2.5 * (0.8*\ch))/4}) circle (1.5pt);
\draw ({(0.75 * (0.8*\cw))/4}, {(2.5 * (0.8*\ch))/4}) -- ({(0.75 * (0.8*\cw))/4}, \ch - 0.2);
\draw ({(0.75 * (0.8*\cw))/4}, {(2.5 * (0.8*\ch))/4}) -- (\cw - 0.1,  {(2.5 * (0.8*\ch))/4});


\filldraw[xshift=-1.3pt,yshift=-1.3pt] ({(3 * (0.8*\cw))/4}, {(3 * (0.8*\ch))/4}) rectangle ++(2.6pt,2.6pt);
\filldraw[xshift=-1.3pt,yshift=-1.3pt] ({(4 * (0.8*\cw))/4}, {(1 * (0.8*\ch))/4}) rectangle ++(2.6pt,2.6pt);
\filldraw[xshift=-1.3pt,yshift=-1.3pt] ({(2 * (0.8*\cw))/4}, {(2 * (0.8*\ch))/4}) rectangle ++(2.6pt,2.6pt);
\filldraw[xshift=-1.3pt,yshift=-1.3pt] ({(1 * (0.8*\cw))/4}, {(4 * (0.8*\ch))/4}) rectangle ++(2.6pt,2.6pt);

\end{scope}

\end{tikzpicture}
\caption{The left picture visualizes that each \rai instance can be seen as a \rar{2} instance and the right one depicts an \rar{2} instance that is not a \rai instance. In both pictures, each dimension corresponds to a resource, the squares mark the capacities of machines and the circles the demands of jobs. If the capacity of a machine is at least as big as the demand of a job in both dimension, the job is eligible on the machine.}
\label{fig:2res}
\end{figure}

Lastly, we construct an instance of \rar{2} that is not a \rai instance.
Let $\machs = [4]$, $\ress = [2]$ and $\jobs = \set[\big]{\set{1,2,3,4}, \set{1,2}, \set{1,3}, \set{1,4}}$.
We may assume unit job sizes.
The resource capacities and demands are given in Table \ref{table:3res_rar2_neq_rai} and the construction is illustrated in Figure \ref{fig:2res}.
It is easy to see that $\emachs(j) = j$ for each $j\in\jobs$.
In any total ordering of the machines in which each job is eligible on consecutive machine, the machine $1$ has to be a direct neighbor of $2$, $3$ and $4$. 
This is not possible.
\begin{table}
\centering
\caption{A simple example of a \rar{2} instance that is not a \rai instance. Note that all demands could be rounded up to the next integer value without changing the construction.}
\begin{tabular}{ll|ll}
\toprule
Machine & Capacity & Job & Demand \\
\midrule
$1$ & $(3,3)$ & $\set{1,2,3,4}$ & $(0.5,0.5)$\\
$2$ & $(4,1)$ & $\set{1,2}$ & $(2.5,0.75)$\\
$3$ & $(2,2)$ & $\set{1,3}$ & $(1.5,1.5)$\\
$4$ & $(1,4)$ & $\set{1,4}$ & $(0.75,2.5)$\\
\bottomrule
\end{tabular}
\label{table:3res_rar2_neq_rai}
\end{table}
\end{proof}

We already mentioned in the introduction that there is a close relationship between scheduling with resource restrictions and low rank unrelated scheduling.
Remember that in the rank $D$ unrelated scheduling problem (\lrs{D}) the processing time matrix $(p_{ij})_{i\in\machs,j\in\jobs}$ has a rank of at most $D$, or, equivalently, there is a $D$ dimensional size or speed vector $s(j)$ or $v(i)$ for each job $j$ or machine $i$ respectively, and the processing time $p_{ij}$ is given by $\sum_{d\in[D]}s_d(j)v_d(i)$.
It is easy to construct for any $m\in\NN$ a \rar{1} instance that is a \lrs{m} instance as well:
The instance with $\jobs=\machs=[m]$, $\ress = \set{1}$, $d_1(k) = c_1(k) = k$ for each $k\in[m]$ and unit processing times suffices (assuming that the number $\infty$ is interpreted as some sufficiently big number).
On the other hand, any \rar{R} instance can be approximated with arbitrary precision by \lrs{R+1} instances in the following sense:
\begin{lemma}
Let $I$ be a \rar{R} instance. 
For any $\eps,K > 0$, there is a \lrs{R+1} instance $I'$ with the same jobs and machines and the following property: 
Let $p'_{ij}$ be the processing time of job $j$ on machine $i$ in instance $I'$. 
We have $p_j \leq p'_{ij} \leq p_j + \eps$ if $i\in\emachs(j)$ and $p'_{ij}\geq p_j + K$ otherwise.
\end{lemma}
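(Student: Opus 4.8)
The plan is to realize the eligibility relation by $R$ geometrically decaying penalty terms, one per resource, and to devote one further coordinate to the base processing times $p_j$; this yields a processing time matrix of rank at most $R+1$. First I would invoke the normalization discussed above and assume without loss of generality that all capacities $c_r(i)$ and demands $d_r(j)$ are integers (in fact contained in $[m]$), which changes neither the jobs and machines nor the eligibility sets $\emachs(j)$. The point of this step is that then $i\notin\emachs(j)$ implies $d_r(j)\geq c_r(i)+1$ for some $r\in\ress$, i.e., a violated resource constraint is violated by at least one unit.

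Next I would set $\delta = \eps/R$, pick an integer $B\geq\max\{1,\lceil K/\delta\rceil\}$, and define $(R+1)$-dimensional size and speed vectors: for each resource $r$ (viewed as an index in $[R]$) let $s_r(j)=\delta B^{d_r(j)}$ and $v_r(i)=B^{-c_r(i)}$, and for the extra coordinate let $s_{R+1}(j)=p_j$ and $v_{R+1}(i)=1$. Let $I'$ be the resulting \lrs{R+1} instance on the same jobs and machines; by construction its processing time matrix has rank at most $R+1$ and
\[
    p'_{ij} \;=\; p_j + \delta\sum_{r\in\ress} B^{\,d_r(j)-c_r(i)} .
\]
It then remains to check the two bounds. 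If $i\in\emachs(j)$, every exponent $d_r(j)-c_r(i)$ is nonpositive, hence each summand lies in $(0,1]$ and $p_j\leq p'_{ij}\leq p_j+R\delta=p_j+\eps$. If $i\notin\emachs(j)$, then by the normalization some summand has exponent at least $1$ and is therefore at least $B$, so $p'_{ij}\geq p_j+\delta B\geq p_j+K$.

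I do not expect a genuine obstacle here; the only points needing a little attention are the reduction to integral demands and capacities (without it one only obtains a strictly positive, rather than a uniformly bounded, exponent gap in the ineligible case, which could still be salvaged by setting $\gamma=\min\{d_r(j)-c_r(i):d_r(j)>c_r(i)\}>0$ and enlarging $B$ accordingly), and the remark that the produced numbers have polynomial encoding length, since $B$ is polynomial in $K$ and $1/\eps$ while all exponents lie in $\{-m,\dots,m\}$.
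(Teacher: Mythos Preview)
Your proposal is correct and essentially identical to the paper's proof: both set $\delta=\eps/R$, choose a base $N$ (your $B$) at least $\max\{1,K/\delta\}$, and define the rank-$(R+1)$ decomposition via $s_r(j)=\delta N^{d_r(j)}$, $v_r(i)=N^{-c_r(i)}$ plus one coordinate carrying $p_j$. The only cosmetic difference is that you spell out the integer normalization of demands and capacities explicitly to guarantee the exponent gap is at least $1$ in the ineligible case, whereas the paper relies on this normalization having been established in the preliminaries of the section.
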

\begin{proof}
Wlog. we assume $\ress = [R]$.
Let $\delta = \eps / R$ and $N = \max\set{ K/\delta , 1}$.
We define the size and speed vectors of $I'$ as follows:
For each job $j$ we set $s'_r(j) = \delta N^{d_r(j)}$ for each $r\in[R]$, as well as $s'_{R+1}(j) = p_j$.
Moreover, for each machine $i$ we set $v'_r(j) = N^{-c_r(j)}$ for each $r\in[R]$, as well as $v'_{R+1}(j) = 1$.
Then $p'_{ij} = \sum_{r\in[R+1]}s_r(j)v_r(i) = p_j + \sum_{r\in[R]}\delta N^{d_r(j)-c_r(j)}$ and therefore $p'_{ij}\geq p_j$ in any case.
Furthermore, if $i\in\emachs(j)$, we have $d_r(j) \leq c_r(j)$ for each $r\in[R]$, and hence $p'_{ij} \leq p_j + R\delta = p_j + \eps$.
If, on the other hand, $i\not\in\emachs(j)$, then there is an $r\in[R]$ such that $d_r(j) > c_r(j)$ yielding $p'_{ij} \geq p_j + \delta N \geq p_j + K$.
\end{proof}
The above lemma implies that from the perspective of approximation algorithms \rar{R} is essentially included in \lrs{R+1}.
We could use this lemma and \cref{thm:result_rai} or \cref{thm:result_rar} to show that there is no PTAS for \lrs{3} unless P$=$NP.
While this is already known \cite{CMYZ17}, the resulting construction may be more accessible.

\subparagraph{Established Reductions Revisited.}

In the following, we first present the classical reduction by Lenstra et al. \cite{LenstraST90} showing $1.5$-inapproximability for the restricted assignment problem.
We show that the restricted assignment instances in this reduction can be modeled using $6$ resources yielding the same hardness for \rar{6}.
Nearly the same argument was used by Bhaskara et al.~\cite{BhaskaraKTW13} to show $1.5$-inapproximability for \lrs{7}.
Next, we take the same approach for the more recent reduction by Ebenlendr et al. \cite{EbenlendrKS14} and show $1.5$-inapproximability already for $4$ resources.

In the \threedm problem, the input consists of three disjoint sets $A$, $B$ and $C$ with $|A|=|B|=|C| = n\in\NN$, as well as a set of triplets $E \subseteq \sett[\big]{\set{a,b,c}}{a\in A,b\in B, c\in C}$.
The goal is to decide whether there is a subset $F\subseteq E$ that perfectly covers $A$, $B$ and $C$, that is, for each $x\in A\cup B\cup C$ there is exactly one triplet $e\in F$ with $x\in e$.
The set $F$ is called a 3D-matching.
We assume that the elements of $A$, $B$ and $C$ are indexed, that is, $A = \set{a_1,a_2,\dots, a_n}$, $B= \set{b_1,b_2,\dots, b_n}$ and $C= \set{c_1,c_2,\dots, c_n}$.
Furthermore, we assume that for each $x\in A\cup B\cup C$ there is at least one $e\in E$ with $x\in E$ (otherwise the problem is trivial).
Via a reduction from \threedm to the restricted assignment problem, Lenstra et al. \cite{LenstraST90} showed:
\begin{theorem}[\cite{LenstraST90}]\label{prop:LSTreduction}
There is no polynomial time approximation algorithm for restricted assignment with rate smaller than $1.5$ unless P$=$NP.
\end{theorem}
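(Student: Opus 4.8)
The plan is to recall the classical reduction of Lenstra, Shmoys and Tardos from the \threedm problem, which is NP-complete. Fix a \threedm instance with $A=\{a_1,\dots,a_n\}$, $B=\{b_1,\dots,b_n\}$, $C=\{c_1,\dots,c_n\}$ and a triplet set $E$, and assume (as in the excerpt) that every element occurs in at least one triplet. I would build a restricted assignment instance, following the pattern from the Preliminaries, in which all job sizes are integral and bounded by a constant, the total job size equals the number of machines times the target makespan $T$, and a schedule of makespan $T$ exists if and only if the \threedm instance admits a perfect matching. By the framework from the Preliminaries this rules out a polynomial-time algorithm with rate below $(T+1)/T$, and the construction is arranged so that $(T+1)/T=3/2$.

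The construction has the familiar two-layer shape. There is a \emph{triplet gadget} for each $e\in E$ --- in the simplest incarnation a single \emph{triplet machine} $M_e$ --- whose jobs can be scheduled at load exactly $T$ on the gadget's machines precisely in one of two modes, \enquote{selected} or \enquote{unselected}. For each element $x\in A\cup B\cup C$ there is an \emph{element job} that is eligible only on the triplet gadgets containing $x$, and whose placement certifies that the corresponding triplet is selected; \emph{dummy jobs} (and private loads) bring unselected gadgets up to load $T$. The sizes are chosen so that a triplet gadget hosting element jobs must host exactly the element jobs of its triplet, so that every element job gets placed this way, and so that no machine can be loaded to exactly $T$ by a \enquote{mixed} combination. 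Correctness then splits, as usual, into: (i) given a 3D-matching $F\subseteq E$, put the element jobs of each $e\in F$ on its gadget in selected mode, every other gadget in unselected mode, and verify by arithmetic that all machines have load exactly $T$; and (ii) given a schedule of makespan $T$, use that every machine is loaded to exactly $T$ together with the boundedness of the sizes --- via the same digit-wise counting used for the reductions of the previous section --- to force each gadget into selected or unselected mode consistently, whence the selected triplets form a perfect 3D-matching. The theorem then follows from the reduction framework of the Preliminaries.

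The step I expect to be the main obstacle is the design of the dummy jobs and job sizes so that the reduction is \emph{tight} --- that the two modes above are the only ways to load a gadget to exactly $T$, and, on the hardness side, that a \threedm \enquote{no}-instance cannot be scheduled with makespan strictly below $\tfrac32 T$. A crude padding only yields the weaker $\tfrac43$-gap, since naively coupling the three element jobs of a selected triplet on one machine pushes $T$ up to $3$; squeezing the separation to exactly $3/2$ while keeping all sizes constant-bounded --- which in the classical construction is done by spreading the triplet's three element jobs over a small gadget rather than a single overloaded machine --- is the genuinely delicate combinatorial point. The remainder is bookkeeping that parallels the simple reduction presented earlier in the paper.
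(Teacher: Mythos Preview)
Your plan misses the one idea that makes the classical reduction work, and as a result you are looking for a complication that is not there. You assume an element job for every $x\in A\cup B\cup C$, and then worry that three size-$1$ jobs on a selected triplet machine force $T=3$ and only a $4/3$ gap; you propose to fix this by replacing the single machine by a multi-machine gadget. But the Lenstra--Shmoys--Tardos construction achieves $T=2$ directly, with a single machine per triplet and no private loads, by treating the three ground sets \emph{asymmetrically}: only $B$ and $C$ get element jobs (of size~$1$), while each $a\in A$ contributes $|E(a)|-1$ dummy jobs of size~$2$ eligible on $E(a)$. Then the total load is $2|B\cup C|\cdot 1 + \sum_{a\in A}(|E(a)|-1)\cdot 2 = 2n + 2(|E|-n) = 2|\machs|$, and in any makespan-$2$ schedule each machine receives either one dummy (size~$2$) or two element jobs (sizes $1+1$); no mixed combination can occur since $1+2>2$. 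The selected machines are simply those receiving two element jobs, and the counting of $A$-dummies forces exactly one selected machine in each $E(a)$.

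So the ``genuinely delicate combinatorial point'' you flag --- spreading three element jobs over a small gadget --- does not arise: the trick is to drop the $A$-element jobs altogether and let the $A$-dummies do the counting. Once you see this, the reduction is a few lines, exactly as in the paper, and the framework from the Preliminaries immediately gives the $3/2$ bound.
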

\begin{proof}
Given an instance of \threedm, we set $\machs = E$ and $E(x) = \sett{e\in E}{x \in e}$ for each $x\in A\cup B\cup C$. 
For each $a\in A$, we introduce $|E(a)| - 1 \geq 0$ many dummy jobs with size $2$ and eligible on machines $e\in E(a)$.
Moreover, for each $x\in B\cup C$ we introduce an element job with size $1$ eligible on machines $e\in E(x)$.
Note that the overall size of the jobs is given by $2n + 2\sum_{a\in A}(|E(a)| - 1) = 2n + 2(|E| - n) = 2|\machs|$.

If there is a schedule with makespan $2$ for this instance, then each machine either processes two element or one dummy job.
For each $x\in B\cup C$, we have $x\in e$ for the machine $e$ processing the corresponding element job, and, furthermore, for each $a\in A$, there is exactly one machine $e$ with $a\in e$ that does not process a dummy job (and therefore processes element jobs).
Hence, we get a 3D-matching by selecting the machines that process element jobs.

If, on the other hand, there is a 3D-matching $F$ for the \threedm instance, than we can schedule the element job corresponding to $x\in B\cup C$ on the machine $e\in F$ with $x\in E$ and the dummy jobs corresponding to $a\in A$ on the $|E(a)| - 1$ machines $e\in E(a)\setminus F$.
This yields a schedule with makespan 2.
\end{proof}
We reproduce the restrictions in the above reduction using six resources and get:
\begin{corollary}
There is no polynomial time approximation algorithm for \rar{6} with rate smaller than $1.5$ unless P$=$NP.
\end{corollary}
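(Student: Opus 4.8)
The plan is to reuse the restricted assignment instance constructed in the proof of \cref{prop:LSTreduction} without changing the jobs, their sizes, or the target makespan $2$, and merely to observe that every eligibility set occurring in it has a very special shape that can be realized with six resources. Since the reduction from \threedm is otherwise untouched, the $1.5$\nobreakdash-inapproximability transfers immediately. Nearly the same observation was made by Bhaskara et al.~\cite{BhaskaraKTW13} to obtain $1.5$-inapproximability for \lrs{7}; here it yields \rar{6}.

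Recall that in that instance $\machs = E$ and the only eligibility sets used are $E(x) = \sett{e\in E}{x\in e}$ for $x\in A\cup B\cup C$. For a triplet $e = \set{a_i,b_j,c_k}$, membership $e\in E(a_i)$ is equivalent to the $A$-component of $e$ being exactly $i$, and analogously for $B$ and $C$. Hence it suffices to design, separately for each of the three ground sets, two resources that let a job pick a prescribed index in $[n]$ as its only eligible value; this is precisely the interval-encoding trick used in the proof that $\text{\rai}\subset\text{\rar{2}}$, specialized to an interval of length one. Concretely, I would take $\ress = [6]$ with the pairs $\set{1,2}$, $\set{3,4}$, $\set{5,6}$ associated to $A$, $B$, $C$. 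A machine $e = \set{a_i,b_j,c_k}$ gets capacity vector $c(e) = (i,\, n+1-i,\, j,\, n+1-j,\, k,\, n+1-k)$. A dummy job of $a_i$ gets demand $(i,\,n+1-i)$ on the pair $\set{1,2}$ and $0$ on the other four resources; an element job of $b_j$ gets demand $(j,\,n+1-j)$ on $\set{3,4}$ and $0$ elsewhere; an element job of $c_k$ gets demand $(k,\,n+1-k)$ on $\set{5,6}$ and $0$ elsewhere.

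It then remains only to check that these resource constraints reproduce the eligibility sets exactly. For a dummy job of $a_i$ and a machine $e = \set{a_{i'},b_j,c_k}$, all six demand-at-most-capacity inequalities hold iff $i\le i'$ and $n+1-i\le n+1-i'$ (the four zero-demand resources being satisfied automatically, since every capacity is at least $1$), i.e.\ iff $i = i'$, i.e.\ iff $e\in E(a_i)$; the checks for the $B$- and $C$-jobs are identical up to renaming coordinates. After optionally rescaling all capacities and demands into $[m]$ as described in the preliminaries, this is a valid \rar{6} instance with exactly the same feasible schedules as the original restricted assignment instance, so a 3D-matching exists iff there is a schedule of makespan $2$. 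I do not expect any genuine obstacle; the only point requiring a moment's care is that the ``don't care'' resources must neither accidentally forbid nor accidentally force eligibility, which is guaranteed by pairing a demand of $0$ with capacities that are always positive. Given this, the argument of \cref{prop:LSTreduction} goes through verbatim and proves the corollary.
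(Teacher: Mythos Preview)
Your proposal is correct and essentially identical to the paper's own proof: both reuse the Lenstra--Shmoys--Tardos instance unchanged and encode the eligibility set $E(x_i)$ via a pair of resources with capacity $(i,\,n+1-i)$ for the relevant coordinate, one pair per ground set $A$, $B$, $C$. The only difference is notational (the paper indexes the six resources by pairs $(X,k)$ with $X\in\{A,B,C\}$ and $k\in[2]$, you by $[6]$).
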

\begin{proof}
We set $\ress = \sett[\big]{(X,k)}{X\in\set{A,B,C},k\in[2]}$.
Let $e\in E$ and $X\in\set{A,B,C}$.
We set the resource capacities $c_{(X,1)}(e) = i$ and $c_{(X,2)}(e) = (n+1) - i$.
Let $x_j$ be the element with index $j$ in $X$.
We set the resource demand of a (element or dummy) job $J$ corresponding to $x_j$ as follows:
$d_{(X,1)}(J) = j$, $d_{(X,2)}(J) = (n+1) - j$, as well as $d_{(Y,k)}(J) = 0$ for each $Y\in\set{A,B,C}\setminus\set{X}$ and $k\in[2]$.
It is easy to see that $J$ can exclusively be scheduled on machines $e$ with $x_j\in e$.
\end{proof}

In the classical \threesat problem, a conjunction of $m$ clauses is given and each clause is a disjunction of at most three literals of variables $x_1,\dots,x_n$. 
In the result due to Ebenlendr et al. \cite{EbenlendrKS14}, the modified \threesat problem, where each variable occurs exactly three and each literal at most two times in the formula, is reduced to the graph balancing problem, that is, restricted assignment with the additional property that each job is eligible on at most two machines. 
To show that the modified \threesat problem is NP-hard, we can use techniques already applied in Section \ref{sec:interval}:
We may replace the $d_j$ occurrences of variable $x_j$ with new variables $z_{j1}, \dots, z_{jd_j}$ and add new clauses $(z_{j1} \wedge \neg z_{j2})$, \dots $(z_{jd_{j-1}} \wedge \neg z_{jd_j})$, $(z_{jd_j} \wedge \neg z_{j1})$.
\begin{theorem}[\cite{EbenlendrKS14}]\label{thm:EKSreduction}
There is no polynomial time approximation algorithm with rate smaller than $1.5$ for the graph balancing problem unless P$=$NP.
\end{theorem}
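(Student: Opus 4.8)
The plan is to reduce the modified \threesat problem -- which is NP-hard by the transformation sketched above -- to the graph balancing problem with target makespan $T = 2$: from a formula we build a graph balancing instance that admits a schedule of makespan at most $2$ if and only if the formula is satisfiable. All job sizes will be $1$ or $2$, so on a satisfiable instance the optimum is $2$ while on an unsatisfiable one it is at least $3$; hence, as in the pattern described in the preliminaries, this rules out an approximation algorithm with rate below $(T+1)/T = 3/2$. The instance has one \emph{variable gadget} per variable and one \emph{clause vertex} per clause, joined by size-$1$ \emph{literal jobs}, and every job is eligible on at most two machines, as graph balancing requires.

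\textbf{Gadgets.}
Since every literal occurs at most twice and every variable exactly three times, after possibly swapping $x_j$ and $\neg x_j$ we may assume $x_j$ occurs twice positively and once negatively. For $x_j$ we introduce machines $v_j^{\true}$ and $v_j^{\false}$ together with a \emph{truth job} of size $2$ eligible exactly on $\set{v_j^{\true},v_j^{\false}}$, which forces exactly one of these two machines to be \enquote{full}. For the two positive occurrences of $x_j$, in clauses $C$ and $C'$, we add size-$1$ literal jobs eligible on $\set{v_j^{\true},w_C}$ and $\set{v_j^{\true},w_{C'}}$, and for the negative occurrence in a clause $D$ a size-$1$ literal job eligible on $\set{v_j^{\false},w_D}$, where $w_C$ denotes the clause vertex of $C$. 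Finally $w_C$ gets a private load of size $3-|C|$ (so a $3$-literal clause gets none). The intended reading is: $x_j$ is true iff the truth job of $x_j$ lies on $v_j^{\false}$. A clause vertex then has room for exactly $|C|-1$ incoming literal jobs, i.e.\ its load is at most $2$ precisely when at least one literal job of $C$ is routed onto a variable machine.

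\textbf{Correctness and the hard part.}
For one direction, given a satisfying assignment we orient the truth jobs as above, route the literal jobs of each true literal onto the corresponding variable machine and those of each false literal onto its clause vertices; a direct check shows every variable machine then has load at most $2$ (the full $v_j^{\bullet}$ carries the truth job, the other carries its $\le 2$ literal jobs) and every clause vertex has load $3$ minus the number of its true literals, hence at most $2$. Conversely, from a schedule of makespan at most $2$ we read the assignment off the truth jobs. The only delicate point is that if a literal is false then the associated variable machine already carries the size-$2$ truth job, so each of that literal's size-$1$ jobs must lie on a clause vertex; symmetrically, a literal job sitting on a variable machine belongs to a true literal, since that machine has no room for the truth job on top of it. As $w_C$ carries its private load $3-|C|$ and has total load at most $2$, at most $|C|-1$ of its literal jobs lie on it, so at least one of them lies on a variable machine and hence $C$ has a true literal and is satisfied. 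The main obstacle is thus not conceptual but the bookkeeping: one has to verify that the two vertices of each variable gadget have exactly tight capacities, so that the placement of the size-$2$ truth job genuinely forces the behaviour of all associated literal jobs -- the same kind of argument as the truth-assignment and clause gadget analysis in \cref{sec:interval}, now carried out with only two job sizes and eligibility sets of size at most two.
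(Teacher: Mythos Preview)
Your proposal is correct and is essentially the same reduction as the paper's: variable/literal machines $v_j^{\true},v_j^{\false}$ with a size-$2$ truth job between them, size-$1$ literal jobs connecting literal machines to clause machines, and a private load of $3-|C|$ on each clause machine, with the same two-direction correctness argument. The only cosmetic difference is your WLOG normalization that each variable occurs twice positively and once negatively, which the paper does not bother with since the gadget is symmetric in the two literal machines anyway.
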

\begin{proof}
Given an instance of modified \threesat, we introduce clause machines $v_i$ corresponding to the clauses $C_i$, and literal machines $u_{j,1}$ and $u_{j,0}$ corresponding to the literals $x_j$ and $\neg x_j$.
Furthermore, we introduce truth assignment jobs $e_j$ for each variable $x_j$ with size $2$ and eligible on $u_{j,1}$ and $u_{j,0}$; and clause jobs $f_{i,j,\alpha}$ for each clause $C_i$ and literal $y_j$ occurring in $C_i$ with $\alpha=1$ if $y_j = x_j$ and $\alpha = 0$ if $y_j = \neg x_j$.
The job $f_{i,j,\alpha}$ has size $1$ and is eligible on $v_i$ and $u_{j,\alpha}$.
Lastly, we introduce a dummy job $d_i$ for each clause $C_i$ with less then three literals.
Its size is $1$ if $C_i$ contains two literals, and $2$ if $C_i$ contains only one literal.

In a schedule with makespan $2$, there is at least one clause job $f_{i,j,\alpha}$ for each $v_i$ that is scheduled on $u_{j,\alpha}$ and not on $v_i$.
Hence, the job $e_j$ has to be scheduled on $u_{j,|\alpha-1|}$.
Now, it is easy to see that there is a schedule with makespan $2$, if and only if there is a fulfilling assignment.
The construction works as follows:
Given a schedule with makespan $2$, we set variable $x_j$ to $\true$ if $e_j$ is scheduled on $u_{j,0}$, and to $\false$ otherwise.
Moreover, given a fulfilling truth assignment we assign the truth assignment jobs correspondingly, and the machines $u_{j,\alpha}$ that did not receive a truth assignment job receive all eligible clause jobs (at most two).
\end{proof}
We reproduce the restrictions in the above reduction using four resources and get:
\begin{corollary}
There is no polynomial time approximation algorithm for \rar{4} with rate smaller than $1.5$ unless P$=$NP.
\end{corollary}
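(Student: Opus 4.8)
\end{corollary}

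\begin{proof}
The plan is to mimic the argument used for the six-resource corollary: I would take the graph balancing instance constructed in the proof of \cref{thm:EKSreduction} and reproduce its eligibility structure \emph{exactly} with only four resources. Since the set of eligible machines of every job is then preserved, the resulting \rar{4} instance has a schedule of makespan~$2$ if and only if the original one does, and the $1.5$-inapproximability carries over verbatim.

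Recall that the machines of that instance are the $2n$ literal machines $u_{j,\alpha}$ (for $j\in[n]$, $\alpha\in\set{0,1}$) and the $m$ clause machines $v_i$, and that the job $e_j$ is eligible on $\set{u_{j,0},u_{j,1}}$, the job $f_{i,j,\alpha}$ on $\set{v_i,u_{j,\alpha}}$, and the job $d_i$ on $\set{v_i}$. The key idea is to combine two copies of the interval gadget used to show that every \rai instance is a \rar{2} instance: resources $1,2$ will encode a linear position among the literal machines and resources $3,4$ a linear position among the clause machines, while each machine receives a ``wildcard'' (maximal) capacity on the two resources that do not belong to its own type. Concretely, placing $u_{j,\alpha}$ at position $q(j,\alpha):=2j-\alpha\in[2n]$ (so that $u_{j,1},u_{j,0}$ sit at the consecutive positions $2j-1,2j$) and writing capacities and demands as $4$-vectors indexed by $\ress=[4]$, I would set
\begin{gather*}
c(u_{j,\alpha}) = \big(q(j,\alpha),\;(2n{+}1)-q(j,\alpha),\;m{+}1,\;m{+}1\big),\qquad
c(v_i) = \big(2n{+}1,\;2n{+}1,\;i,\;(m{+}1)-i\big),\\
d(f_{i,j,\alpha}) = \big(q(j,\alpha),\;(2n{+}1)-q(j,\alpha),\;i,\;(m{+}1)-i\big),\\
d(e_j) = \big(2j{-}1,\;(2n{+}1)-2j,\;m{+}1,\;0\big),\qquad
d(d_i) = \big(2n{+}1,\;0,\;i,\;(m{+}1)-i\big).
\end{gather*}

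The remaining work is purely mechanical: for each of the three job types and the two machine types one checks that $d_r(\cdot)\le c_r(\cdot)$ holds for all $r\in\ress$ precisely for the intended machines. For $f_{i,j,\alpha}$, the resource-$1,2$ demand is the standard ``point at position $q(j,\alpha)$'' demand, so it is met by $u_{j,\alpha}$ and by no other literal machine and (since their first two capacities are maximal) by every clause machine; symmetrically the resource-$3,4$ demand singles out $v_i$ among the clause machines and is met by all literal machines, so overall $f_{i,j,\alpha}$ is eligible exactly on $\set{v_i,u_{j,\alpha}}$. The job $e_j$ is handled the same way on resources $1,2$ (its demand picks out the interval $\set{2j-1,2j}=\set{u_{j,1},u_{j,0}}$ of literal machines), while its resource-$3$ demand $m{+}1$ exceeds the capacity $i\le m$ of every clause machine and thus rules all of them out. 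Finally $d_i$ has resource-$1$ demand $2n{+}1$, which exceeds the capacity $q(j,\alpha)\le 2n$ of every literal machine, while its resources $3,4$ single out $v_i$ among the clause machines. I do not expect a genuine obstacle here; the only point requiring a little care is to fix the wildcard values ($m{+}1$ on resources $3,4$ for the literal machines, $2n{+}1$ on resources $1,2$ for the clause machines, and the zeros) large or small enough never to block an intended eligibility and always to block an unintended one, which the choices above do. Once the eligible-machine sets are verified to coincide with those in the proof of \cref{thm:EKSreduction}, the corollary is immediate.
\end{proof}
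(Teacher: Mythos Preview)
Your proposal is correct and essentially identical to the paper's own proof: the paper also encodes literal positions via resources $1,2$ and clause indices via resources $3,4$, with the same capacity vectors and the same demand for $f_{i,j,\alpha}$. The only differences are cosmetic---for $e_j$ the paper uses $(2j{-}1,(2n{+}1){-}2j,m{+}1,m{+}1)$ and for $d_i$ it uses $(2n{+}1,2n{+}1,i,(m{+}1){-}i)$, so in each case the paper makes \emph{both} ``blocking'' coordinates tight whereas you only make one of them tight and set the other to $0$; either choice yields the same eligibility sets.
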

\begin{proof}
We set $\ress = [4]$.
The clause machine $v_i$ has a resource capacity vector of $(2n + 1,2n + 1,i,(m+1) - i)$, and the literal machine $u_{j,\alpha}$ has capacity vectors $(2j - \alpha,(2n + 1) - (2j-\alpha),m+1,m+1)$.
Furthermore, the truth assignment job $e_j$ has a resource demand vector of $(2j - 1, (2n + 1) - 2j , m+1, m+1)$; the clause job $f_{i,j,\alpha}$ has a demand vector of $(2j - \alpha, (2n + 1) - (2j-\alpha) , i,(m+1) - i)$; and the dummy job $d_i$ has a demand vector of $(2n + 1, 2n + 1, i,(m+1) - i)$.
It is easy to verify that the resulting sets of eligible machines are the same as described in \cref{thm:EKSreduction}.
\end{proof}

\subparagraph{Three Resources.}

We present a reduction from \threedm to \rar{3}. 
The reduction is based on the classical result by Lenstra et al. \cite{LenstraST90} and very similar to a reduction by Bhaskara et al.~\cite{BhaskaraKTW13} for \lrs{4}.
However, there is a problem with the choice of processing times in the latter reduction (see \cref{appendix_reduction}), and the present result can be used to fix it.

Given an instance $(A,B,C,E)$ of \threedm, let $n=|A|$ and $E(x) = \sett{e\in E}{x \in e}$ for each $x\in A\cup B\cup C$.
Furthermore, we set $\alpha_A = 12$, $\alpha_B = 13 $, $\alpha_C = 22$, $\beta_A = 14$, $\beta_B = 15$ and $\beta_C = 18$.
Let $\ress = \set{A,B,C}$ and $\machs = E$.
For each machine $e$, we define the resource capacities as follows.
Let $X\in\set{A,B,C}$ and $x_i\in X\cap e$ be the element of $x$ with index $i$.
We set $c_{X}(e) = i$.
Furthermore, for each element $x_i \in X$ with index $i$ in $X\in\set{A,B,C}$, we introduce one element job with size $\alpha_X$ and $|E(x)| - 1$ dummy jobs with size $\beta_X$.
The resource demand for each of these jobs is given by $d(i)$ with $d_{X}(i) = i$ and $d_Y(i) = 0$ for $Y\in\set{A,B,C}\setminus\set{X}$. 

\begin{claim}\label{claim:3res_properties_of_values}
We have $\alpha_A + \alpha_B + \alpha_C = 47 = \beta_A + \beta_B + \beta_C$; 
any four numbers taken from $\Gamma = \set{\alpha_A,\alpha_B,\alpha_C,\beta_A,\beta_B,\beta_C} = \set{12,13,22,14,15,18}$ sum up to a value bigger than $47$; 
any selection of less than $3$ numbers sums up to a value smaller than $47$;
and for any three numbers $\gamma_1,\gamma_2,\gamma_3\in \Gamma$ with $\gamma_1\leq\gamma_2\leq\gamma_3$ and $\gamma_1 + \gamma_2 + \gamma_3 = 47$, we have either $(\gamma_1,\gamma_2,\gamma_3) = (\alpha_A,\alpha_B,\alpha_C)$ or $(\gamma_1,\gamma_2,\gamma_3) = (\beta_A,\beta_B,\beta_C)$.
\end{claim}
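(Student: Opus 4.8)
The statement is entirely arithmetical, so the plan is to verify each of the four assertions by direct computation, organising the small amount of case analysis so that the argument stays short. I would begin by recording the two defining identities $12+13+22 = 47$ and $14+15+18 = 47$, which settles the first assertion and also pins down the target value $47$ used throughout the rest of the argument.

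For the second and third assertions the natural route is a crude bound rather than enumeration. Every element of $\Gamma$ is at least $12$, so any four of them — even allowing repetitions, which is the relevant reading since a machine may receive several dummy jobs of equal size — sum to at least $48 > 47$. Conversely, the two largest elements of $\Gamma$ are $22$ and $18$, so any selection of at most two elements (again with or without repetition) sums to at most $\max\{22,\,22+22\} = 44 < 47$, and of course a single element is at most $22 < 47$. This disposes of both assertions and makes them robust against the multiset-versus-set reading.

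The only part calling for a genuine (though tiny) case distinction is the last one: determining all multisets $\{\gamma_1 \le \gamma_2 \le \gamma_3\}$ from $\Gamma$ with $\gamma_1+\gamma_2+\gamma_3 = 47$. I would fix the smallest entry $\gamma_1$ and in each case ask which pairs $\gamma_2 \le \gamma_3$ from $\Gamma$ (subject to $\gamma_1 \le \gamma_2$) complete the sum to $47$. For $\gamma_1 = 12$ one needs $\gamma_2+\gamma_3 = 35$, which forces $\{\gamma_2,\gamma_3\} = \{13,22\}$; for $\gamma_1 = 14$ one needs $\gamma_2+\gamma_3 = 33$, which forces $\{\gamma_2,\gamma_3\} = \{15,18\}$; and for $\gamma_1 \in \{13,15,18,22\}$ a quick check of the admissible pair sums shows none equals the required residual $47-\gamma_1$. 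Hence the only solutions are $(\alpha_A,\alpha_B,\alpha_C) = (12,13,22)$ and $(\beta_A,\beta_B,\beta_C) = (14,15,18)$, as claimed.

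I do not expect any real obstacle here; the two points to be careful about are making the case split in the last part genuinely exhaustive (all six choices of $\gamma_1$, and allowing repeated entries, which a short computation rules out anyway) and stating the claim with the multiset reading in mind so that it applies verbatim when the loads on the clause machines are analysed in the following claims.
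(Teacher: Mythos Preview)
Your proposal is correct and follows essentially the same approach as the paper: direct arithmetic for the first three assertions and a short case analysis for the fourth. The only cosmetic difference is that the paper first bounds $\gamma_3\in\{18,22\}$ and then splits on $\gamma_3$, whereas you exhaustively split on the smallest entry $\gamma_1$; both organisations lead to the same two solutions with comparable effort.
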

\begin{claimproof}
The first three assertions are obvious, and the fourth holds due to a simple case analysis:
\begin{itemize}
\item If $\gamma_1 > 15$, we have $\gamma_1 \geq 18$, and hence $47 = \gamma_1 + \gamma_2 + \gamma_3 \geq 3\cdot\gamma_1 = 54$: a contradiction.
\item Note that $\gamma_3 \geq (\gamma_2 + \gamma_3)/2 = (47 - \gamma_1)/2$. 
Hence, $ \gamma_1\leq 15$ implies $\gamma_3\geq 16$ and therefore $\gamma_3\in\set{18,22}$.
\item If we have $\gamma_3 = 22 = \alpha_{C}$, then $\gamma_1 \leq (\gamma_1 + \gamma_2)/2 = (47 - \gamma_3)/2 = 12.5$.
Hence, $\gamma_1 = 12 = \alpha_{A}$ and $\gamma_2 = 13 = \alpha_{B}$.
\item If we have $\gamma_3 = 18 = \beta_{C}$, than $\gamma_2 \geq (\gamma_1 + \gamma_2)/2 = (47 - \gamma_3)/2 = 14.5$.
Hence, $\gamma_2 \in\set{15,18}$. 
If $\gamma_2 = 15 = \beta_{B}$, then $\gamma_1 = 14 = \beta_{A}$, and if $\gamma_2 = 18$, then $\gamma_1 = 11\notin\Gamma$.
\end{itemize}
This concludes the proof of the claim.
\end{claimproof}
By brute force, it can be verified that $47$ is the smallest value such that suitable numbers $\alpha_A$, $\alpha_B$, $\alpha_C$, $\beta_A$, $\beta_B$ and $\beta_C$ exist and the above claim holds.
\begin{claim}\label{claim:3res_numbermachs*T=job_load}
The summed up size of all the element and dummy jobs is $47|\machs|$.
\end{claim}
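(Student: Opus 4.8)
The plan is to carry out a straightforward double-counting argument, grouping the jobs by which of the three sets $A$, $B$, $C$ their associated element belongs to, and then invoking the first identity from \cref{claim:3res_properties_of_values}, namely $\alpha_A + \alpha_B + \alpha_C = 47 = \beta_A + \beta_B + \beta_C$.

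First I would fix $X \in \set{A,B,C}$ and sum the sizes of all jobs introduced for elements of $X$. For each $x \in X$ there is one element job of size $\alpha_X$ and $|E(x)| - 1$ dummy jobs of size $\beta_X$, so these contribute $\sum_{x \in X}\parens[\big]{\alpha_X + (|E(x)| - 1)\beta_X} = n\alpha_X + \beta_X\parens[\big]{\sum_{x \in X}|E(x)| - n}$, using $|X| = n$.

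The key observation is then the double-counting identity $\sum_{x \in X}|E(x)| = |E|$: every triplet $e \in E$ contains exactly one element of $X$, so summing $|E(x)|$ over $x \in X$ counts each triplet exactly once. Hence the total size of the jobs associated with $X$ equals $n\alpha_X + \beta_X(|E| - n)$.

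Finally I would sum over $X \in \set{A,B,C}$ and apply \cref{claim:3res_properties_of_values}: the total is $n(\alpha_A + \alpha_B + \alpha_C) + (|E| - n)(\beta_A + \beta_B + \beta_C) = 47n + 47(|E| - n) = 47|E| = 47|\machs|$, since $\machs = E$. There is no real obstacle here — the only point worth stating explicitly is the identity $\sum_{x \in X}|E(x)| = |E|$, which is what lets the per-set contributions collapse cleanly.
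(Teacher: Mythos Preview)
Your proposal is correct and essentially identical to the paper's proof. The paper groups by element jobs versus dummy jobs rather than by the set $X$, but both arguments reduce to the same computation using $\sum_{x\in X}|E(x)| = |E|$ (the paper phrases this as $\sett{E(x)}{x\in X}$ being a partition of $E$) together with $\alpha_A+\alpha_B+\alpha_C = \beta_A+\beta_B+\beta_C = 47$ from \cref{claim:3res_properties_of_values}.
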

\begin{claimproof}
We have exactly $n$ element jobs with size $\alpha_A$, $\alpha_B$ and $\alpha_C$, respectively, yielding an overall load of $47 n$.
The dummy jobs have an overall load of:
\begin{align*}
&\beta_A\sum_{a\in A}(|E(a)| - 1) + \beta_B\sum_{b\in B}(|E(b)| - 1) +\beta_C\sum_{c\in C}(|E(b)| - 1)\\
=&(\beta_A + \beta_B + \beta_C) (|E| - n) = 47(|\machs| - n)
\end{align*}
In this equation, we used the simple fact that $\sett{E(x)}{x\in X}$ is a partition of $E$ for each $X\in\set{A,B,C}$, and hence $|E| = \sum_{x\in X}|E(x)|$.
\end{claimproof}
These two claims imply:
\begin{claim}\label{claim:3res_3jobspermach}
In any schedule for the constructed instance with makespan $47$, each machine receives exactly three jobs with sizes $\gamma_1,\gamma_2,\gamma_3$ such that $(\gamma_1,\gamma_2,\gamma_3) = (\alpha_A,\alpha_B,\alpha_C)$ or $(\gamma_1,\gamma_2,\gamma_3) = (\beta_A,\beta_B,\beta_C)$.
\end{claim}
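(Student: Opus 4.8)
The plan is to combine the two preceding claims in the obvious way. First I would observe that, by \cref{claim:3res_numbermachs*T=job_load}, the total size of all element and dummy jobs equals $47|\machs|$. Hence in a schedule with makespan $47$ no machine can receive load strictly less than $47$ either, since then some other machine would be forced above $47$; so \emph{every} machine receives jobs of total size exactly $47$.

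Next I would bound the number of jobs per machine using \cref{claim:3res_properties_of_values}. Every job size lies in $\Gamma = \set{12,13,22,14,15,18}$, so a single machine cannot carry four or more jobs (any four numbers from $\Gamma$ sum to more than $47$) and cannot carry two or fewer (any at most two numbers from $\Gamma$ sum to less than $47$). Therefore each machine receives exactly three jobs, whose sizes sum to exactly $47$.

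Finally, writing the three sizes on a fixed machine as $\gamma_1 \le \gamma_2 \le \gamma_3$ with $\gamma_1+\gamma_2+\gamma_3 = 47$ and applying the last part of \cref{claim:3res_properties_of_values}, we conclude that either $(\gamma_1,\gamma_2,\gamma_3) = (\alpha_A,\alpha_B,\alpha_C) = (12,13,22)$ or $(\gamma_1,\gamma_2,\gamma_3) = (\beta_A,\beta_B,\beta_C) = (14,15,18)$, both of which are already in sorted order. This is exactly the assertion of the claim.

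There is no real obstacle here: the statement is a direct corollary of the arithmetic facts packaged in \cref{claim:3res_properties_of_values} together with the load-counting identity of \cref{claim:3res_numbermachs*T=job_load}. The only point that deserves an explicit sentence is the step from ``makespan $47$'' to ``every machine has load \emph{exactly} $47$'', which is where the total-size equality is used; everything else is immediate.
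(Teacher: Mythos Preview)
Your proposal is correct and matches the paper's own reasoning: the paper simply states that the claim follows from \cref{claim:3res_properties_of_values} and \cref{claim:3res_numbermachs*T=job_load}, and your write-up spells out exactly this implication, including the one nontrivial observation that makespan $47$ together with total load $47|\machs|$ forces every machine to have load exactly $47$.
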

Using these claims, we can show:
\begin{proposition}
There is a perfect matching for the given \threedm instance, if and only if there is a schedule with makespan $47$ for the constructed \rar{3} instance. 
\end{proposition}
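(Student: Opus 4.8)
The plan is to follow the template used for the earlier reductions, exploiting Claim~\ref{claim:3res_3jobspermach} together with the eligibility structure of the construction. The first step is to make the eligibility relation explicit: since the job belonging to the element $x_i\in X$ has demand $i$ on resource $X$ and demand $0$ on the two other resources, while machine $e$ has capacity $c_X(e)$ equal to the index of the unique element of $X$ contained in $e$, such a job is eligible on $e$ exactly if the index of $X\cap e$ is at least $i$. In particular every job of $x_i$ is eligible on every machine containing $x_i$.

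For the direction where a 3D-matching exists, let $F\subseteq E$ be such a matching and, for each $x\in A\cup B\cup C$, let $e_x\in F$ be the unique triple of $F$ covering $x$. On each $e\in F$ I would place the three element jobs of the three elements of $e$; these are eligible (each realises the capacity of its machine exactly), each element job is used once since $F$ is a perfect cover, and the resulting load is $\alpha_A+\alpha_B+\alpha_C=47$ by Claim~\ref{claim:3res_properties_of_values}. I would then distribute, for each element $x$, its $|E(x)|-1$ dummy jobs onto the $|E(x)|-1$ machines of $E(x)\setminus\{e_x\}$, one per machine. Because, for a fixed set $X$, the sets $E(x)$ with $x\in X$ partition $E$, every dummy job gets placed, exactly the machines of $E\setminus F$ receive dummy jobs, and each such machine receives exactly one dummy job per set (two dummies for distinct elements of the same set would force that machine to contain both elements), so its load is $\beta_A+\beta_B+\beta_C=47$.

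For the converse, assume a schedule with makespan $47$ is given. By Claim~\ref{claim:3res_3jobspermach}, every machine holds exactly three jobs whose size multiset is $(\alpha_A,\alpha_B,\alpha_C)$ or $(\beta_A,\beta_B,\beta_C)$; since the six values of $\Gamma$ are pairwise distinct, each machine holds exactly one job of an element of $A$, one of $B$, one of $C$, and these are either all three element jobs or all three dummy jobs. The crucial step is a counting identity, carried out separately for each $X\in\set{A,B,C}$. Letting $r(e)$ be the index of the element of $X$ whose job lies on $e$, eligibility gives $r(e)\le c_X(e)$ for every $e$. Summing over all machines, the left-hand side equals the sum over all $X$-jobs of the index of the corresponding element, which is $\sum_{x\in X}|E(x)|\cdot(\text{index of }x)$ since there are one element job and $|E(x)|-1$ dummy jobs per $x$; and the right-hand side equals the same quantity because each $x\in X$ is the $X$-element of exactly $|E(x)|$ machines. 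Hence $\sum_e r(e)=\sum_e c_X(e)$ with $r(e)\le c_X(e)$ termwise, which forces $r(e)=c_X(e)$ for every $e$: the $X$-job on $e$ is a job of the element $X\cap e$.

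It then remains to set $F$ to be the set of machines carrying element jobs and to verify that it is a 3D-matching. The single element job of each $x\in X$ lies on a machine of $F$ whose $X$-element is $x$, and conversely any machine of $F$ whose $X$-element is $x$ must carry that job; hence $e\mapsto X\cap e$ is a bijection from $F$ onto $X$ for each $X\in\set{A,B,C}$, so $F$ perfectly covers $A$, $B$ and $C$. I expect the per-resource counting identity relating $\sum_e r(e)$ to $\sum_e c_X(e)$ to be the only point requiring care; everything else is routine bookkeeping with the numerical facts from Claim~\ref{claim:3res_properties_of_values}.
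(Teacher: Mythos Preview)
Your argument is correct. The forward direction matches the paper's proof essentially verbatim. For the converse, the paper proceeds by a peeling argument along the indices: for each $X\in\set{A,B,C}$ and each $i$, the machines $\bigcup_{j\ge i}E(x_j)$ are the only ones on which jobs corresponding to $x_i,\dots,x_n$ are eligible, and since there are exactly $\sum_{j\ge i}|E(x_j)|$ such jobs, the machines in $E(x_i)$ are forced (by downward induction on $i$) to receive precisely the jobs of $x_i$. Your global counting identity $\sum_e r(e)=\sum_e c_X(e)$ with $r(e)\le c_X(e)$ termwise is a different and arguably cleaner route to the same conclusion: it establishes $r(e)=c_X(e)$ for all $e$ in one stroke, without the inductive peeling. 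Both arguments exploit the same underlying fact---that the multiset of indices demanded equals the multiset of capacities supplied---but your formulation packages it as a single equality of sums rather than a cascade of tight inequalities. The remaining verification that the element-job machines form a perfect matching is identical in both proofs.
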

\begin{proof}
Let $F$ be a perfect matching for the \threedm instance.
For each $x\in A\cup B\cup C$ we assign the corresponding element job to the machine $e$ with $x\in e$ and $e\in F$.
Furthermore, the dummy jobs corresponding to $x\in X$ with $X\in\set{A,B,C}$, are distributed to the machines $e$ with $x\in e$ and $e\notin F$ such that each machine receives exactly one job.
Hence, each machine $e\in E$ receives exactly three eligible jobs either with sizes $\alpha_A$, $\alpha_B$ and $\alpha_C$ (if $e\in F$) or $\beta_A$, $\beta_B$ and $\beta_C$ (otherwise). 

Next, we assume that there is a schedule with makespan $47$ for the scheduling instance.
For each $X\in \set{A,B,C}$, there are exactly $|\machs|$ many jobs with size $\alpha_X$ or $\beta_X$, and due to the above claims, we know that each machine receives exactly one of these jobs.
For each $j\in [n]$, let $x_j\in X$ be the element with index $j$ in $X\in\set{A,B,C}$.
The machines $\bigcup_{j=i}^{n}E(x_j)$ are the only machines that may process jobs corresponding to $x_i,\dots, x_n$ for each $i\in[n]$ and we have exactly $\sum_{j=i}^{n} |E(x_j)|$ many such jobs.
Hence, the machines from $E(x_i)$ receive exactly the jobs corresponding to $x_i$.
Now, considering this and Claim \ref{claim:3res_3jobspermach}, we get a perfect matching by selecting the machines that process three element jobs.
\end{proof}

\subparagraph*{Two Resources.}

We are able to refine the result for three resources to work for two resources as well by using another variant of \threedm as the starting point of the reduction.
The problem \tailoredmatching was introduced by Chen et al. \cite{ChenYZ14} to get an improved lower bound for the approximation ratio of rank four unrelated scheduling.
In this problem, a set of six disjoint sets $\mathcal{E} = \set{A,A',B,B',C,C'}$ is given.
For each $X\in \mathcal{E}$, we have $|X| = 3n$ for some $n\in\NN$ and the sets are indexed by $[3n]$, e.g., $A = \set{a_1,a_2,\dots, a_{3n}}$.
Furthermore, there are two sets of triplets $E_1\subseteq\sett[\big]{\set{a_i,b_j,c_j},\set{a'_i,b_j,c_j}}{i\in[3n], j\in[3n]}$ and $E_2 = \sett[\big]{\set{a_i,b'_i,c'_i},\set{a'_i,b'_i,c'_{\zeta(i)}}}{i\in[3n]}$ with $\zeta(3k + 1) = 3k + 2$, $\zeta(3k + 2) = 3k + 3$ and $\zeta(3k + 3) = 3k + 1$ for each $k\in\set{0,\dots,n-1}$.
Note that the second set of triplets is already determined by the element sets in the input.
Similar to the classical \threedm problem, the goal is to decide whether there is a subset $F\subseteq E_1\cup E_2$ that perfectly covers the element set, that is, for each $x\in \bigcup_{X\in\mathcal{E}}X$ there is exactly one triplet $e\in F$ with $x\in e$.
Furthermore, we assume that for each $x\in \bigcup_{X\in\mathcal{E}}X$ there is at least one $e\in E$ with $x\in E$ (otherwise the problem is trivial).

Let $\alpha_A = \alpha_{A'} = 12$, $\alpha_B = \alpha_{B'} = 13 $, $\alpha_C= \alpha_{C'}=22$, $\beta_A = \beta_{A'} =14$, $\beta_B = \beta_{B'} =15$ and $\beta_C = \beta_{C'} = 18$.
We set $\machs = E_1\cup E_2$ and $\ress = [2]$.
The corresponding resource capacity vectors are presented in Table \ref{table:3res_res_values}.
Furthermore, for each element $x \in X$ in $X\in\mathcal{E}$, we introduce one element job with size $\alpha_X$ and $|E(x)| - 1$ dummy jobs with size $\beta_X$.
The vector of resource demands for each such job is given in Table \ref{table:3res_res_values}.
\begin{table}
\centering
\caption{The resource demands and capacities for the different job (types) and machines.}
\begin{tabular}{ll|ll}
\toprule
Jobs & Resources & Machines & Resources \\\midrule
$a_i$ & $(2i,0)$ & $\set{a_i,b_j, c_j}$ & $(2i, 3n + j)$\\
$a'_i$ & $(2i - 1,0)$ & $\set{a'_i,b_j, c_j}$ & $(2i-1, 3n + j)$\\
$b_j$ & $(0, 3n + j)$ & $\set{a_{i},b'_{i}, c'_{i}}$ & $(2i, i)$\\
$c_j$ & $(0, 3n + j)$ & $\set{a'_{i},b'_{i}, c'_{\zeta(i)}}$ & $(2i - 1, \zeta(i))$\\
$b'_i$ & $(2i - 1, 0)$ & &\\
$c'_{i}$ & $(0, i)$ & &\\
\bottomrule
\end{tabular}
\label{table:3res_res_values}
\end{table}
Note that Claim \ref{claim:3res_properties_of_values}, \ref{claim:3res_numbermachs*T=job_load} and \ref{claim:3res_3jobspermach} hold for this reduction as well and with the same reasoning.
Furthermore, a simple case analysis yields:
\begin{claim}
For each $x\in \bigcup_{X\in\mathcal{E}}X$, a (dummy or element) job corresponding to $x$ is eligible on each machine $e$ with $x\in e$.
\end{claim}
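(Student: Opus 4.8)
The plan is to verify, for each element $x$ and each machine $e$ with $x\in e$, that the demand vector of any job corresponding to $x$ is coordinatewise at most the capacity vector of $e$, and conversely that no job corresponding to $x$ is eligible on a machine not containing $x$. All the demand and capacity data is explicit in Table~\ref{table:3res_res_values}, so this is a finite case analysis over the possible positions in which an element of each of the six classes $A,A',B,B',C,C'$ can appear in a triplet of $E_1\cup E_2$. Since element jobs and dummy jobs corresponding to the same $x$ share the same demand vector, it suffices to check the demand vector once per element class.

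First I would organize the check by element class. For $a_i\in A$: the only triplets containing $a_i$ are those in $E_1$ of the form $\set{a_i,b_j,c_j}$ (capacity $(2i,3n+j)$) and those in $E_2$ of the form $\set{a_i,b'_i,c'_i}$ (capacity $(2i,i)$); the demand $(2i,0)$ satisfies $2i\le 2i$ and $0\le 3n+j$ as well as $0\le i$, so $a_i$ is eligible on all of them. For $a'_i\in A'$: the containing triplets are $\set{a'_i,b_j,c_j}$ with capacity $(2i-1,3n+j)$ and $\set{a'_i,b'_i,c'_{\zeta(i)}}$ with capacity $(2i-1,\zeta(i))$; the demand $(2i-1,0)$ is dominated in both. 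For $b_j\in B$ (demand $(0,3n+j)$) the only triplets are $\set{a_i,b_j,c_j}$ and $\set{a'_i,b_j,c_j}$, both with second capacity coordinate $3n+j$, and first coordinate nonnegative, so eligibility holds; likewise for $c_j\in C$ with the identical demand $(0,3n+j)$. For $b'_i\in B'$ (demand $(2i-1,0)$) the containing triplets are $\set{a_i,b'_i,c'_i}$ with capacity $(2i,i)$ and $\set{a'_i,b'_i,c'_{\zeta(i)}}$ with capacity $(2i-1,\zeta(i))$, and in both cases $2i-1\le 2i$, $2i-1\le 2i-1$ and $0$ is at most the second coordinate. Finally for $c'_k\in C'$ (demand $(0,k)$): $c'_k$ appears in $\set{a_k,b'_k,c'_k}$ with capacity $(2k,k)$, where $k\le k$, and in $\set{a'_i,b'_i,c'_{\zeta(i)}}$ with $\zeta(i)=k$, i.e.\ capacity $(2i-1,\zeta(i))=(2i-1,k)$, where again $k\le k$; so eligibility holds in all cases.

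Second, I would argue the converse direction, that a job corresponding to $x$ is \emph{not} eligible on any machine $e$ with $x\notin e$. For the $A$-coordinate (first resource), note that the capacities used for the first coordinate of $E_1$-machines are exactly $2i$ or $2i-1$ where $i$ is the index of the $A$- or $A'$-element in that triplet, and these values are all distinct across $i$ and across the parity distinguishing $A$ from $A'$; hence the demand $2i$ of $a_i$ (resp.\ $2i-1$ of $a'_i$) exceeds the first capacity of any machine whose $A$-slot holds a different element. For $E_2$-machines the first coordinate similarly encodes the index and parity. The second-coordinate argument is analogous: values of the form $3n+j$ separate the $E_1$-machines by the index $j$ of their $B,C$-slot, and values of the form $i$ or $\zeta(i)$ in $[3n]$ (all strictly below $3n+1$) separate the $E_2$-machines; a $b_j$ or $c_j$ job has demand $3n+j$ which is too large for any $E_2$-machine and for any $E_1$-machine with a different $j$, while a $b'_i$ or $c'_k$ job has a small second-coordinate demand that is only met when the relevant index matches. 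Combining the two coordinates, the set of eligible machines for a job corresponding to $x$ is exactly $\sett{e\in\machs}{x\in e}=E(x)$.

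The main obstacle, though entirely routine, is bookkeeping the $E_2$-triplets correctly because of the cyclic shift $\zeta$: one must be careful that $c'_k$ is eligible precisely on the machine $\set{a_k,b'_k,c'_k}$ \emph{and} on the machine $\set{a'_i,b'_i,c'_{\zeta(i)}}$ with $i=\zeta^{-1}(k)$, and that the second capacity coordinate $\zeta(i)$ of the latter machine equals $k$, so that the demand $(0,k)$ fits. Once this indexing is checked, the claim follows, and in particular the reduction inherits the analogues of Claims~\ref{claim:3res_properties_of_values}, \ref{claim:3res_numbermachs*T=job_load} and~\ref{claim:3res_3jobspermach} with the same proofs, since those depend only on the job sizes and not on the resource dimension.
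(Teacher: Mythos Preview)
Your first paragraph proves exactly what the claim asserts, by the same case analysis the paper has in mind, and is correct.

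The second paragraph, however, attempts a converse that is \emph{not} part of the claim and is in fact false. For instance, a job corresponding to $a_i$ has demand $(2i,0)$ and is eligible on any machine $\set{a_{i'},b_j,c_j}$ with $i'>i$ (capacity $(2i',3n+j)$), even though $a_i\notin\set{a_{i'},b_j,c_j}$. Similarly a $b'_i$-job with demand $(2i-1,0)$ is eligible on $\set{a_i,b_j,c_j}$, and a $c'_k$-job with demand $(0,k)$ is eligible on every $E_1$-machine. Your assertion that ``the demand $2i$ of $a_i$ exceeds the first capacity of any machine whose $A$-slot holds a different element'' fails whenever that different element has a larger index; likewise ``a $b'_i$ or $c'_k$ job has a small second-coordinate demand that is only met when the relevant index matches'' is wrong because the second demand of $b'_i$ is $0$.

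The paper does not need the converse here. In the proof of the subsequent lemma it instead uses a peeling argument: the jobs for $a_{3n}$ can only go to $E(a_{3n})$, and by a counting argument must fill those machines exactly; then the jobs for $a'_{3n}$ can only go among the \emph{remaining} machines to $E(a'_{3n})$; iterating downward over the first resource handles $A\cup A'$, the analogous argument on the second resource handles $C\cup C'$, and combining both resources handles $B\cup B'$. So drop your second paragraph; the first already establishes the claim.
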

Using these claims, we can conclude the proof of Theorem \ref{thm:result_rar}:
\begin{lemma}
There is a perfect matching for the given \tailoredmatching instance, if and only if there is a schedule with makespan $47$ for the constructed \rar{2} instance. 
\end{lemma}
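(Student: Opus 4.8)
The plan is to mirror the structure of the proof of the corresponding proposition for the three-resource reduction, since Claims~\ref{claim:3res_properties_of_values}, \ref{claim:3res_numbermachs*T=job_load} and \ref{claim:3res_3jobspermach} together with the new eligibility claim carry over verbatim. The only genuinely new content is the bookkeeping needed to verify that, despite having collapsed three resources into two, the eligibility sets of the jobs still exactly match the triplet structure, and that the index-based separation argument still forces each element job onto a machine containing the corresponding element.

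First I would prove the forward direction. Given a perfect matching $F\subseteq E_1\cup E_2$ for the \tailoredmatching instance, assign for each $x\in\bigcup_{X\in\mathcal{E}}X$ the element job corresponding to $x$ to the unique $e\in F$ with $x\in e$, and distribute the $|E(x)|-1$ dummy jobs for $x$ among the machines $e\notin F$ with $x\in e$, one per machine. Since $\sett{E(x)}{x\in X}$ partitions $\machs$ for each $X\in\mathcal{E}$ — this uses that every machine (triplet) contains exactly one element from each of the six sets, which one reads off Table~\ref{table:3res_res_values} — every machine receives exactly three jobs, one per "column" $X\in\{A,B,C\}$ (using $\alpha_A=\alpha_{A'}$ etc.), and these are the $\alpha$-sizes if $e\in F$ and the $\beta$-sizes otherwise; by Claim~\ref{claim:3res_properties_of_values} both triples sum to $47$, so the makespan is $47$. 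One must also invoke the eligibility claim to see each assignment is legal.

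For the converse, suppose a schedule with makespan $47$ exists. By Claim~\ref{claim:3res_3jobspermach} every machine gets exactly three jobs, one of size in $\{\alpha_X,\beta_X\}$ for each $X\in\{A,B,C\}$ after identifying $A$ with $A'$, $B$ with $B'$, $C$ with $C'$ by size. The key step is the index-based separation: for a fixed $X$ and index threshold $i$, only machines $e$ with $c_X$-capacity at least the demand of $x_i,\dots,x_{3n}$ can process the corresponding jobs, and by the choice of capacities in Table~\ref{table:3res_res_values} these are exactly the machines containing one of $x_i,\dots,x_{3n}$ in the $X$-coordinate; a counting argument (there are exactly $\sum_{j\ge i}|E(x_j)|$ such jobs and machines, counting multiplicity) forces, by downward induction on $i$, the jobs corresponding to $x_i$ onto $E(x_i)$ for every $i$. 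Then selecting the machines that receive three element jobs yields a subset $F$ of triplets covering each element exactly once, i.e., a perfect matching.

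The main obstacle will be the separation argument in the two-resource setting: with only two coordinates, the $A$/$A'$ elements and the $B$/$B'$/$C$/$C'$ elements must share the two resource axes, so one has to check carefully (the "simple case analysis" behind the eligibility claim) that the capacity/demand pairs in Table~\ref{table:3res_res_values} do not accidentally make a job eligible on a machine not containing its element — in particular that the first coordinate cleanly separates $a_i$ from $a'_i$ and from all machines with a different $A$-index, and simultaneously handles $b'_i$, while the second coordinate disambiguates the $B$-, $C$-, $B'$- and $C'$-indices including the twist introduced by $\zeta$. Once that table check and the accompanying partition claim are in place, the rest is a routine transcription of the three-resource proof.
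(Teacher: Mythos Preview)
Your outline matches the paper's proof closely, but two slips would derail you if carried out as written.

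First, a triplet in $E_1\cup E_2$ contains three elements, one from each of the pairs $A\cup A'$, $B\cup B'$, $C\cup C'$, \emph{not} one from each of the six sets; so $\sett{E(x)}{x\in X}$ partitions $\machs$ only for $X\in\{A\cup A',\,B\cup B',\,C\cup C'\}$, not for each $X\in\mathcal{E}$. Your forward direction survives because you immediately group by ``column $X\in\{A,B,C\}$'' anyway, but the stated justification is false.

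Second, and more important: the case analysis you plan for the last paragraph will \emph{not} show that a job for $x$ is eligible only on machines containing $x$. That is simply false here---e.g.\ the job for $a'_i$ has demand $(2i-1,0)$ and is eligible on every machine containing $a_k$ or $a'_k$ with $k\ge i$. The paper's eligibility claim asserts only the inclusion $E(x)\subseteq\emachs(\text{job for }x)$, which is all the forward direction needs. The backward direction does not use exact eligibility either: it runs the downward induction you describe, but on the \emph{resource value} rather than on the element index---interleaving $a_{3n},a'_{3n},a_{3n-1},a'_{3n-1},\dots$ along the first coordinate, then $C\cup C'$ along the second, and finally (the one place the two-resource collapse matters) $B$ along the second coordinate and $B'$ along the first---each step using that exactly one job of the relevant size class lands on each machine (Claim~\ref{claim:3res_3jobspermach}). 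So what you must check is only the one-sided inclusion plus the correct monotone ordering of capacities along each axis, not exact eligibility.
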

\begin{proof}
Let $F$ be a perfect matching for the \tailoredmatching instance.
For each $x\in \bigcup_{X\in\mathcal{E}}X$, we assign the corresponding element job to the machine $e$ with $x\in e$ and $e\in F$.
Furthermore, the dummy jobs corresponding to $x\in X$ with $X\in\mathcal{E}$, are distributed to the machines $e$ with $x\in e$ and $e\notin F$ such that each such machine receives exactly one job.
Hence, each machine $e\in E$ receives exactly three eligible jobs either with sizes $\alpha_A$, $\alpha_B$ and $\alpha_C$ or $\beta_A$, $\beta_B$ and $\beta_C$. 

Next, we assume that there is a schedule with makespan $47$ for the scheduling instance.
There are exactly $|\machs|$ many jobs with size $\alpha_A = \alpha_{A'}$ or $\beta_A = \beta_{A'}$ corresponding to elements of $A\cup A'$, and due to Claim \ref{claim:3res_3jobspermach} we know that each machine receives exactly one of these jobs.
The machines corresponding to triplets from $E(a_{3n})$ are the only ones that can process the $|E(a_{3n})|$ jobs corresponding to $a_{3n}$, and hence each of these machines receives exactly one of these jobs.
Now, the machines corresponding to triplets from $E(a'_{3n})$ are the only \emph{remaining} ones that can process the $|E(a'_{3n})|$ jobs corresponding to $a'_{3n}$.
Iterating this argument, we get that each machine $e$ receives exactly one job corresponding to some $x\in A\cup A'$ with $x\in e$.
Note that the above argument was based on the first resource value. 
Considering the second resource value yields the same result for each $x\in C \cup C'$.
For the elements $x\in B\cup B'$ both resource values have to be considered, namely the second for $b\in B$ and the first for $b'\in B'$, but the argument stays the same.
Summing up, each machine $e = \set{x,y,z}$ receives exactly three jobs corresponding to $x$, $y$ and $z$.
Now, considering this and Claim \ref{claim:3res_3jobspermach}, we get a perfect matching by selecting the triplets $e$ that processes three element jobs.
\end{proof}

\section{Conclusion}
\label{sec:conclusion}

In this paper we provided hardness of approximation results for scheduling with interval and resource restrictions.
We list some possible future research directions:

From the perspective of complexity, tighter hardness results seem plausible.
In particular, we have the same inapproximability results for \rar{2} and \rar{3} and it would be interesting to find a better result for \rar{3}.

From the algorithmic perspective, it remains open whether any of the studied problems and \rai in particular admits an approximation algorithm with a rate smaller than $2$.
There have been some results \cite{WangS16,Schwarz10Thesis} for \rai using promising linear programming relaxations that may be useful in this context.
Another possibility is the application of the local search techniques originally used by Svensson \cite{Svensson12} for the restricted assignment problem.
This approach recently yielded a breakthrough for the graph balancing problem \cite{JansenR18GrapbBalancing}.

Finally, while a PTAS for \rar{1} is known \cite{OLL08}, it is unclear whether the problem admits a so called \emph{efficient} PTAS with a running time of the form $f(1/\eps)\mathrm{poly}(|I|)$ for some computable function $f$.

\bibliography{references}
\newpage
\appendix

\section{Examples for Section \ref{sec:interval}}

\subparagraph{Example Simple Reduction.}

The following formula is an instance of \tailoredsat with minimal size:
\[(\neg x_2, \neg x_3, x_1)_1 \wedge (\neg x_1, x_2, \neg x_3)_1 \wedge (x_3,\neg x_2, x_1)_2 \wedge (\neg x_1, x_3, x_2)_2\]
Note that the formula is fulfilled if all the variables take the value $\true$.
The corresponding restricted assignment instance is depicted in \cref{fig:simple_example}.

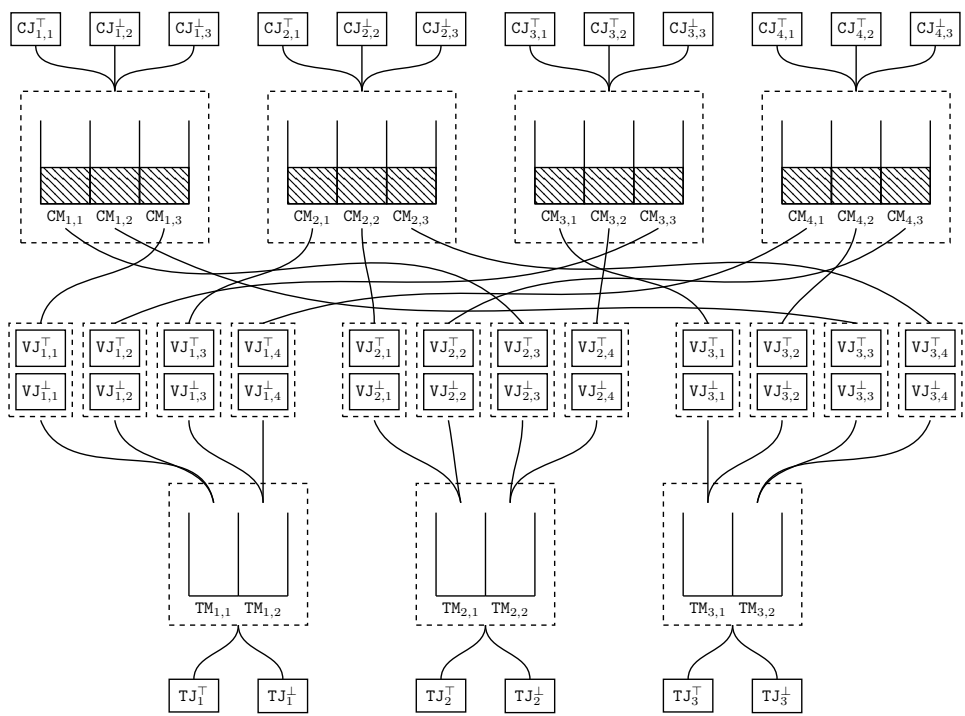
\begin{figure}[h]
\centering

\scalebox{0.65}{
\begin{tikzpicture}
\pgfmathsetmacro{\mw}{1.0}
\pgfmathsetmacro{\mh}{1.7}
\pgfmathsetmacro{\uw}{5.0}
\pgfmathsetmacro{\uh}{8.0}
\pgfmathsetmacro{\lw}{1.0}
\pgfmathsetmacro{\ls}{3.0*\mw}
\pgfmathsetmacro{\pj}{1.11/1.5}
\pgfmathsetmacro{\cjt}{1.00/1.5}
\pgfmathsetmacro{\cjf}{1.01/1.5}
\pgfmathsetmacro{\tjt}{1.00/1.5}
\pgfmathsetmacro{\tjf}{1.02/1.5}
\pgfmathsetmacro{\vjt}{1.11/1.5}
\pgfmathsetmacro{\vjf}{1.10/1.5}

\foreach \x in {0,...,3}{ 
    \tikzmath{integer \z; \z = \x + 1;} 
    \draw[thick,dashed] (\x * \uw - 0.4*\mw, \uh - 0.35*\mh - 0.2) rectangle (\x * \uw + 0.4*\mw + 3*\mw, \uh + 0.35*\mh + \mh);
    \foreach \y in {0,...,3}{
        \draw[thick] (\x * \uw + \y*\mw, \uh) -- (\x * \uw + \y*\mw, \uh + \mh);
    }
    \foreach \y in {1,...,3}{
        \draw[thick] (\x * \uw + \y*\mw - \mw, \uh) -- (\x * \uw+ \y*\mw, \uh) node [midway, yshift = - 0.3 cm] {$\mathtt{CM}_{\z,\y}$};
        \draw[thick, pattern=north west lines] (\x * \uw + \y*\mw - \mw, \uh) rectangle (\x * \uw+ \y*\mw, \uh + \pj);
    }
    \draw[thick] (\x * \uw - 0.6*\mw, \uh + 1.9*\mh) rectangle (\x * \uw + 0.4*\mw, \uh + 1.9*\mh + \cjt) node [midway] {$\mathtt{CJ}_{\z,1}^{\true}$};
    \draw[thick] (\x * \uw - 0.1*\mw, \uh + 1.9*\mh) to [out=270,in=90] (\x * \uw + 1.5*\mw, \uh + 0.35*\mh + \mh);
    \draw[thick] (\x * \uw + 2.6*\mw, \uh + 1.9*\mh) rectangle (\x * \uw + 3.6*\mw, \uh + 1.9*\mh + \cjf) node [midway] {$\mathtt{CJ}_{\z,3}^{\false}$};
    \draw[thick] (\x * \uw + 3.1*\mw, \uh + 1.9*\mh) to [out=270,in=90] (\x * \uw + 1.5*\mw, \uh + 0.35*\mh + \mh);
    \draw[thick] (\x * \uw + 1.5*\mw, \uh + 1.9*\mh) to [out=270,in=90] (\x * \uw + 1.5*\mw, \uh + 0.35*\mh + \mh);
}
\draw[thick] (0 * \uw + 1.0*\mw, \uh + 1.9*\mh) rectangle (0 * \uw + 2.0*\mw, \uh + 1.9*\mh + \cjf) node [midway] {$\mathtt{CJ}_{1,2}^{\false}$};
\draw[thick] (1 * \uw + 1.0*\mw, \uh + 1.9*\mh) rectangle (1 * \uw + 2.0*\mw, \uh + 1.9*\mh + \cjf) node [midway] {$\mathtt{CJ}_{2,2}^{\false}$};
\draw[thick] (2 * \uw + 1.0*\mw, \uh + 1.9*\mh) rectangle (2 * \uw + 2.0*\mw, \uh + 1.9*\mh + \cjt) node [midway] {$\mathtt{CJ}_{3,2}^{\true}$};
\draw[thick] (3 * \uw + 1.0*\mw, \uh + 1.9*\mh) rectangle (3 * \uw + 2.0*\mw, \uh + 1.9*\mh + \cjt) node [midway] {$\mathtt{CJ}_{4,2}^{\true}$};

\foreach \x in {0,...,2}{
    \tikzmath{integer \z; \z = \x + 1;} 
    \draw[thick,dashed] (\ls + \x * \uw - 0.4*\mw, - 0.35*\mh) rectangle (\ls + \x * \uw + 0.4*\mw + 2*\mw,  0.35*\mh + \mh);
    \foreach \y in {0,...,2}{
        \draw[thick] (\ls + \x * \uw + \y*\mw, 0) -- (\ls + \x * \uw + \y*\mw, \mh);
    }
    \foreach \y in {1,...,2}{
        \draw[thick] (\ls + \x * \uw + \y*\mw - \mw, 0) -- (\ls + \x * \uw+ \y*\mw, 0) node [midway, yshift = - 0.3 cm] {$\mathtt{TM}_{\z,\y}$};
    }
    \draw[thick] (\ls + \x * \uw - 0.4*\mw, -1.0*\mh) rectangle (\ls + \x * \uw + 0.6*\mw, -1.0*\mh - \tjt) node [midway] {$\mathtt{TJ}_{\z}^{\true}$};
    \draw[thick] (\ls + \x * \uw + 0.1*\mw, -1.0*\mh) to [out=90,in=270] (\ls + \x * \uw + \mw, - 0.35*\mh);
    \draw[thick] (\ls + \x * \uw + 1.4*\mw, -1.0*\mh) rectangle (\ls + \x * \uw + 2.4*\mw, -1.0*\mh - \tjf) node [midway] {$\mathtt{TJ}_{\z}^{\false}$};
    \draw[thick] (\ls + \x * \uw + 1.9*\mw, -1.0*\mh) to [out=90,in=270] (\ls + \x * \uw + \mw, - 0.35*\mh);
    \tikzmath{\s = 0.35;} 
    \draw[thick] (\ls + \x * \uw - 1.75*\mw + 0.35*\x*\uw - 0.35*\uw , 0.53*\uh + 0.6 * \vjt) rectangle (\ls + \x * \uw + \mw - 1.75*\mw+ 0.35*\x*\uw - 0.35*\uw, 0.53*\uh + \vjt + 0.6 * \vjt) node [midway] {$\mathtt{VJ}_{\z,1}^{\true}$};
    \draw[thick] (\ls + \x * \uw - 1.75*\mw+ 0.35*\x*\uw - 0.35*\uw, 0.53*\uh - 0.6 * \vjf) rectangle (\ls + \x * \uw + \mw - 1.75*\mw+ 0.35*\x*\uw - 0.35*\uw, 0.53*\uh + \vjf - 0.6 * \vjf) node [midway] {$\mathtt{VJ}_{\z,1}^{\false}$};
    \draw[thick,dashed] (\ls + \x * \uw - 1.75*\mw+ 0.35*\x*\uw - 0.35*\uw - 0.14, 0.53*\uh - 0.6 * \vjf - 0.14) rectangle (\ls + \x * \uw + \mw - 1.75*\mw+ 0.35*\x*\uw - 0.35*\uw + 0.14 , 0.53*\uh + \vjt + 0.6 * \vjt + 0.14);

    \draw[thick] (\ls + \x * \uw - 1.75*\mw + 1.5*\mw + 0.35*\x*\uw - 0.35*\uw, 0.53*\uh + 0.6 * \vjt) rectangle (\ls + \x * \uw + \mw - 1.75*\mw+ 1.5*\mw+ 0.35*\x*\uw - 0.35*\uw, 0.53*\uh + \vjt + 0.6 * \vjt) node [midway] {$\mathtt{VJ}_{\z,2}^{\true}$};
    \draw[thick] (\ls + \x * \uw - 1.75*\mw+ 1.5*\mw+ 0.35*\x*\uw - 0.35*\uw, 0.53*\uh - 0.6 * \vjf) rectangle (\ls + \x * \uw + \mw - 1.75*\mw+ 1.5*\mw + 0.35*\x*\uw - 0.35*\uw, 0.53*\uh + \vjf - 0.6 * \vjf) node [midway] {$\mathtt{VJ}_{\z,2}^{\false}$};
    \draw[thick,dashed] (\ls + \x * \uw - 1.75*\mw+ 1.5*\mw+ 0.35*\x*\uw - 0.35*\uw - 0.14, 0.53*\uh - 0.6 * \vjf - 0.14) rectangle (\ls + \x * \uw + \mw - 1.75*\mw+ 1.5*\mw+ 0.35*\x*\uw - 0.35*\uw + 0.14, 0.53*\uh + \vjt + 0.6 * \vjt + 0.14);

    \draw[thick] (\ls + \x * \uw - 1.75*\mw + 3*\mw + 0.35*\x*\uw - 0.35*\uw , 0.53*\uh + 0.6 * \vjt) rectangle (\ls + \x * \uw + \mw - 1.75*\mw+ 3*\mw + 0.35*\x*\uw - 0.35*\uw , 0.53*\uh + \vjt + 0.6 * \vjt) node [midway] {$\mathtt{VJ}_{\z,3}^{\true}$};
    \draw[thick] (\ls + \x * \uw - 1.75*\mw+ 3*\mw + 0.35*\x*\uw - 0.35*\uw, 0.53*\uh - 0.6 * \vjf) rectangle (\ls + \x * \uw + \mw - 1.75*\mw+ 3*\mw + 0.35*\x*\uw - 0.35*\uw, 0.53*\uh + \vjf - 0.6 * \vjf) node [midway] {$\mathtt{VJ}_{\z,3}^{\false}$};
    \draw[thick,dashed] (\ls + \x * \uw - 1.75*\mw+ 3*\mw + 0.35*\x*\uw - 0.35*\uw- 0.14, 0.53*\uh - 0.6 * \vjf- 0.14) rectangle (\ls + \x * \uw + \mw - 1.75*\mw+ 3*\mw + 0.35*\x*\uw - 0.35*\uw+ 0.14 , 0.53*\uh + \vjt + 0.6 * \vjt + 0.14);

    \draw[thick] (\ls + \x * \uw - 1.75*\mw + 4.5*\mw + 0.35*\x*\uw - 0.35*\uw , 0.53*\uh + 0.6 * \vjt) rectangle (\ls + \x * \uw + \mw - 1.75*\mw+ 4.5*\mw + 0.35*\x*\uw - 0.35*\uw , 0.53*\uh + \vjt + 0.6 * \vjt) node [midway] {$\mathtt{VJ}_{\z,4}^{\true}$};
    \draw[thick] (\ls + \x * \uw - 1.75*\mw+ 4.5*\mw + 0.35*\x*\uw - 0.35*\uw , 0.53*\uh - 0.6 * \vjf) rectangle (\ls + \x * \uw + \mw - 1.75*\mw+ 4.5*\mw + 0.35*\x*\uw - 0.35*\uw , 0.53*\uh + \vjf - 0.6 * \vjf) node [midway] {$\mathtt{VJ}_{\z,4}^{\false}$};
    \draw[thick,dashed] (\ls + \x * \uw - 1.75*\mw+ 4.5*\mw + 0.35*\x*\uw - 0.35*\uw- 0.14 , 0.53*\uh - 0.6 * \vjf- 0.14) rectangle (\ls + \x * \uw + \mw - 1.75*\mw+ 4.5*\mw + 0.35*\x*\uw - 0.35*\uw+ 0.14 , 0.53*\uh + \vjt + 0.6 * \vjt+ 0.14);
}

\draw[thick] (\ls + 0 * \uw + 0.5*\mw - 1.75*\mw+ 0.35*0*\uw - 0.35*\uw , 0.53*\uh + \vjt + 0.6 * \vjt + 0.14) to [out=80,in=260] (2.5*\mw +0*\uw ,\uh - 0.5);
\draw[thick] (\ls + 0 * \uw + 0.5*\mw - 1.75*\mw+ 0.35*0*\uw - 0.35*\uw , 0.53*\uh + \vjt + 0.6 * \vjt - 0.14 -\mh) to [out=290,in=100] (\ls + 0.5*\mw ,\mh + 0.2);
\draw[thick] (\ls + 0 * \uw + 0.5*\mw - 1.75*\mw+ 0.35*0*\uw - 0.35*\uw + 1.5*\mw, 0.53*\uh + \vjt + 0.6 * \vjt + 0.14) to [out=25,in=210] (2.5*\mw +2*\uw ,\uh - 0.5);
\draw[thick] (\ls + 0 * \uw + 0.5*\mw - 1.75*\mw+ 0.35*0*\uw - 0.35*\uw + 1.5*\mw, 0.53*\uh + \vjt + 0.6 * \vjt - 0.14 -\mh) to [out=280,in=100] (\ls + 0.5*\mw ,\mh + 0.2);
\draw[thick] (\ls + 0 * \uw + 0.5*\mw - 1.75*\mw+ 0.35*0*\uw - 0.35*\uw + 3.0*\mw, 0.53*\uh + \vjt + 0.6 * \vjt + 0.14) to [out=75,in=265] (0.5*\mw +1*\uw ,\uh - 0.5);
\draw[thick] (\ls + 0 * \uw + 0.5*\mw - 1.75*\mw+ 0.35*0*\uw - 0.35*\uw + 3.0*\mw, 0.53*\uh + \vjt + 0.6 * \vjt - 0.14 -\mh) to [out=270,in=90] (\ls + 1.5*\mw ,\mh + 0.2);
\draw[thick] (\ls + 0 * \uw + 0.5*\mw - 1.75*\mw+ 0.35*0*\uw - 0.35*\uw + 4.5*\mw, 0.53*\uh + \vjt + 0.6 * \vjt + 0.14) to [out=20,in=215] (0.5*\mw +3*\uw ,\uh - 0.5);
\draw[thick] (\ls + 0 * \uw + 0.5*\mw - 1.75*\mw+ 0.35*0*\uw - 0.35*\uw + 4.5*\mw, 0.53*\uh + \vjt + 0.6 * \vjt - 0.14 -\mh) to [out=270,in=90] (\ls + 1.5*\mw ,\mh + 0.2);

\draw[thick] (\ls + 1 * \uw + 0.5*\mw - 1.75*\mw+ 0.35*1*\uw - 0.35*\uw , 0.53*\uh + \vjt + 0.6 * \vjt + 0.14) to [out=90,in=270] (1.5*\mw +1*\uw ,\uh - 0.5);
\draw[thick] (\ls + 1 * \uw + 0.5*\mw - 1.75*\mw+ 0.35*1*\uw - 0.35*\uw , 0.53*\uh + \vjt + 0.6 * \vjt - 0.14 -\mh) to [out=290,in=100] (\ls + 0.5*\mw + 1*\uw,\mh + 0.2);
\draw[thick] (\ls + 1 * \uw + 0.5*\mw - 1.75*\mw+ 0.35*1*\uw - 0.35*\uw + 1.5*\mw, 0.53*\uh + \vjt + 0.6 * \vjt + 0.14) to [out=35,in=220] (2.5*\mw +3*\uw ,\uh - 0.5);
\draw[thick] (\ls + 1 * \uw + 0.5*\mw - 1.75*\mw+ 0.35*1*\uw - 0.35*\uw + 1.5*\mw, 0.53*\uh + \vjt + 0.6 * \vjt - 0.14 -\mh) to [out=280,in=100] (\ls + 0.5*\mw + 1*\uw,\mh + 0.2);
\draw[thick] (\ls + 1 * \uw + 0.5*\mw - 1.75*\mw+ 0.35*1*\uw - 0.35*\uw + 3.0*\mw, 0.53*\uh + \vjt + 0.6 * \vjt + 0.14) to [out=130,in=320] (0.5*\mw +0*\uw ,\uh - 0.5);
\draw[thick] (\ls + 1 * \uw + 0.5*\mw - 1.75*\mw+ 0.35*1*\uw - 0.35*\uw + 3.0*\mw, 0.53*\uh + \vjt + 0.6 * \vjt - 0.14 -\mh) to [out=270,in=90] (\ls + 1.5*\mw + 1*\uw,\mh + 0.2);
\draw[thick] (\ls + 1 * \uw + 0.5*\mw - 1.75*\mw+ 0.35*1*\uw - 0.35*\uw + 4.5*\mw, 0.53*\uh + \vjt + 0.6 * \vjt + 0.14) to [out=85,in=265] (1.5*\mw +2*\uw ,\uh - 0.5);
\draw[thick] (\ls + 1 * \uw + 0.5*\mw - 1.75*\mw+ 0.35*1*\uw - 0.35*\uw + 4.5*\mw, 0.53*\uh + \vjt + 0.6 * \vjt - 0.14 -\mh) to [out=270,in=90] (\ls + 1.5*\mw + 1*\uw,\mh + 0.2);

\draw[thick] (\ls + 2 * \uw + 0.5*\mw - 1.75*\mw+ 0.35*2*\uw - 0.35*\uw , 0.53*\uh + \vjt + 0.6 * \vjt + 0.14) to [out=115,in=285] (0.5*\mw +2*\uw ,\uh - 0.5);
\draw[thick] (\ls + 2 * \uw + 0.5*\mw - 1.75*\mw+ 0.35*2*\uw - 0.35*\uw , 0.53*\uh + \vjt + 0.6 * \vjt - 0.14 -\mh) to [out=270,in=90] (\ls + 0.5*\mw + 2*\uw,\mh + 0.2);
\draw[thick] (\ls + 2 * \uw + 0.5*\mw - 1.75*\mw+ 0.35*2*\uw - 0.35*\uw + 1.5*\mw, 0.53*\uh + \vjt + 0.6 * \vjt + 0.14) to [out=60,in=260] (1.5*\mw +3*\uw ,\uh - 0.5);
\draw[thick] (\ls + 2 * \uw + 0.5*\mw - 1.75*\mw+ 0.35*2*\uw - 0.35*\uw + 1.5*\mw, 0.53*\uh + \vjt + 0.6 * \vjt - 0.14 -\mh) to [out=270,in=90] (\ls + 0.5*\mw + 2*\uw,\mh + 0.2);
\draw[thick] (\ls + 2 * \uw + 0.5*\mw - 1.75*\mw+ 0.35*2*\uw - 0.35*\uw + 3.0*\mw, 0.53*\uh + \vjt + 0.6 * \vjt + 0.14) to [out=170,in=330] (1.5*\mw +0*\uw ,\uh - 0.5);
\draw[thick] (\ls + 2 * \uw + 0.5*\mw - 1.75*\mw+ 0.35*2*\uw - 0.35*\uw + 3.0*\mw, 0.53*\uh + \vjt + 0.6 * \vjt - 0.14 -\mh) to [out=260,in=80] (\ls + 1.5*\mw + 2*\uw,\mh + 0.2);
\draw[thick] (\ls + 2 * \uw + 0.5*\mw - 1.75*\mw+ 0.35*2*\uw - 0.35*\uw + 4.5*\mw, 0.53*\uh + \vjt + 0.6 * \vjt + 0.14) to [out=140,in=330] (2.5*\mw +1*\uw ,\uh - 0.5);
\draw[thick] (\ls + 2 * \uw + 0.5*\mw - 1.75*\mw+ 0.35*2*\uw - 0.35*\uw + 4.5*\mw, 0.53*\uh + \vjt + 0.6 * \vjt - 0.14 -\mh) to [out=250,in=80] (\ls + 1.5*\mw + 2*\uw,\mh + 0.2);

\end{tikzpicture}
}
\caption{The restricted assignment instance constructed for a minimal example instance. The hatched rectangles represent private loads, and the connecting lines indicate eligibility. If these lines end at a dashed rectangle, the eligibility information concerns everything within the rectangle. We chose a short notation for the jobs and machines writing, e.g., $\mathtt{VJ}_{j,t}^\circ$ instead of $\variablejob_{j,t}^\circ$.}
\label{fig:simple_example}
\end{figure}

\subparagraph{Example Refined Reduction.}

We consider the same formula as above for the refined reduction.
The values of $\kappa$ for the occurrences of the first two variables together with the resulting increasing lexicographical ordering is depicted in \cref{table:example_refinded}.
Furthermore, in \cref{fig:interval_example} the truth assignment as well as the bridge and highway gadget for the first two variables are depicted.
\begin{table}[h]
\centering
\caption{The occurrences of the first two values in the clauses and the resulting increasing lexicographical ordering of the occurrences.}
\begin{tabular}{r|llllllll}
$(j,t)$ & (1,1) & (1,2) & (1,3) & (1,4) & (2,1) & (2,2) & (2,3) & (2,4)\\\hline
$\kappa(j,t)$ & (1,3) & (3,3) & (2,1) & (4,1) & (2,2) & (4,3) & (1,1) & (3,2)\\\hline
Ordering $(j,t)$ & (2,3) & (1,1) & (1,3) & (2,1) & (2,4) & (1,2) & (1,4) & (2,2)
\end{tabular}
\label{table:example_refinded}
\end{table}

\begin{sidewaysfigure}
\centering

\scalebox{0.75}{
\begin{tikzpicture}[scale = 1.0]
\pgfmathsetmacro{\mw}{1.3}
\pgfmathsetmacro{\bbh}{1.0}
\pgfmathsetmacro{\bsh}{0.5}
\pgfmathsetmacro{\ih}{0.5}
\pgfmathsetmacro{\ld}{-0.27}

\foreach \x in {0, 2, 6, 10, 12, 20}{
    \draw[very thick] (\x*\mw,\bbh) -- (\x*\mw,0);
}

\foreach \x in {0,...,19}{
    \draw[very thick] (\x*\mw ,0) -- (\x*\mw + \mw,0);
    \draw[very thick] (\x*\mw + \mw,\bsh) -- (\x*\mw + \mw,0);
}
\draw[very thick, dashed] (20*\mw ,0) -- (21*\mw,0);

\draw[decorate,decoration={brace,amplitude=8pt,raise=1pt,mirror},yshift=-0pt] (0*\mw,0) -- (2*\mw,0) node [midway,yshift=-18pt]{$\tblock_{1}$};
\draw[decorate,decoration={brace,amplitude=8pt,raise=1pt,mirror},yshift=-0pt] (2*\mw,0) -- (6*\mw,0) node [midway,yshift=-18pt]{$\sblock_{1}$};
\draw[decorate,decoration={brace,amplitude=8pt,raise=1pt,mirror},yshift=-0pt] (6*\mw,0) -- (10*\mw,0) node [midway,yshift=-18pt]{$\pblock_{2}$};
\draw[decorate,decoration={brace,amplitude=8pt,raise=1pt,mirror},yshift=-0pt] (10*\mw,0) -- (12*\mw,0) node [midway,yshift=-18pt]{$\tblock_{2}$};
\draw[decorate,decoration={brace,amplitude=8pt,raise=1pt,mirror},yshift=-0pt] (12*\mw,0) -- (20*\mw,0) node [midway,yshift=-18pt]{$\sblock_{2}$};


\foreach \x/\y in  {0.5/{$\mathtt{TM}_{1,1}$}, 
                    1.5/{$\mathtt{TM}_{1,2}$}, 
                    2.5/{$\mathtt{GM}_{1,4}$},
                    3.5/{$\mathtt{GM}_{1,2}$},
                    4.5/{$\mathtt{GM}_{1,3}$},
                    5.5/{$\mathtt{GM}_{1,1}$},
                    6.5/{$\mathtt{BMI}_{1,1,2}$},
                    7.5/{$\mathtt{BMI}_{1,3,2}$},
                    8.5/{$\mathtt{BMI}_{1,2,2}$},
                    9.5/{$\mathtt{BMI}_{1,4,2}$},
                    10.5/{$\mathtt{TM}_{2,1}$},
                    11.5/{$\mathtt{TM}_{2,2}$},
                    12.5/{$\mathtt{GM}_{2,2}$},
                    13.5/{$\mathtt{BMO}_{1,4,2}$},
                    14.5/{$\mathtt{BMO}_{1,2,2}$},
                    15.5/{$\mathtt{GM}_{2,4}$},
                    16.5/{$\mathtt{GM}_{2,1}$},
                    17.5/{$\mathtt{BMO}_{1,3,2}$},
                    18.5/{$\mathtt{BMO}_{1,1,2}$},
                    19.5/{$\mathtt{GM}_{2,3}$}
                    }{

\node at (\x*\mw, 0.2) {\y};
}


\draw[very thick, color = mlightblue] (0*\mw,\bbh + 1*\ih) -- (2*\mw - 0.07,\bbh + 1*\ih) node [midway, yshift = \ld cm, text = black] { $\truthjob_{1}^{\circ}$};
\draw[very thick, color = mlightblue] (10*\mw,\bbh + 1*\ih) -- (12*\mw - 0.07,\bbh + 1*\ih) node [midway, yshift = \ld cm, text = black] { $\truthjob_{2}^{\circ}$};


\draw[very thick, color = morange] (0*\mw,\bbh + 3*\ih) -- (6*\mw - 0.07,\bbh + 3*\ih) node [midway, yshift = \ld cm, text = black] {$\variablejob_{1,1}^{\circ}$};
\draw[very thick, color = morange] (1*\mw,\bbh + 4*\ih) -- (5*\mw - 0.07,\bbh + 4*\ih) node [midway, yshift = \ld cm, text = black] {$\variablejob_{1,3}^{\circ}$};
\draw[very thick, color = morange] (0*\mw,\bbh + 7*\ih) -- (4*\mw - 0.07,\bbh + 7*\ih) node [midway, yshift = \ld cm, text = black] {$\variablejob_{1,2}^{\circ}$};
\draw[very thick, color = morange] (1*\mw,\bbh + 8*\ih) -- (3*\mw - 0.07,\bbh + 8*\ih) node [midway, yshift = \ld cm, text = black] {$\variablejob_{1,4}^{\circ}$};


\draw[very thick, color = mdarkblue] (5*\mw + 0.07,\bbh + 4*\ih) -- (7*\mw - 0.07,\bbh + 4*\ih) node [midway, yshift = \ld cm, text = black] {$\highwayjob_{1,1,1}^{\circ}$};
\draw[very thick, color = mdarkblue] (4*\mw + 0.07,\bbh + 5*\ih) -- (8*\mw - 0.07,\bbh + 5*\ih) node [midway, yshift = \ld cm, text = black] {$\highwayjob_{1,3,1}^{\circ}$};
\draw[very thick, color = mdarkblue] (3*\mw + 0.07,\bbh + 8*\ih) -- (9*\mw - 0.07,\bbh + 8*\ih) node [midway, yshift = \ld cm, text = black] { $\highwayjob_{1,2,1}^{\circ}$};
\draw[very thick, color = mdarkblue] (2*\mw + 0.07,\bbh + 9*\ih) -- (10*\mw - 0.07,\bbh + 9*\ih) node [midway, yshift = \ld cm, text = black] { $\highwayjob_{1,4,1}^{\circ}$};


\draw[very thick, color = morange] (11*\mw + 0.07,\bbh + 2*\ih) -- (20*\mw - 0.07,\bbh + 2*\ih) node [midway, yshift = \ld cm, text = black] { $\variablejob_{2,3}^{\circ}$};
\draw[very thick, color = mred] (6*\mw + 0.07,\bbh + 3*\ih) -- (19*\mw - 0.07,\bbh + 3*\ih) node [midway, yshift = \ld cm, text = black] { $\bridgejob_{1,1,2}^{\circ}$};
\draw[very thick, color = mred] (7*\mw + 0.07,\bbh + 4*\ih) -- (18*\mw - 0.07,\bbh + 4*\ih) node [midway, yshift = \ld cm, text = black] { $\bridgejob_{1,3,2}^{\circ}$};
\draw[very thick, color = morange] (10*\mw + 0.07,\bbh + 5*\ih) -- (17*\mw - 0.07,\bbh + 5*\ih) node [midway, yshift = \ld cm, text = black] { $\variablejob_{2,1}^{\circ}$};
\draw[very thick, color = morange] (11*\mw + 0.07,\bbh + 6*\ih) -- (16*\mw - 0.07,\bbh + 6*\ih) node [midway, yshift = \ld cm, text = black] { $\variablejob_{2,4}^{\circ}$};
\draw[very thick, color = mred] (8*\mw + 0.07,\bbh + 7*\ih) -- (15*\mw - 0.07,\bbh + 7*\ih) node [midway, yshift = \ld cm, text = black] { $\bridgejob_{1,2,2}^{\circ}$};
\draw[very thick, color = mred] (9*\mw + 0.07,\bbh + 8*\ih) -- (14*\mw - 0.07,\bbh + 8*\ih) node [midway, yshift = \ld cm, text = black] { $\bridgejob_{1,4,2}^{\circ}$};
\draw[very thick, color = morange] (10*\mw + 0.07,\bbh + 9*\ih) -- (13*\mw - 0.07,\bbh + 9*\ih) node [midway, yshift = \ld cm, text = black] { $\variablejob_{2,2}^{\circ}$};


\draw[very thick, color = mdarkblue] (19*\mw + 0.07,\bbh + 3*\ih) -- (20*\mw - 0.07,\bbh + 3*\ih) node [at end, yshift = \ld cm, xshift = 0.1cm , text = black] { $\highwayjob_{2,3,2}^{\circ}$};
\draw[very thick, color = mdarkblue] (18*\mw + 0.07,\bbh + 4*\ih) -- (20*\mw - 0.07,\bbh + 4*\ih) node [at end, yshift = \ld cm, xshift = 0.1cm , text = black] { $\highwayjob_{1,1,2}^{\circ}$};
\draw[very thick, color = mdarkblue] (17*\mw + 0.07,\bbh + 5*\ih) -- (20*\mw - 0.07,\bbh + 5*\ih) node [at end, yshift = \ld cm, xshift = 0.1cm , text = black] { $\highwayjob_{1,3,2}^{\circ}$};
\draw[very thick, color = mdarkblue] (16*\mw + 0.07,\bbh + 6*\ih) -- (20*\mw - 0.07,\bbh + 6*\ih) node [at end, yshift = \ld cm, xshift = 0.1cm , text = black] { $\highwayjob_{2,1,2}^{\circ}$};
\draw[very thick, color = mdarkblue] (15*\mw + 0.07,\bbh + 7*\ih) -- (20*\mw - 0.07,\bbh + 7*\ih) node [at end, yshift = \ld cm, xshift = 0.1cm , text = black] { $\highwayjob_{2,4,2}^{\circ}$};
\draw[very thick, color = mdarkblue] (14*\mw + 0.07,\bbh + 8*\ih) -- (20*\mw - 0.07,\bbh + 8*\ih) node [at end, yshift = \ld cm, xshift = 0.1cm , text = black] { $\highwayjob_{1,2,2}^{\circ}$};
\draw[very thick, color = mdarkblue] (13*\mw + 0.07,\bbh + 9*\ih) -- (20*\mw - 0.07,\bbh + 9*\ih) node [at end, yshift = \ld cm, xshift = 0.1cm , text = black] { $\highwayjob_{1,4,2}^{\circ}$};
\draw[very thick, color = mdarkblue] (12*\mw + 0.07,\bbh + 10*\ih) -- (20*\mw - 0.07,\bbh + 10*\ih) node [at end, yshift = \ld cm, xshift = 0.1cm , text = black] { $\highwayjob_{2,2,2}^{\circ}$};

\foreach \x in {1,...,8}{
    \draw[very thick,dashed, color = mdarkblue] (20*\mw - 0.07,\bbh + 2*\ih + \x*\ih) -- (21*\mw,\bbh + 2*\ih + \x*\ih);
}
\end{tikzpicture}
}
\caption{The truth assignment as well as the bridge and highway gadget for the first two variables of an example instance. The colored lines mark the intervals of eligible machines for the respective jobs. In the picture, we use a more compact notation for the machines, and write, e.g., $\mathtt{TM}_{1,1}$ instead of $\truthmach_{1,1}$.}
\label{fig:interval_example}
\end{sidewaysfigure}
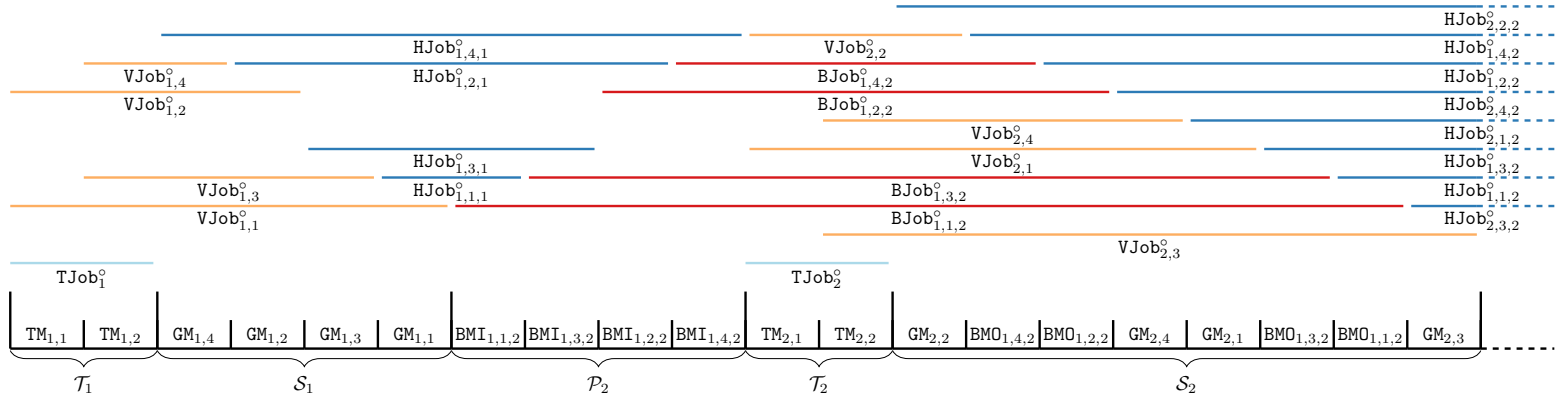

\section{Reduction by Bhaskara et al. \cite{BhaskaraKTW13}}\label{appendix_reduction}

We state the reduction by Bhaskara et al. \cite{BhaskaraKTW13} from \threedm to \lrs{4} and show that it is not sound.
We remark that the construction can be repaired with not too much effort using processing times similar to the ones presented in the reduction for \rar{3} in \cref{sec:resource}.  

Given an instance $(A,B,C,E)$ of \threedm and some $\eps>0$, let $n=|A|$, $N=n/\eps$ and $E(x) = \sett{e\in E}{x \in e}$ for each $x\in A\cup B\cup C$.
We identify the set of machines $\machs$ with the set of triplets $E$, i.e., $\machs = E$.
The speed vector of $e = \set{a_i,b_j,c_k} \in \machs$ is given by $(N^i,N^j,N^k,1)$.
Furthermore, for each $x\in A \cup B \cup C$, we introduce one element and $|E(x)| - 1 $ dummy jobs.
The size vectors of the jobs are presented in the following table:
\begin{center}
\begin{tabular}{l|ll}
 & Element & Dummy \\\midrule
$a_i\in A$ & $(\eps N^{-i},0,0,1)$ & $(\eps N^{-i},0,0,0.8)$ \\
$b_j\in B$ & $(0,\eps N^{-j},0,1)$ & $(0,\eps N^{-j},0,0.9)$\\
$c_k\in C$ & $(0,0,\eps N^{-k},1)$ & $(0,0,\eps N^{-k},1.3)$\\
\end{tabular}
\end{center}
In \cite{BhaskaraKTW13} the authors show that there is a schedule with makespan at most $3 + 3\eps$ if there is a 3D-matching.
Furthermore, two Lemmata (Lemma 2.1 and 2.2 in \cite{BhaskaraKTW13}) are used to show that the existence of a schedule with makespan at most $3.09 + 3\eps $ implies the existence of a 3D-matching. 
We present a counter example.
Let $n = 3$ and:
\[E = \set[\big]{\set{a_1, b_1, c_2}, \set{a_2, b_2, c_2}, \set{a_3, b_3, c_3}, \set{a_3, b_2, c_3} , \set{a_3, b_3, c_1}}\]
Hence, we have:
\begin{center}
\begin{tabular}{l|lllllllll}
$x$     & $a_1$ & $a_2$ & $a_3$ & $b_1$ & $b_2$ & $b_3$ & $c_1$ & $c_2$ & $c_3$ \\\midrule
$|E(x)|$& 1     & 1     & 3     & 1     & 2     & 2     & 1     & 2     & 2
\end{tabular}
\end{center}

In order to match $a_1$ and $a_2$, a 3D-matching would have to contain the first two triplets matching $c_2$ twice.
Hence, there is no 3D-matching for this instance.

On the other hand, we define a schedule with makespan $3 + 3\eps\leq 3.09 + 3\eps$.
\begin{center}
\begin{tabularx}{0.8\textwidth}{X|XXXXX}
Machine & $\set{a_1, b_1, c_2}$ & $\set{a_2, b_2, c_2}$ & $\set{a_3, b_3, c_3}$ & $\set{a_3, b_2, c_3}$ & $\set{a_3, b_3, c_1}$ \\\midrule
Jobs & $a_1$, $a_2$, $b_1$ (element) & $a_3$, $b_2$, $c_2$ (dummy) & $a_3$, $b_3$, $c_3$ (dummy) & $a_3$, $b_2$, $c_3$ (element) & $b_3$, $c_1$, $c_2$ (element) \\\midrule
Load & $3 + (2 + 1/N)\eps$ & $3 + (2 + 1/N)\eps$ & $3 + 3\eps$& $3 + 3\eps$ & $3 + (2 + 1/N)\eps$
\end{tabularx}
\end{center}

\end{document}